\documentclass[letterpaper,12pt]{article}

\usepackage{setspace}
\usepackage[utf8]{inputenc}
\usepackage[margin=1in, includefoot, footskip=0.5in]{geometry}
\usepackage{natbib}

\usepackage{fvextra}

\usepackage{amsmath, amssymb, amsthm}
\usepackage{graphicx}
\usepackage[dvipsnames]{xcolor}
\usepackage{tikz}
\usepackage{bm}
\usepackage{proof}
\usepackage{amsfonts}
\usepackage{tcolorbox}
\usepackage{ascmac}
\usepackage{fancybox}
\usepackage[all]{xy}
\usepackage{varwidth}
\usepackage{etoolbox}
\usepackage{siunitx}
\usepackage{physics}
\usepackage{float}
\usepackage{bbm}
\usepackage{threeparttable}
\usepackage{rotating} 

\usepackage{graphicx} 

\usepackage{booktabs}

\usepackage{float}
\AtBeginDocument{\RenewCommandCopy\qty\SI}  

\usepackage{apalike}
\usepackage{url}
\usepackage{hyperref}
\hypersetup{
  colorlinks=true,
  linkcolor=blue,
  filecolor=magenta,
  urlcolor=cyan,
  pdftitle={Overleaf Example},
  pdfpagemode=FullScreen,
}
\hypersetup{
    colorlinks,
    citecolor=black,
    filecolor=black,
    linkcolor=black,
    urlcolor=black
}

\usepackage{enumerate}
\usepackage[colorinlistoftodos]{todonotes}  

\usepackage[edges]{forest}
\usepackage{tree}
\usepackage{ulem}

\DeclareMathOperator{\Cov}{Cov}

\newcommand{\Reg}{\mathcal{R}}

\newtheorem{theorem}{Theorem}[section]
\newtheorem{proposition}{Proposition}[section]

\newtheorem{assumption}{Assumption}[section]

\theoremstyle{definition}
\AtEndEnvironment{remark}{\qed}
\newtheorem{remark}{Remark}[section]

\DeclareMathOperator*{\argmin}{arg\,min}

\renewcommand{\emph}[1]{#1}

\newcommand{\indep}{\perp \!\!\! \perp}
\newcommand{\att}{{\theta_{\text{ATT}}}}

\newcommand{\ATT}{{\Theta_{\text{ATT}}}}
\newcommand{\attdid}{{\theta_{\text{ATT}}^{\text{DID}}}}
\newcommand{\attm}{{\theta_{\text{ATT}}^{\text{M}}}}
\newcommand{\attdidm}{{\theta_{\text{ATT}}^{\text{DIDM}}}}

\newcommand{\E}{{\text{E}}}

\begin{document}
\allowdisplaybreaks

\title{\LARGE 
Choosing A Headline Estimand from Matching, DID, and Hybrid Designs: A Minimax-Regret Approach%
\thanks{\setlength{\baselineskip}{4mm} 
We are grateful to Isaiah Andrews and Jesse Shapiro for repeated feedback and guidance for this project. We also thank Jeff Smith and Petra Todd for repeated guidance on replicating their papers. We benefitted from useful comments by Chris Campos, Raj Chetty, Jamie Fogel, Hidehiko Ichimura, Sho Miyaji, Jon Roth, Matt Staiger, Davide Viviano, and Oscar Volpe. Angela Arora, Yannick Feussi, and Stefan Nicov provided excellent research assistance. All remaining errors are ours. Sasaki gratefully acknowledges the generous research support provided by Brian and Charlotte Grove.\smallskip}}
\author{
Yechan Park\thanks{\setlength{\baselineskip}{4mm}Department of Economics, Harvard University Email: \texttt{yechanpark@fas.harvard.edu}\smallskip}
\and
Yuya Sasaki\thanks{\setlength{\baselineskip}{4mm}Brian and Charlotte Grove Chair and Professor of Economics. Department of Economics, Vanderbilt University, VU Station B \#351819, 2301 Vanderbilt Place, Nashville, TN 37235-1819 Email: \texttt{yuya.sasaki@vanderbilt.edu}}
}
\date{}
\maketitle
\begin{abstract}\setlength{\baselineskip}{6mm}


Researchers using panel data to estimate causal effects routinely choose among three approaches to using past outcomes: difference-in-differences (DID), conditioning on lagged outcomes (matching, M), and a hybrid that does both (DIDM). The corresponding identifying assumptions are non-nested, leaving little guidance on which to report. We give conditions under which the corresponding estimands are ordered, with DIDM bracketed between matching and DID. This makes DIDM the minimax-regret choice among the three under a broad class of loss functions. We recommend reporting DIDM as the headline estimate, with matching and DID as bounds. We illustrate in applications.

\medskip\noindent
{\bf Keywords:} difference in differences, matching, educational program, job training program, minimax regret\color{black}.

\end{abstract}

\newpage

\section{Introduction}
\label{sec:introduction}

Many policy evaluations in economics rely on non-experimental panel data. Large policy changes, such as mass layoffs, minimum wage reforms, the rollout of job-training programs, and major education policies, are typically not assigned at random. Instead, researchers observe repeated outcomes for the same units and exploit quasi-experimental variation in policy timing or exposure.
The growing availability of large-scale and administrative data, together with a broader shift toward identification-focused empirical research, has made non-experimental panel and repeated-cross-section settings increasingly common in applied economics \citep{currie2020technology,goldsmithpinkham2024tracking}.

When panel data are available, a natural and widely used way to address
confounding is to exploit the lagged outcome itself.\label{sec:lagged_outcome} In
job-training applications, recent earnings histories strongly predict both participation
and future untreated earnings \citep{HeckmanSmith1999PreProgDip}; in education settings, prior test
scores play the same role \citep{chetty2014measuring1}. 
The lagged outcome can be used in
at least three distinct ways. A first approach is difference-in-differences (DID), which
compares changes rather than levels \citep{lalonde1986evaluating,angrist2009mostly,bertrand2004did}.
A second approach conditions on lagged outcomes, for example via matching or flexible
regression adjustment, so that identification rests on selection on lagged outcomes. We
refer to this class as matching-type (M) estimands
\citep{dehejia1999causal,dehejia2002propensity,heckman1997matching}.
A third, hybrid strategy combines the two ideas: it conditions on lagged outcomes and then
differences over time, yielding a difference-in-differences-matching (DIDM) estimand
\citep{heckman1998characterizing,abadie2005semiparametric,smith2005does,chetty2026creating}.\footnote{We use the term \emph{difference-in-differences matching} (DIDM) following \citet{heckman1997matching, smith2005does}; recent applied work uses cognate labels, e.g., \citet{chetty2026creating} term their design a ``matched difference-in-differences.''} The three designs
thus differ only in how they use the lagged outcome: DID differences it out, M conditions
on its level, and DIDM does both. The identifying assumptions associated with these three
designs are mutually non-nested restrictions on the joint distribution of potential
outcomes and treatment.\footnote{In particular, the assumptions under which DID is valid do
not imply those under which M is valid (and vice versa), and the assumptions for DIDM do
not reduce to either DID or M. We illustrate this fact in Appendix~\ref{sec:non_nested}.}


These three designs account for a substantial share of published work using panel or repeated cross-section data to estimate causal effects of time-varying treatments. To gauge how common they are, we conducted a census of \textit{American Economic Review} articles from 2020--2024 that use panel or repeated cross-section data.\footnote{We provide a detailed description of our methodology in Appendix~\ref{app:aer_census}.} We find 77 studies employing panel-data identification strategies, and more than 80\% use at least one of DID, M, or DIDM. About 70\% implement some form of DID, roughly a third use matching or flexible conditioning on lagged outcomes, and just over 10\% employ a hybrid DIDM design. More than a quarter of the in-scope papers employ more than one of these approaches within the same study.\footnote{The remaining papers primarily rely on other non-experimental strategies, such as regression adjustment under selection-on-covariates, dynamic panel instrumental variables, or synthetic control methods.} 

There is, however, little formal guidance on how to choose among these alternative designs when experimental benchmarks are unavailable. In principle, applied researchers should select the estimand whose identifying assumptions are most credible in the application at hand. In practice, economic theory rarely delivers a single preferred specification. Dynamic models emphasize persistence and selection, making it natural to incorporate information on lagged outcomes, but they are typically too coarse to determine precisely how such information should enter the empirical specification.\footnote{A large literature on training programs, job search, and human capital emphasizes that both transitory shocks and persistent differences in ability shape participation decisions and earnings dynamics; see classic discussions of the ``Ashenfelter dip'' in training evaluations \citep{ashenfelter1978estimating} and more recent work on dynamic selection and labor market histories \citep[e.g.,][]{HeckmanSmith1999PreProgDip,McCallSmithWunsch2016GovVocEd}.}
Reflecting this ambiguity, closely related empirical settings often make different choices among M, DID, and DIDM. Among job-displacement, minimum-wage, and value-added papers using similar administrative data and institutional environments, some studies rely on fixed-effects DID or event-study designs, while others explicitly condition on rich pre-treatment outcome histories or embed propensity-score matching and reweighting within DID-style estimators \citep[e.g.,][]{jacobson1993earnings,cengiz2019effect,rothstein2010teacher}.\footnote{For displaced workers, \citet{jacobson1993earnings} estimate long-run earnings losses using worker fixed-effects event-study DID in unemployment-insurance records, while \citet{couch2010earnings,hyslop2017impacts,illing2024gender,lachowska2020sources} augment DID-type designs with propensity-score matching, reweighting, or controls for averages of pre-displacement earnings and hours. For minimum wages, health, and value-added, \citet{cengiz2019effect} implement standard fixed-effects event-study DID, whereas \citet{kaminska2015effects,hafner2022minimum,lenhart2017uk,lenhart2017oecd,rothstein2010teacher,chetty2014measuring1,chetty2014measuring2,angrist2017leveraging,angrist2022methods} use lagged-dependent-variable or matching-type estimators, propensity-score DID designs (whose scores often include lagged outcomes), or hybrids that combine lagged outcomes with gains-style differencing.} Even within a single paper, researchers often report multiple specifications, such as fixed-effects DID, lagged-outcome or matching estimators, and hybrid DIDM designs, and then compare them informally \citep[e.g.,][]{couch2010earnings,illing2024gender}. We read this pattern as evidence that multiple strategies are considered credible for the same setting, and that a clear criterion for choosing among them is lacking.\footnote{For instance, \citet{couch2010earnings} directly compare fixed-effects, random-growth, ATT, and ``differenced ATT'' estimators for displaced workers; \citet{hafner2022minimum} present both fixed-effects DID and propensity-score DIDM estimators that explicitly match on lagged self-rated health before the German minimum wage reform; and \citet{kaminska2015effects,hyslop2017impacts,illing2024gender} juxtapose regression-adjusted DID with matching- or reweighting-based DID in related administrative settings.}

In this paper, we provide formal criteria for choosing among M, DID, and DIDM in such environments. We consider a researcher who must commit to a single ``headline'' estimate but is uncertain about which of the three identifying assumptions (unconditional parallel trends for DID, selection on lagged outcomes for M, or conditional parallel trends for DIDM) is closest to the truth. 
Our main result shows that, under two economically interpretable conditions, the hybrid DIDM estimand is minimax-regret optimal among M, DID, and DIDM: it incurs the smallest worst-case loss, across the three possible identifying assumptions, relative to the estimand a researcher would have chosen had she known which assumption was correct. The two conditions are negative selection into treatment and stable (non-explosive) untreated outcome dynamics, which are common in labor and public economics applications. For a researcher seeking to limit the largest possible misspecification error without insisting that any one assumption holds exactly, committing to DIDM minimizes the worst-case deviation from the true average treatment effect on the treated.

These choices can matter substantively. In our empirical analysis in Section~\ref{sec:empirical_double_bracketing}, using canonical job-training settings such as NSW and JTPA \citep{lalonde1986evaluating,dehejia2002propensity,smith2005does,heckman1998characterizing}, as well as the education application in \citet{athey2025combining}, we show that switching among M, DID, and DIDM can materially change the estimated effects and, in some cases, even reverse the sign of the point estimates. 
A similar issue arises in recent work on banking deregulation. In a critique of \citet{boissel2022dividend}, \citet{bach2023dividend} identify the choice between DID and DIDM as one of two central points of contention: they report that, holding the data and most specification details fixed, replacing the hybrid DIDM design with a standard DID design renders the estimated positive treatment effect statistically insignificant.

Our minimax result is obtained through an intermediate analytical step that we believe is itself informative for applied work. 
We show that, under (i) negative selection into treatment (so that treated units would, on average, have had lower untreated outcomes than controls) and (ii) stable, non-explosive untreated outcome dynamics, the population estimands satisfy the same ordering:
M $\le$ DIDM $\le$ DID.
This result generalizes the insight in \citet{angrist2009mostly} that the DID and lagged-dependent-variable estimands lie on opposite sides of the true effect, so that the truth is bracketed between them, by showing that the hybrid DIDM estimand lies systematically between the two endpoints. Across multiple program-evaluation settings spanning job training and educational interventions, and using four benchmark datasets \citep{lalonde1986evaluating,heckman1998characterizing,smith2005does,chetty2014measuring1,athey2025combining}, we observe an empirical pattern consistent with our theory: matching-based estimates tend to be lower, DID estimates higher, and hybrid DIDM estimates lie in between.

To complement the estimand-level minimax-regret result, we consider a calibrated Monte Carlo design based on the NSW data from \citet{lalonde1986evaluating}. The decision-theoretic question requires comparing the three candidate procedures across data-generating processes under which different identifying assumptions hold. We therefore construct three such environments, one each favoring M, DIDM, and DID, designed to remain observationally similar in the sense of matching key moments and cross-moment relationships in the data. This makes it plausibly difficult for a researcher to know which design is preferred from the observed data alone. The resulting \(3\times 3\) regret matrix then provides a compact summary of how each estimator performs across the three environments, and makes the minimax logic tangible: while no single procedure is pointwise best in every world, DIDM minimizes worst-case regret, that is, the largest amount by which it underperforms the best procedure for the realized world.

{\color{black}The framework has two main implications for applied work. First, it yields a principled default choice among common panel-data designs. When researchers must report a single headline estimate (for policy communication, executive summaries, or meta-analysis), the hybrid DIDM design provides a natural default because it minimizes worst-case regret across the three leading approaches. 
Second, it offers a structured way to interpret differences when multiple estimates are reported.
}

\subsection{Relation to the Literature}\label{sec:literature}

This paper relates to three strands of literature.

First, it relates to the classical literature on nonexperimental evaluation following \citet{lalonde1986evaluating}. Foundational contributions such as \citet{heckman1998characterizing,heckman1998_2matching}, \citet{dehejia1999causal,dehejia2002propensity}, and \citet{smith2005does} study which nonexperimental methods best replicate experimental benchmarks in job-training settings. That literature compares specifications that, in our language, map naturally into M, DID, and DIDM-type estimands. Its main organizing question is typically which estimator has the smallest bias, often measured in absolute value, relative to the experimental ATT. Our paper asks a different question: how a researcher should choose among these competing observational estimands when the underlying identifying assumptions are mutually non-nested and no benchmark is available. The emphasis therefore shifts from ex post estimator comparison to ex ante design choice under model uncertainty.

This paper is also related to work such as \citet{chabe2017should,daw2018matching}, which studies matching- and DID-based estimators in specific simulated or parametric environments. In particular, \citet{chabe2017should} analyzes DID combined with conditioning on pre-treatment outcomes in a model with permanent and transitory confounders and in simulations calibrated to job-training settings, while \citet{daw2018matching} uses Monte Carlo simulations to study regression-to-the-mean bias from matching on pre-period variables in DID designs. Relative to that literature, our contribution is twofold. First, our main comparative result is analytical and nonparametric.
Second, whereas that literature studies performance within particular simulated environments, our contribution is decision-theoretic: we ask which headline estimand is safest when the researcher is uncertain which identifying restriction is closest to the truth. Under this model uncertainty, DIDM is minimax-regret optimal among the three leading panel-data estimands under a broad class of loss functions.

Second, our theory builds on the literature on bracketing between lagged-outcome and fixed-effects or DID estimands. In linear panel models, \citet[][Section 5]{angrist2009mostly} show that, in linear panel models, the lagged-dependent-variable and fixed-effects estimands bound the true effect from opposite sides, so that the truth lies between them. \citet{ding2019bracketing} extend that insight to a nonparametric framework. Our paper contributes to this literature by introducing a third object, the hybrid DIDM estimand, and showing that under negative selection and stable untreated dynamics,
\(
\theta^M_{ATT} \;\le\; \theta^{DIDM}_{ATT} \;\le\; \theta^{DID}_{ATT}.
\)
DIDM reduces to M in the special case $s=0$, where the matching variable coincides with the differencing baseline ($Y_{-s}=Y_0$) and the $Y_0$ terms cancel. Our framework thus nests the familiar LDV-versus-FE bracketing as the case in which the middle object coincides with one endpoint, while extending the logic to the matched-DID designs (DIDM) that are common in practice.

Finally, the paper relates to the modern literature on event studies, DID, and panel matching. Recent surveys such as \citet{roth2023review} and \citet{deChaisemartin2023two} emphasize both the centrality of parallel-trends assumptions and the unresolved role of lagged outcomes and matching-type adjustments in DID practice. A related empirical literature uses lagged outcomes and other pre-treatment histories either to construct M-type estimands, as in \citet{acemoglu2019democracy}, or to construct hybrid DIDM-type designs, as in \citet{deChaisemartin2020two}, \citet{dube2023local}, and \citet{imai2023matching}. Our contribution to this literature is to place M, DIDM, and DID in a common framework, characterize when they are systematically ordered, and show how that ordering should guide design choice and interpretation in panel-data applications.


\section{Setup and Estimands}\label{sec:setup}\label{sec:simple}

Let $W$ denote treatment-group status. Units with $W=1$ receive treatment between periods $t=0$ and $t=1$, whereas units with $W=0$ remain untreated throughout.\footnote{For expositional clarity, this section focuses on a two-group, single-treatment-timing setup. Appendix~\ref{sec:general} extends the framework to more general matching variables and to event-study or staggered-adoption settings by redefining $(\widetilde Y_0,\widetilde Y_1,W,X)$ appropriately; see especially Section~\ref{sec:examples} and Sections~\ref{sec:event_did}--\ref{sec:event_didm}. In such settings, cohort- and horizon-specific effects can be written as analogous DID- or DIDM-style contrasts and, when desired, aggregated by averaging the relevant comparisons across groups and horizons.} Hence, all units are untreated for $t \leq 0$, and only treated units are exposed to treatment for $t \geq 1$. 
Let $Y_t(w)$ denote the potential outcome at time $t$ under treatment status $w \in \{0,1\}$. The parameter of interest is the average treatment effect on the treated (ATT) at $t=1$, defined as
\[
\att \equiv E\!\left[ Y_1(1) - Y_1(0) \mid W = 1 \right].
\]

We write $Y_t$ for the observed outcome at time $t$. Throughout, we focus on the lagged untreated outcome $Y_{-s} = Y_{-s}(0)$, where $s \geq 0$, as the key matching variable. This choice reflects the emphasis in the evaluation literature on lagged outcomes as particularly informative predictors of both selection into treatment and the dynamics of untreated outcomes.\footnote{
Lagged outcomes play a central role in the empirical literatures motivating this paper. In job-training applications, recent earnings histories are highly predictive of both participation in treatment and future untreated earnings. In education settings, prior test scores play an analogous role. 
Our framework accommodates an arbitrary lag order $s \geq 0$. When $s=0$, DIDM reduces to M, thereby nesting the classical LDV-versus-FE bracketing logic of \citet{angrist2009mostly} as a special case. When $s>0$, it encompasses the matched difference-in-differences strategies of \citet{heckman1998characterizing} that condition on a lagged outcome measured $s>0$ periods before treatment, sometimes termed symmetric difference-in-differences.}
Section~\ref{sec:general} extends the analysis to a general vector-valued matching variable $X$.

\subsection{Three Candidate Estimands}

Following \citet{heckman1998characterizing}, consider the following three observational estimands for $\att$:
\begin{align*}
\attm 
&\equiv E[Y_1\mid W=1]
    - E\!\left[ E[Y_1\mid W=0, Y_{-s}] \mid W=1 \right], \\
\attdid
&\equiv E[Y_1-Y_0\mid W=1]
    - E[Y_1-Y_0\mid W=0], 
\qquad\text{and}\\
\attdidm
&\equiv E\!\left[
      E[Y_1-Y_0\mid Y_{-s},W=1]
      -E[Y_1-Y_0\mid Y_{-s},W=0]
      \,\middle|\, W=1
    \right].
\end{align*}
We refer to $\attm$, $\attdid$, and $\attdidm$ as the \emph{matching} (M), \emph{difference-in-differences} (DID), and \emph{difference-in-differences matching} (DIDM) estimands, respectively.

The three estimands differ only in how untreated outcomes are used to construct the missing counterfactual for treated units. The M estimand adjusts solely for selection on lagged outcomes, the DID estimand accounts only for average untreated outcome growth, and the DIDM estimand combines both approaches by conditioning DID-style comparisons on lagged outcomes.

\subsection{Identification Conditions}

Each estimand identifies $\att$ under a distinct restriction on untreated potential outcomes.

First, the M estimand identifies $\att$ if treatment assignment is conditionally independent of period-$1$ potential outcomes given $Y_{-s}$:
\[
\text{Condition M:}\qquad
(Y_1(1),Y_1(0)) \indep W \mid Y_{-s}.
\]
Second, the DID estimand identifies $\att$ under unconditional parallel trends:
\[
\text{Condition DID:}\qquad
E[Y_1(0)-Y_0(0)\mid W=1]
=
E[Y_1(0)-Y_0(0)\mid W=0].
\]
Third, the DIDM estimand identifies $\att$ under conditional parallel trends:
\[
\text{Condition DIDM:}\qquad
E[Y_1(0)-Y_0(0)\mid Y_{-s},W=1]
=
E[Y_1(0)-Y_0(0)\mid Y_{-s},W=0].
\]

\paragraph{Mutual Non-Nestedness of the M, DID, and DIDM Conditions:}
The above three restrictions corresponding to M, DID, and DIDM are distinct and mutually non-nested. Formal arguments are provided in Appendix~\ref{sec:detailed_calc_nonnest}. Here, we provide the intuition.

Condition M imposes a restriction on untreated \emph{levels} conditional on lagged outcomes, whereas DID and DIDM impose restrictions on untreated \emph{growth rates}. Accordingly, M may hold even when DIDM fails if treated and control units with the same $Y_{-s}$ share the same untreated outcome level at $t=1$ but exhibit different untreated growth between $t=0$ and $t=1$. Conversely, DIDM may hold while M fails if untreated growth is the same conditional on $Y_{-s}$, but untreated levels differ systematically across treatment status.

Likewise, DIDM and DID differ because DIDM imposes a \emph{conditional} parallel-trends restriction, whereas DID imposes an \emph{unconditional} one. DIDM may hold even when DID fails due to compositional differences across $Y_{-s}$ strata, while DID may hold even when DIDM fails if mutually offsetting conditional trend differences cancel out in the aggregate.

Thus, these assumptions are not ordered along a single robustness dimension, as they rule out different features of the data-generating process. In practice, a researcher \textit{ex ante} does not know which restriction is most credible. The three estimands need not coincide, and each may be biased when its identifying condition fails. 
This gives rise to the design problem studied in this paper: when a researcher must select a single headline observational estimand from $\{\attm, \attdidm, \attdid\}$, is there a principled default choice?

To address this question, the remainder of the paper proceeds in the following steps.
Section~\ref{sec:typology} documents that M, DID, and DIDM are recurring headline designs in applied work. Section~\ref{sec:double_bracketing} then develops a nonparametric proposition giving conditions under which $\attm \leq \attdidm \leq \attdid$, and documents (Section~\ref{sec:empirical_double_bracketing}) that benchmark applications exhibit this ordering in practice. Section~\ref{sec:minimax} demonstrates that the ordering implies a minimax-regret rationale for DIDM as the default headline estimand.

\section{Matching, DID, and DIDM in Applied Work}
\label{sec:typology}

Across several leading applied literatures, researchers addressing closely related causal questions employ empirical designs that map naturally into
$
\theta_{ATT}^M,\, \theta_{ATT}^{DIDM},\, \text{ and } \theta_{ATT}^{DID}.
$
Table~\ref{tab:typology_main} summarizes canonical examples from job training, displaced workers, minimum wages, and teacher/school value-added.
Our goal here is descriptive and selective: we use the table to show that the choice among M, DID, and DIDM recurs in applied work.

\begin{table}[t]
\centering
\caption{Illustrative Designs and Their Classification into M, DID, and DIDM}
\label{tab:typology_main}
\scalebox{0.70}{
\begin{tabular}{llp{4cm}p{4.5cm}c}
\toprule
Domain & Paper & Target effect $\theta_i$ & Design (paper's description) & Our type \\
\midrule
\multicolumn{5}{l}{\textit{(A) Job training}} \\
\midrule
Job training &
\begin{tabular}[t]{@{}l@{}}
LaLonde (1986);\\ Heckman et al.\ (1998a,b);\\
Dehejia--Wahba (1999,2002);\\
Smith--Todd (2005)
\end{tabular} &
Experimental benchmark vs.\ observational earnings effects &
Matching, gains, and ``symmetric DID'' comparisons on NSW/JTPA-style data &
M / DID / DIDM \\[0.3em]

\midrule
\multicolumn{5}{l}{\textit{(B) Displaced workers}} \\
\midrule
Displacement & Jacobson--LaLonde--Sullivan (1993) &
Long-run earnings losses after displacement &
Worker FE event-study DID &
DID \\[0.3em]

Displacement & Couch--Placzek (2010) &
Long-run earnings losses &
Propensity-score ATT; matched DID (DATT) &
M \& DIDM \\[0.3em]

Displacement & Lachowska--Mas--Woodbury (2020) &
Long-run earnings/hours/wage losses &
Event-study DID with flexible controls for pre-displacement outcomes &
DIDM \\[0.3em]

Displacement & Hyslop--Townsend (2019) &
Post-displacement earnings/income losses &
Regression-adjusted DID and matching reported side by side &
DID \& M \\[0.3em]

Displacement & Schmieder et al.\ (2023) &
Long-run earnings/wage losses over the business cycle &
Propensity-score matched event-study DID (``matched difference-in-differences'') on German administrative data &
M / DID / DIDM \\
\midrule
\multicolumn{5}{l}{\textit{(C) Minimum wages and low-wage labor markets}} \\
\midrule
Minimum wage & Cengiz et al.\ (2019) &
Employment effects of minimum-wage hikes &
Stacked event-study DID &
DID \\[0.3em]

Minimum wage & Kami\'nska--Lewandowski (2015) &
Employment effects of minimum-wage increases &
Propensity-score matching plus DID &
DIDM \\[0.3em]


Minimum wage/health & Hafner--Lochner (2022) &
Health effects of minimum wages &
Matching on pre-reform characteristics plus DID &
M \& DIDM \\
\midrule
\multicolumn{5}{l}{\textit{(D) Teacher and school value-added}} \\
\midrule
VAM & Rothstein (2010) &
Teacher value-added on test scores &
Lagged-score VAM, gains VAM, and hybrid variants &
M / DID / DIDM \\[0.3em]

VAM & Kane--Staiger (2008) &
Teacher effectiveness on achievement &
Levels-with-lagged-scores versus gains specifications &
M \& DID \\[0.3em]

VAM & Chetty et al.\ (2014a,b) &
Teacher VA and adult outcomes &
Rich lagged-score VA models with quasi-experimental validation &
M \\[0.3em]

\bottomrule\\
\end{tabular}
}
\end{table}

We classify a design as M when identification relies primarily on conditioning, matching, or reweighting based on lagged outcomes or other rich pre-treatment levels. We classify a design as DID when identification relies on a parallel-trends-type restriction implemented through first differences, fixed effects, or event-study specifications, without explicit conditioning on lagged outcomes. Finally, we classify a design as DIDM when it combines both elements, for example, by applying DID within a matched or reweighted sample, or by estimating a trends-based specification after explicitly conditioning on lagged outcomes.
Here and throughout, we use M in a broad sense to include not only literal matching or reweighting estimators, but also lagged-outcome regression adjustments. While these specifications are not matching estimators in the narrow algorithmic sense, they share the same identifying logic: conditioning on pre-treatment outcomes so that treated units are compared to control units with similar outcome histories; see, for example, \citet{heckman1997matching} and the balancing/weighting synthesis in \citet{doudchenko2016balancing}.


The choice among M, DID, and DIDM has been central to program evaluation since its earliest non-experimental benchmarking exercises. Beginning with \citet{lalonde1986evaluating} and continuing through \citet{heckman1998characterizing,heckman1998_2matching,dehejia1999causal,dehejia2002propensity,smith2005does}, the job-training literature already compares all three designs against experimental benchmarks, making it a natural starting point for our framework.

The same design choice reappears in later applied work, which suggests that the question studied in this paper is relevant beyond the classical job-training context. In displaced-worker studies, canonical event-study specifications such as \citet{jacobson1993earnings} employ DID, while later work incorporates matching and matched-DID hybrids; see, for example, \citet{couch2010earnings}, \citet{hysloptownsend2019longer}, \citet{lachowska2020sources}, and \citet{schmieder2023costs}. In minimum-wage applications, stacked event-study designs such as \citet{cengiz2019effect} provide a canonical DID benchmark, whereas studies such as \citet{kaminska2015effects}, \citet{hafner2022minimum}, and \citet{arranz2025assessing} combine matching or reweighting with DID-type comparisons. 
Finally, in teacher and school value-added research, lagged-score models map naturally into M, gains models into DID, and hybrid gains-plus-lagged-score specifications into DIDM.\footnote{In the value-added literature, empirical-Bayes or other shrinkage adjustments are conceptually distinct from the design taxonomy used here. Our classification concerns the identifying structure of the underlying causal signal--whether it is constructed from lagged-score conditioning (M), gains-style differencing (DID), or both (DIDM)--prior to any post-estimation shrinkage. Accordingly, the comparison in this paper is conducted in estimand space and speaks to identification bias across designs, rather than to a full finite-sample risk or MSE ranking that incorporates variance.} See, among others, \citet{kane2008estimating}, \citet{rivkin2005teachers}, \citet{rothstein2010teacher}, \citet{chetty2014measuring1,chetty2014measuring2}, and \citet{angrist2023methods}.

The practical implication is that applied researchers repeatedly face the same design choice, often while targeting closely related causal parameters. This motivates a principled comparison among the three estimands. The next section establishes the ordering $\attm \leq \attdidm \leq \attdid$ first theoretically and then empirically: a nonparametric proposition (Section~\ref{sec:double_bracketing}) gives conditions under which it holds, and four benchmark applications (Section~\ref{sec:empirical_double_bracketing}) show that it arises in practice. Section~\ref{sec:minimax} then shows that this ordering is what makes DIDM the minimax-regret choice among the three.

\section{The Double Bracketing}\label{sec:double_bracketing}

The goal of this section is to provide a formal condition under which the ordering
$
\attm \leq \attdidm \leq \attdid
$
holds. This ordering serves as the key input for the minimax-regret decision problem studied in Section~\ref{sec:minimax}. 
We begin with a nonparametric proposition, illustrate its intuition using a simple linear dynamic model, and then document the same pattern in benchmark applications.

\subsection{A Nonparametric Double-Bracketing Proposition}\label{sec:double_bracketing_np}

Let
\begin{align*}
\Delta(\attm)   &= \attm - \att, \\
\Delta(\attdidm) &= \attdidm - \att, \\
\Delta(\attdid)  &= \attdid - \att
\end{align*}
denote the identification errors of the three estimands relative to the target $\att$. Then, the ordering
\[
\attm \leq \attdidm \leq \attdid
\]
is equivalent to the ordering
\[
\Delta(\attm) \leq \Delta(\attdidm) \leq \Delta(\attdid).
\]

We introduce the following assumption as a condition under which this ordering holds.

\begin{assumption}[Negative Selection and Stable Untreated Dynamics]\label{a:special}
${}$
\begin{enumerate}[(i)]
\item\label{a:special:selection}
For all $y$,
\[
E[Y_0 \mid W=0, Y_{-s}=y]
\;\geq\;
E[Y_0 \mid W=1, Y_{-s}=y].
\]

\item\label{a:special:sd}
The distribution of $Y_{-s}$ among untreated units first-order stochastically dominates the distribution among treated units:
\[
F_{Y_{-s}\mid W=0}
\;\text{FOSD}\;
F_{Y_{-s}\mid W=1}.
\]

\item\label{a:special:dec}
The untreated growth function
\[
\Phi(y)
:=
E[Y_1 - Y_0 \mid W=0, Y_{-s}=y]
\]
is weakly decreasing in $y$.
\end{enumerate}
\end{assumption}

Assumption~\ref{a:special} provides a nonparametric formulation of negative selection together with stable untreated dynamics. Part~\eqref{a:special:selection} states that, conditional on lagged untreated outcomes, treated units have weakly lower period-$0$ untreated outcomes. Part~\eqref{a:special:sd} requires that untreated units are positively shifted in terms of lagged untreated outcomes. Part~\eqref{a:special:dec} stipulates that untreated growth is weakly smaller for units with higher lagged untreated outcomes.
Importantly, the three conditions in Assumption~\ref{a:special} have testable implications. In Appendix~\ref{sec:empirical_assumption}, we show that these implications are not rejected in our job-training and education applications.

\begin{proposition}[Double Bracketing]\label{lem:double_bracketing}
Suppose that Assumption~\ref{a:special} holds. Then,
\[
\Delta(\attm)
\;\leq\;
\Delta(\attdidm)
\;\leq\;
\Delta(\attdid),
\]
and therefore
\[
\attm
\;\leq\;
\attdidm
\;\leq\;
\attdid.
\]
\end{proposition}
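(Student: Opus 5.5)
Since $\att$ enters all three errors identically, it suffices to compare the estimands directly: we will show $\attdidm-\attm\ge 0$ and $\attdid-\attdidm\ge 0$. Each difference will be written as an expectation over the treated distribution of $Y_{-s}$ of a quantity whose sign is fixed by one part of Assumption~\ref{a:special}. Throughout, write $m_w(y)=E[Y_0\mid W=w,Y_{-s}=y]$ and use $\Phi(y)=E[Y_1-Y_0\mid W=0,Y_{-s}=y]$ as defined in Assumption~\ref{a:special}.

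\textbf{Step 1 (M versus DIDM).} Expand the inner conditional expectations in $\attdidm$ using iterated expectations. This gives
\[
\attdidm = E[Y_1-Y_0\mid W=1] - E\bigl[\Phi(Y_{-s})\mid W=1\bigr].
\]
For the matching estimand, write $E[Y_1\mid W=0,Y_{-s}]=\Phi(Y_{-s})+m_0(Y_{-s})$. Then
\[
\attm = E[Y_1\mid W=1] - E\bigl[\Phi(Y_{-s})+m_0(Y_{-s})\mid W=1\bigr].
\]
Subtracting, and using $E[Y_0\mid W=1]=E[m_1(Y_{-s})\mid W=1]$, yields
\[
\attdidm-\attm = E\bigl[m_0(Y_{-s})-m_1(Y_{-s})\mid W=1\bigr]\ge 0,
\]
where the inequality holds pointwise by part~\eqref{a:special:selection}. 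This step is routine, provided conditional expectations are finite and the common-support convention is such that $m_0$ is defined on the treated support of $Y_{-s}$.

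\textbf{Step 2 (DIDM versus DID).} The same expression for $\attdidm$, together with $E[Y_1-Y_0\mid W=0]=E[\Phi(Y_{-s})\mid W=0]$, gives
\[
\attdid-\attdidm = E\bigl[\Phi(Y_{-s})\mid W=1\bigr]-E\bigl[\Phi(Y_{-s})\mid W=0\bigr].
\]
By part~\eqref{a:special:dec}, $\Phi$ is weakly decreasing. By part~\eqref{a:special:sd}, $Y_{-s}\mid W=0$ first-order stochastically dominates $Y_{-s}\mid W=1$. The standard characterization of FOSD (expectations of nonincreasing functions are larger under the dominated distribution) then implies this difference is nonnegative. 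The ordering of the errors $\Delta(\cdot)$ follows immediately by subtracting $\att$.

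\textbf{Main obstacle.} The main subtlety is technical rather than conceptual, and it sits in Step 2. $\Phi$ is defined only on the support of $Y_{-s}$ among untreated units, yet it is integrated against the treated distribution. I would first assume (or note that the paper implicitly assumes) an overlap condition: the treated support of $Y_{-s}$ lies within the control support. I would also assume that $\Phi$ is integrable under both conditional distributions. With these in place, the FOSD inequality for monotone functions can be applied, for example via the quantile-coupling representation $Y_{-s}^{(0)}=F_0^{-1}(U)\ge F_1^{-1}(U)=Y_{-s}^{(1)}$ with $\Phi$ applied pointwise. This representation needs no continuity of $\Phi$.
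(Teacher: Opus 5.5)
Your proof is correct and essentially the paper's own: the paper computes the same two differences, phrased as differences of identification errors, rewrites them in potential outcomes, and uses $Y_t(0)=Y_t$ for $t\le 0$ and for controls. It arrives at exactly your expressions $E[m_0(Y_{-s})-m_1(Y_{-s})\mid W=1]$ and $E[\Phi(Y_{-s})\mid W=1]-E[\Phi(Y_{-s})\mid W=0]$, signing them with parts (i) and (ii)--(iii) respectively. Working directly with the observable estimands, as you do, avoids those substitutions, and your overlap and integrability caveats are conditions the paper leaves implicit.
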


Proposition~\ref{lem:double_bracketing} provides the main analytical input for the remainder of the paper. It delivers the ordering used in the minimax-regret argument developed below.

The proposition also yields an immediate interpretation under each candidate identifying restriction. If Condition M holds, then
\[
0 = \Delta(\attm) \;\leq\; \Delta(\attdidm) \;\leq\; \Delta(\attdid),
\]
so $\attm = \att$, while $\attdidm$ and $\attdid$ weakly exceed it. If Condition DID holds, then
\[
\Delta(\attm) \;\leq\; \Delta(\attdidm) \;\leq\; \Delta(\attdid) = 0,
\]
so $\attdid = \att$, while $\attm$ and $\attdidm$ weakly fall below it. If Condition DIDM holds, then
\[
\Delta(\attm) \;\leq\; \Delta(\attdidm) = 0 \;\leq\; \Delta(\attdid),
\]
so $\attdidm = \att$ and lies between the other two candidate estimands. In all three cases, DIDM occupies the interior position implied by the double-bracketing structure.

\begin{remark}[Reversed Selection Ordering]\label{rem:reverse_bracketing}
Proposition~\ref{lem:double_bracketing} is stated under a negative-selection ordering, motivated by the empirical literature on job-training programs and educational interventions. If instead the selection ordering is reversed, in the sense that
\[
E[Y_0 \mid W=0, Y_{-s}=y]
\;\leq\;
E[Y_0 \mid W=1, Y_{-s}=y]
\qquad \text{for all } y,
\]
and the distribution of $Y_{-s}$ among treated units first-order stochastically dominates that among untreated units, while Assumption~\ref{a:special}, part~\eqref{a:special:dec}, remains unchanged, then the same argument yields the reversed bracketing
\[
\Delta(\attm)
\;\geq\;
\Delta(\attdidm)
\;\geq\;
\Delta(\attdid),
\]
and therefore
\[
\attm
\;\geq\;
\attdidm
\;\geq\;
\attdid.
\]
Thus, reversing the selection ordering reverses the direction of the bracketing, but $\attdidm$ remains the interior estimand. Since the minimax-regret result in Section~\ref{sec:minimax} depends on DIDM occupying the middle position, rather than on the direction of the ordering, its logic remains unchanged under this reversed-selection case.
\end{remark}

\subsection{A Parametric Illustration}\label{sec:parametric}

To illustrate how Proposition~\ref{lem:double_bracketing} can arise in a familiar setting, consider the following linear dynamic model:
\begin{equation}\label{eq:parametric}
Y_{i,t}
=
\alpha
+ \beta W_i \mathbbm{1}\{t \geq 1\}
+ \gamma W_i
+ \delta_t
+ \rho Y_{i,t-1}
+ \epsilon_{i,t},
\end{equation}
with
\begin{equation}\label{eq:parametric_orthogonality}
E[\epsilon_{i,1}\mid Y_{i,-1},W_i]
=
E[\epsilon_{i,0}\mid Y_{i,-1},W_i]
=
0.
\end{equation}
Here, $\rho$ captures persistence in outcomes, $\gamma$ captures selection into treatment through baseline outcomes, and $\beta$ represents the causal effect of interest.

Assume:
\begin{align}
\text{Negative Selection I:}\quad
&\gamma
=
E[Y_{i,0}\mid W_i=1,Y_{i,-1}]
-
E[Y_{i,0}\mid W_i=0,Y_{i,-1}]
\leq 0,
\label{eq:parametric:negative_selection_1}
\\
\text{Negative Selection II:}\quad
&E[Y_{i,-1}\mid W_i=1]
\leq
E[Y_{i,-1}\mid W_i=0],
\label{eq:parametric:negative_selection_2}
\\
\text{Stable Dynamics:}\quad
&0 \leq \rho \leq 1.
\label{eq:parametric:non_explosive}
\end{align}
Under \eqref{eq:parametric}--\eqref{eq:parametric:non_explosive}, one obtains (with formal derivations found in Appendix~\ref{sec:detailed_calculations}) the gaps
\begin{align}
\attdidm-\attm
&=
-\gamma
\;\geq\;
0,
\label{eq:parametric:gap1_reordered}
\\
\attdid-\attdidm
&=
\rho(1-\rho)\Big(E[Y_{i,-1}\mid W_i=0]-E[Y_{i,-1}\mid W_i=1]\Big)
\;\geq\;
0.
\label{eq:parametric:gap2_reordered}
\end{align}
Hence, $\attm \le \attdidm \le \attdid$ holds.

The illustration also clarifies when DIDM coincides with an endpoint. If $\gamma = 0$, then $\attm = \attdidm \leq \attdid$; if $\rho \in \{0,1\}$, then $\attm \leq \attdidm = \attdid$. Outside such knife-edge cases, DIDM is generically distinct from both M and DID.

\subsection{Empirical Evidence on Double Bracketing}
\label{sec:empirical_double_bracketing}

Section~\ref{sec:typology} showed that empirical designs corresponding to
\(
\theta_{ATT}^M,\, \theta_{ATT}^{DIDM},\, \text{ and } \theta_{ATT}^{DID}
\)
appear across a wide range of applied fields. We particularly focus on four benchmark datasets: the NSW program with the CPS comparison sample, the NSW program with the PSID comparison sample, the JTPA program, and an education application based on \citet{athey2025combining} and the related value-added literature \citep{chetty2014measuring1,chetty2014measuring2}. \label{sec:nsw}\label{sec:jtpa}\label{sec:educ}

The first three datasets come from the job-training literature, the classical laboratory for comparing non-experimental estimators to experimental benchmarks; see \citet{lalonde1986evaluating}, \citet{heckman1998characterizing,heckman1998_2matching}, \citet{dehejia1999causal,dehejia2002propensity}, and \citet{smith2005does}. In these applications, the outcome variable is real earnings, and treatment corresponds to participation in a job-training program. The availability of experimental benchmarks makes it possible to assess the sign of the bias directly.

The education application serves a complementary role. In this setting, the outcome is student achievement, measured by standardized test scores, and treatment corresponds to assignment to smaller classes. Unlike the job-training benchmarks, its value does not primarily lie in comparison to an experimental benchmark. Rather, it provides both an external-domain validation of the same ordering and a high-precision environment in which the separation among $\attm$, $\attdidm$, and $\attdid$ can be clearly observed, including across subgroups defined by observed characteristics.

Across these four datasets, we document a common empirical pattern: the estimates corresponding to M, DIDM, and DID satisfy
\(
\attm \;\leq\; \attdidm \;\leq\; \attdid
\)
up to sampling uncertainty. In the job-training applications, this manifests as an ordering of \emph{signed biases} relative to the experimental benchmark: matching-type estimands tend to be comparatively conservative, DID-type estimands comparatively optimistic, and DIDM lies in between. 
In the education application, the same ordering appears directly in the estimated effects across multiple subpopulations. Detailed descriptions of the data, institutional settings, variable construction, and estimation procedures are provided in Appendix~\ref{sec:appendix:details}.

Figure~\ref{fig:jobtraining:bracket} consolidates the benchmark job-training evidence. The top panel presents the signed-bias version of the figure in \citet{chabe2017should} using the NSW and JTPA experiments. The lower two panels report a broader collection of estimates from \citet{smith2005does} for the NSW--CPS and NSW--PSID comparisons. Although the point estimates vary across specifications, the ordering of M, DIDM, and DID remains stable. For our purposes, the key point is that the same bracketing relationship repeatedly appears across benchmark job-training datasets and estimation choices.

\begin{figure}[t]
\centering
\includegraphics[width=0.62\textwidth]{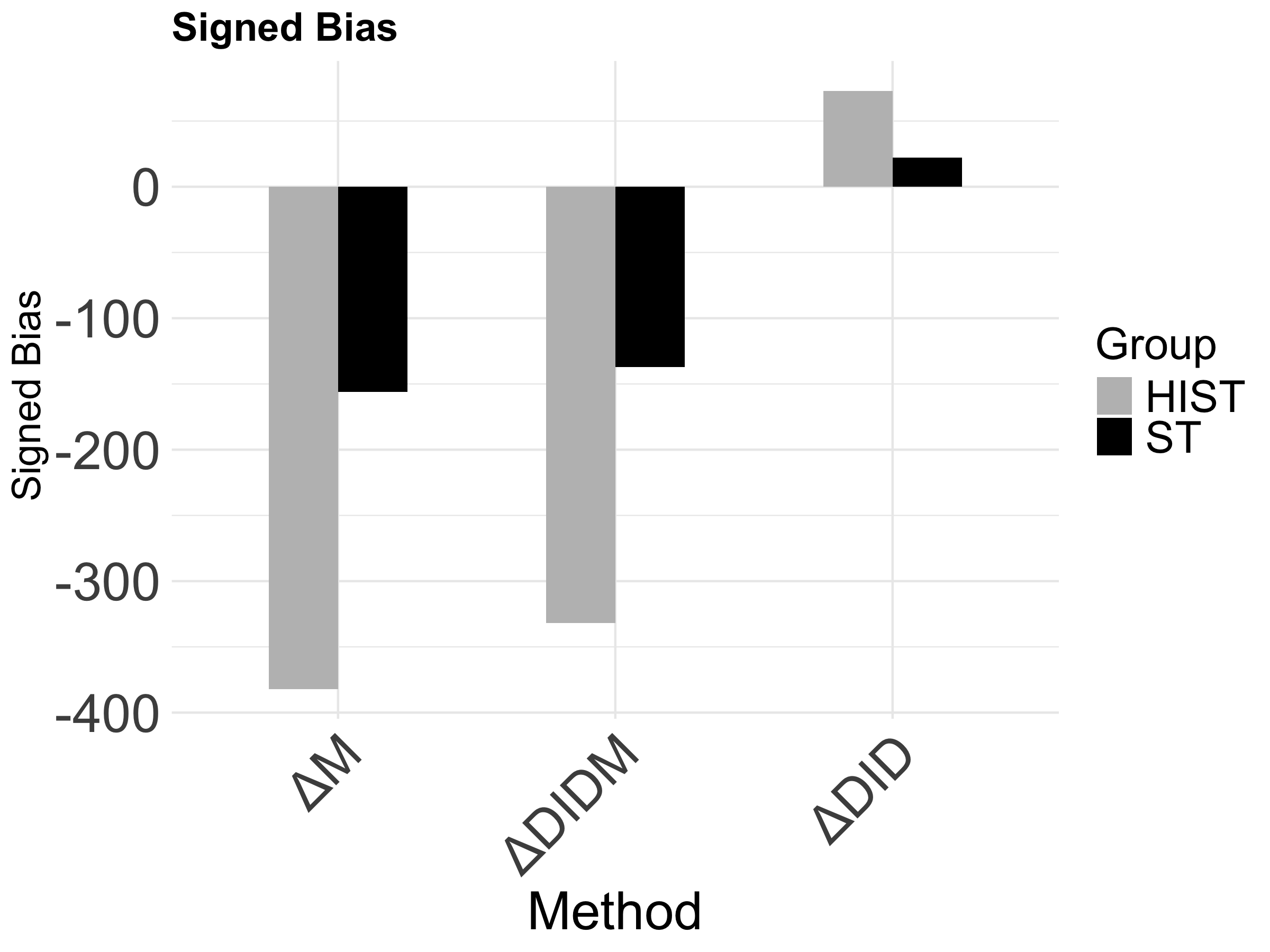}
\\[0.8em]
\begin{minipage}{0.48\textwidth}
\centering
\includegraphics[width=\textwidth]{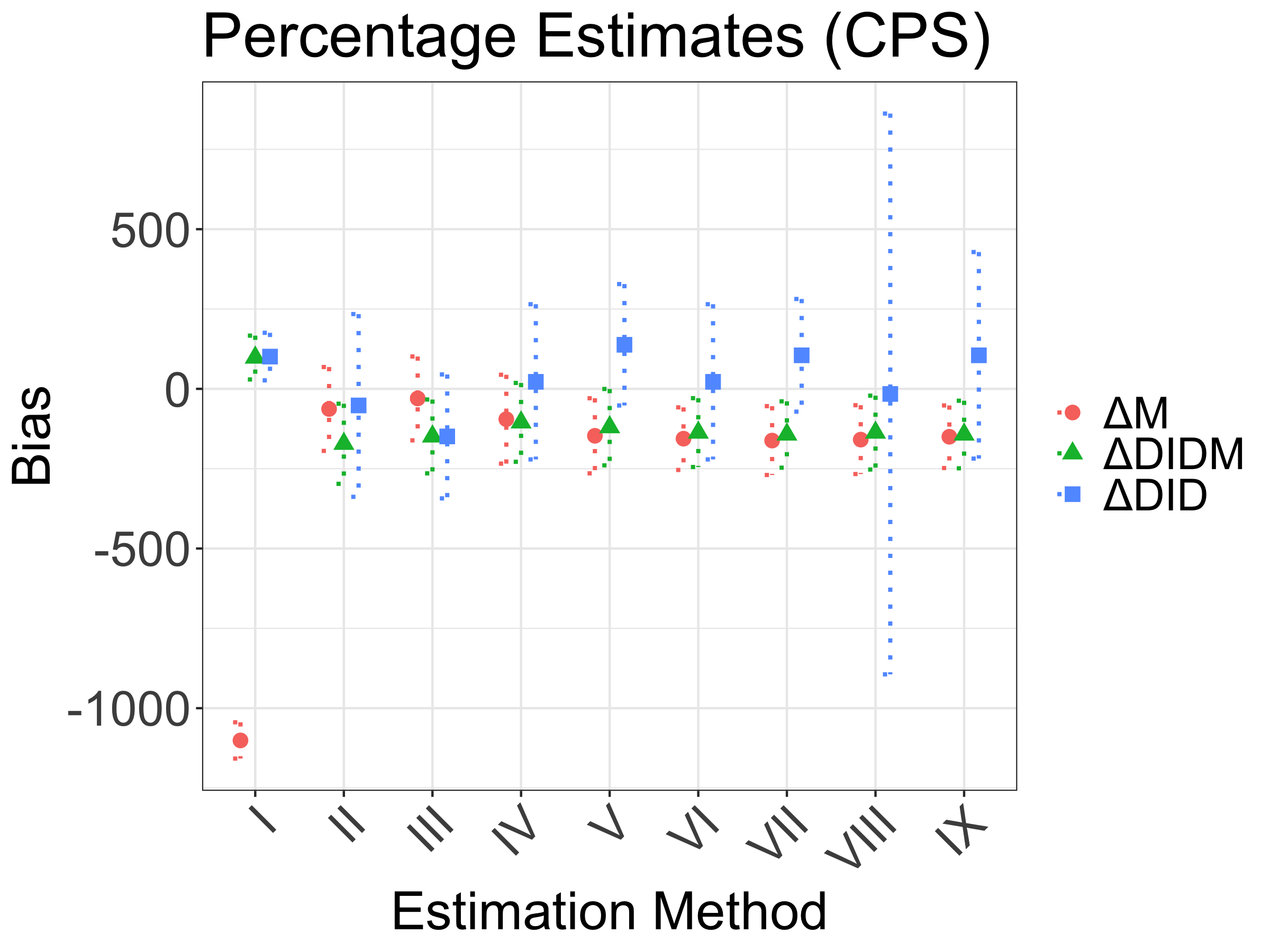}
\end{minipage}
\hfill
\begin{minipage}{0.48\textwidth}
\centering
\includegraphics[width=\textwidth]{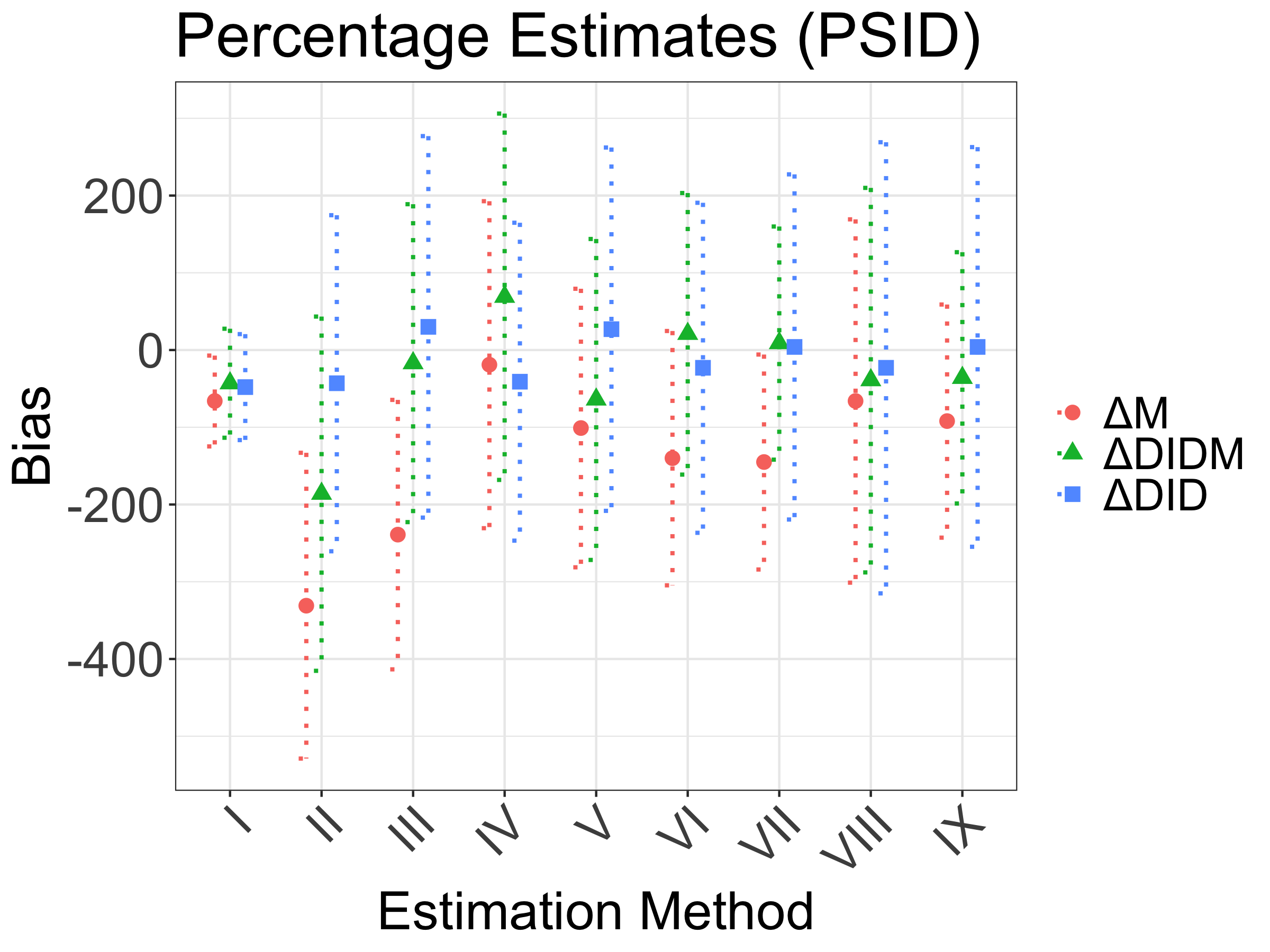}
\end{minipage}
\caption{Benchmark job-training evidence on double bracketing. The top panel summarizes the signed-bias version of the figure in \citet{chabe2017should}, based on \citet{heckman1998characterizing} (labeled HIST) and \citet{smith2005does} (labeled ST). The lower left and lower right panels report signed biases of the M, DIDM, and DID estimates across the estimation methods reported in \citet{smith2005does}, using the NSW--CPS and NSW--PSID comparison samples. 
The estimation methods are labeled as follows: I - Mean difference, II - 1 Nearest neighbor without support, III - 10 Nearest-neighbors without support, IV - 1 Nearest-neighbor with support, V - 10 Nearest-neighbors with support, VI - Local linear matching (bw = 1.0), VII - Local linear matching (bw = 4.0), VIII - Local linear regression adjusted (bw = 1.0), IX - Local linear regression adjusted (bw = 4.0).
}
\label{fig:jobtraining:bracket}
\end{figure}

Figure~\ref{fig:educ:bracket} shows the corresponding pattern in the education application of \citet{athey2025combining}. Across multiple subpopulations, the estimated effects continue to satisfy the same ordering, despite substantial variation in levels across groups.

\begin{figure}[t]
\centering
\includegraphics[width=0.7\textwidth]{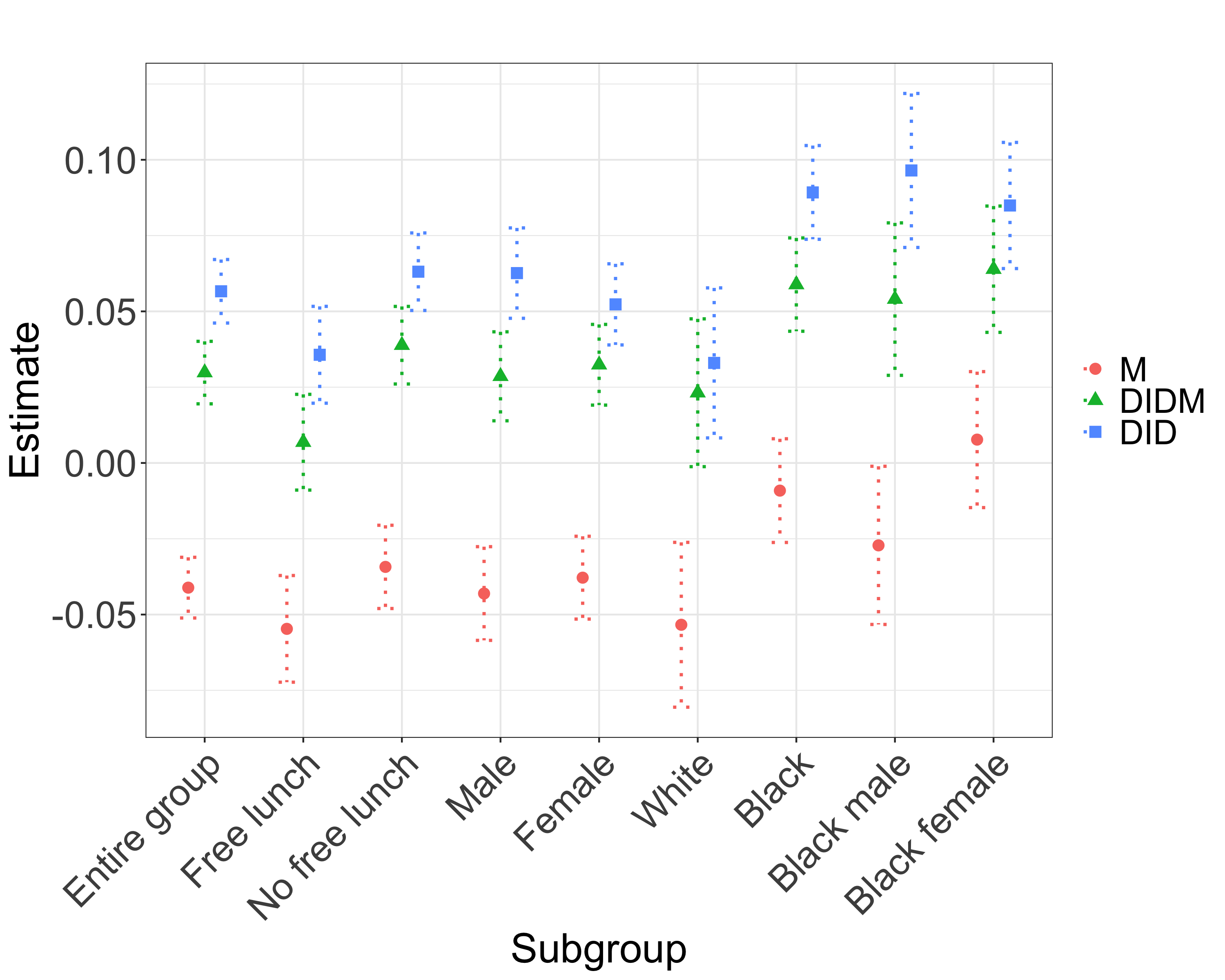}
\caption{Estimates and 95\% confidence intervals for the M, DIDM, and DID estimands in the education application, by subpopulation. The ordering $\attm \leq \attdidm \leq \attdid$ remains stable across groups defined by observed characteristics.}
\label{fig:educ:bracket}
\end{figure}

Taken together, these four benchmark datasets point to a common empirical regularity. They show that the ordering in Proposition~\ref{lem:double_bracketing} is a relationship that appears repeatedly in practice.

\section{The Minimax-Regret Choice of a Headline Estimand}\label{sec:minimax}\label{sec:minimax}

As noted earlier, applied researchers often report estimates of $\attm$, $\attdidm$, and $\attdid$ side by side. If a researcher must select a single observational target as the headline estimand, a natural question is which choice is best when the most credible identifying restriction is unclear. We answer this by interpreting “best” as safest in the minimax sense.

\subsection{The Minimax-Regret Result}

Let $\ATT = \{\attm, \attdidm, \attdid\}$, and let $L$ denote a loss function defined on $\ATT \times \ATT$. We interpret each element of $\ATT$ both as a possible \emph{action} (the estimand a researcher reports as the headline effect) and as a possible \emph{state} (the value of the true ATT), reflecting uncertainty about which identifying assumption is closest to the truth.
For actions $a \in \ATT$ and states $\theta \in \ATT$, define the regret as
\[
R(a,\theta) = L(a,\theta) - \min_{a' \in \ATT} L(a',\theta).
\]

We impose the following condition on $L$.

\begin{assumption}\label{a:loss}
$L(a,\theta) = c(\theta) + \ell(|a-\theta|)$,
where $\ell(0)=0$ and $\ell$ is nondecreasing.
\end{assumption}

This assumption accommodates a wide range of commonly used choices for $\ell$, including absolute loss, squared loss, power loss, Huber loss, $\varepsilon$-insensitive loss, exponential loss, logistic loss, Tukey's biweight loss, Cauchy loss, Welsch loss, and fair loss, among others. It requires only that the loss depends on the absolute deviation $|a - \theta|$ through a function $\ell$ that is weakly increasing in this deviation.

\begin{theorem}[Minimax-regret optimality of DIDM]\label{theorem:main}
If Assumptions \ref{a:special} and \ref{a:loss} are satisfied, then
\[
\max_{\theta \in \ATT} R(\attdidm,\theta) 
\;=\; 
\min_{a' \in \ATT} \max_{\theta \in \ATT}R(a',\theta).
\]
\end{theorem}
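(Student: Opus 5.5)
The plan is to invoke Proposition~\ref{lem:double_bracketing} to get the ordering $\attm \le \attdidm \le \attdid$ and then settle a three-point decision problem on the real line by direct computation. Write $m=\attm$, $d=\attdidm$, $D=\attdid$, so $m\le d\le D$. Under Assumption~\ref{a:loss}, the state-dependent term $c(\theta)$ appears in both $L(a,\theta)$ and $\min_{a'}L(a',\theta)$, so it cancels in the regret. Since $\theta\in\ATT$ is itself an available action, choosing $a'=\theta$ gives loss $c(\theta)+\ell(0)=c(\theta)$. Because $\ell\ge \ell(0)=0$ by monotonicity, this is the minimum, and therefore
\[
R(a,\theta)=\ell(|a-\theta|).
\]

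Next I would compute the worst-case regret of each action. Because $\ell$ is nondecreasing, the maximum over $\theta\in\{m,d,D\}$ of $\ell(|a-\theta|)$ equals $\ell$ evaluated at the largest distance from $a$ to the three points. For $a=d$, that largest distance is $\max\{d-m,\,D-d\}$. For $a=m$ and for $a=D$, it is $D-m$. The ordering gives $\max\{d-m,D-d\}\le D-m$, and monotonicity of $\ell$ then yields
\[
\max_\theta R(d,\theta)=\ell(\max\{d-m,D-d\})\le \ell(D-m)=\max_\theta R(m,\theta)=\max_\theta R(D,\theta).
\]
Hence $d$ attains $\min_{a'}\max_\theta R(a',\theta)$, which is the claimed equality.

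The main point to handle carefully is the cancellation step. It relies on each state also being an available action, so that the minimal loss in state $\theta$ is exactly $c(\theta)$. It also relies on $\ell\ge0$, which follows from $\ell(0)=0$ together with $\ell$ nondecreasing, since $|a-\theta|\ge0$. Two further details need checking. First, ties, e.g.\ $d=m$: these only weaken strict comparisons and do not affect the weak inequality. Second, the identification of $\max_\theta \ell(|a-\theta|)$ with $\ell$ of the maximal distance: this holds for any nondecreasing $\ell$, with no continuity required.
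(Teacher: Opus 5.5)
Your proposal is correct and follows essentially the same route as the paper's proof. Both obtain $\attm \le \attdidm \le \attdid$ from Proposition~\ref{lem:double_bracketing}, reduce the regret to $\ell(|a-\theta|)$ by cancelling $c(\theta)$, and then use monotonicity of $\ell$ to compare the worst-case regrets $\ell(\max\{\attdidm-\attm,\attdid-\attdidm\})$ and $\ell(\attdid-\attm)$. Your explicit justification that the minimal loss in state $\theta$ equals $c(\theta)$ is a small point the paper leaves implicit: it holds because $\theta$ is itself an available action and $\ell\ge 0$.
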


This theorem shows that $\attdidm$ is the minimax-regret choice among the three alternatives in $\ATT$ for a researcher whose loss $L$ satisfies Assumption~\ref{a:loss}. Hence, a researcher who seeks to minimize the worst-case regret over the three candidate identifying assumptions should adopt DIDM as the headline estimand. 
This recommendation is similar to, though distinct from, the use of midpoint estimators for interval-identified parameters \citep[e.g.,][]{song2014point} in the partial-identification literature.

\paragraph{Interpretation:}
M is safest only if one is primarily concerned about overstatement, while DID is safest only if one is primarily concerned about understatement. DIDM represents the middle option. Once regret is evaluated symmetrically in terms of the distance between the reported estimate and the true target, the middle option provides the best hedge against uncertainty regarding which identifying restriction is closest to the truth.
\
\subsection{An Illustration Calibrated to the NSW Data}
\label{sec:simulation_common_rf}

Theorem~\ref{theorem:main} is a statement about estimands. To make the regret
ranking concrete, this section reports a simple Monte Carlo illustration calibrated to the
NSW comparison samples. We construct three data-generating processes (``worlds''), one in
which each of M, DID, and DIDM is the correctly specified design, and ask which estimand a
researcher should report when she cannot tell the worlds apart.

\paragraph{The three worlds.}
Let $Y_{-s}$ denote the lagged outcome, measured in thousands of dollars, and define a
binary lagged-outcome index
\[
L=2\cdot\mathbf{1}\{Y_{-s}>\operatorname{med}(Y_{-s})\}-1\in\{-1,1\},
\]
so that $L=1$ for units with lagged outcome above the sample median and $L=-1$ otherwise.
Let $U\in\{-1,1\}$ be an unobserved binary confounder, independent of $Y_{-s}$. The common
assignment rule is
\[
\Pr(W=1\mid Y_{-s},U)=\Lambda(aL+cLU),
\qquad
\Lambda(z)=\frac{1}{1+e^{-z}}.
\]
Untreated potential outcomes are
\[
Y_0(0)=g_0(Y_{-s})+(m-q)U+\varepsilon_0,
\qquad
Y_1(0)=g_0(Y_{-s})+pL+mU+\varepsilon_1,
\]
where $g_0(y)=\alpha_0+\alpha_1 y$ is the linear regression of the period-0 untreated
outcome on $Y_{-s}$, with $(\alpha_0,\alpha_1)$ estimated from the comparison sample under
study (CPS or PSID). The implied untreated trend is
\[
\Delta(0)=Y_1(0)-Y_0(0)=pL+qU+(\varepsilon_1-\varepsilon_0).
\]
Observed post-period outcomes are $Y_1=Y_1(0)+W\tau(g)$, where $g$ indexes the twelve
subgroups defined by race, marital status, and high-school-degree status, and $\tau(g)$ is
the treatment effect for subgroup $g$, set equal to the corresponding subgroup estimate in
the NSW experimental sample. The DGP is calibrated once to the realized comparison sample;
the Monte Carlo then varies only the simulation draws, so the exercise illustrates the
population regret ranking of Theorem~\ref{theorem:main} rather than quantifying estimation
uncertainty.

The three candidate worlds use the same five coefficients $\theta=(a,c,p,q,m)$. One
coefficient is then set to zero in each world:
\[
\theta_M=(a,c,p,q,0),
\qquad
\theta_{DID}=(0,c,p,q,m),
\qquad
\theta_{DIDM}=(a,c,p,0,m).
\]
These restrictions encode the identifying logic of the construction. Setting $m=0$ removes
the hidden post-period level channel, so matching is valid after conditioning on the
lagged-outcome index $L$. Setting $a=0$ removes marginal selection on $L$, so the
unconditional trend difference cancels by symmetry. Setting $q=0$ removes the hidden trend
channel, so DIDM is valid after conditioning on $L$.

\paragraph{Calibration.}
The free coefficients $(a,c,p,q,m)$ are calibrated to match a set of reduced-form moments
of the comparison sample, chosen so that the simulated data resemble the real data on
features an applied researcher could inspect. Because the structural coefficients are not
separately observed, we anchor their magnitudes to sample quantities on the same scale.
For each world $k$, we first reconstruct an estimated untreated post-period outcome by
removing the subgroup treatment effect from the observed post-period outcome,
\[
\widehat Y_{1,k}(0)=Y_1-W\,\widehat\tau_k(g),
\qquad
\widehat\Delta_k(0)=\widehat Y_{1,k}(0)-Y_0,
\]
where $\widehat\Delta_k(0)$ is the implied untreated trend. The calibration then targets:
the standard deviations of $\widehat Y_{1,k}(0)$ and $\widehat\Delta_k(0)$, which anchor the
scales of $m$, $p$, and $q$; the treatment shares within each value of $L$, which anchor the
selection parameters $a$ and $c$; and the treated--control gaps in $\widehat Y_{1,k}(0)$ and
$\widehat\Delta_k(0)$ within each $L$ cell, which anchor the hidden level and trend channels.
Appendix~\ref{app:common_rf_details} states the exact moment vector and objective.

Table~\ref{tab:moment_calibrated_coefficients} reports the calibrated coefficients. The
unrestricted row gives the common coefficient vector before the world-specific zero
restriction is imposed. The remaining rows show the actual coefficients used in each
simulated world.

\begin{table}[t]
\centering
\scriptsize
\caption{Calibrated Coefficients in the Moment-Calibrated DGP}
\label{tab:moment_calibrated_coefficients}
\begin{threeparttable}
\begin{tabular}{llrrrrr}
\toprule
Sample & World & $a$ & $c$ & $p$ & $q$ & $m$ \\
\midrule
NSW + CPS  & unrestricted & 1.65 & 1.00 & 1.25 & 2.00 & 5.95 \\
NSW + CPS  & $M$          & 1.65 & 1.00 & 1.25 & 2.00 & 0.00 \\
NSW + CPS  & DID          & 0.00 & 1.00 & 1.25 & 2.00 & 5.95 \\
NSW + CPS  & DIDM         & 1.65 & 1.00 & 1.25 & 0.00 & 5.95 \\
\midrule
NSW + PSID & unrestricted & 1.15 & 1.00 & 1.75 & 2.00 & 6.85 \\
NSW + PSID & $M$          & 1.15 & 1.00 & 1.75 & 2.00 & 0.00 \\
NSW + PSID & DID          & 0.00 & 1.00 & 1.75 & 2.00 & 6.85 \\
NSW + PSID & DIDM         & 1.15 & 1.00 & 1.75 & 0.00 & 6.85 \\
\bottomrule
\end{tabular}
\begin{tablenotes}[flushleft]
\footnotesize
\item The three simulated worlds share a common calibrated coefficient vector and differ
only by the zero restriction defining the world.
\end{tablenotes}
\end{threeparttable}
\end{table}

\paragraph{Results.}
For each world $k$ we draw $B$ Monte Carlo samples. In draw $b$ we compute the sample analog
of each candidate estimand and its mean squared error (MSE) relative to the draw-specific
true ATT among the treated. Writing $\widehat\theta_{e,k,b}$ for the estimate of
$e\in\{\attm,\attdidm,\attdid\}$ in draw $b$ of world $k$ and $\att_{k,b}$ for the
corresponding true ATT,
\[
\widehat r_{e,k}=\frac{1}{B}\sum_{b=1}^B
\bigl(\widehat\theta_{e,k,b}-\att_{k,b}\bigr)^2
\]
is the MSE of estimand $e$ in world $k$. Regret subtracts the smallest MSE in the same
world, and worst-case regret takes the maximum over worlds:
\[
\widehat{\Reg}_{e,k}
=\widehat r_{e,k}-\min_{e'\in\{\attm,\attdidm,\attdid\}}\widehat r_{e',k},
\qquad
\widehat{\overline{\Reg}}_{e}=\max_k\widehat{\Reg}_{e,k}.
\]
Table~\ref{tab:moment_calibrated_main_results} gives the resulting decision problem.

\begin{table}[t]
\centering
\scriptsize
\caption{Distinguishability and Worst-Case Regret in the Moment-Calibrated Design}
\label{tab:moment_calibrated_main_results}
\begin{threeparttable}
\begin{tabular}{lcccccc}
\toprule
Sample & World checks & 3-way acc. & $\overline{\Reg}_{\attm}$ & $\overline{\Reg}_{\attdidm}$ & $\overline{\Reg}_{\attdid}$ & Minimax \\
\midrule
NSW + CPS  & 3/3 & 0.367 & 0.073 & 0.043 & 2.103 & $\attdidm$ \\
NSW + PSID & 3/3 & 0.356 & 0.099 & 0.020 & 2.363 & $\attdidm$ \\
\bottomrule
\end{tabular}
\begin{tablenotes}[flushleft]
\footnotesize
\item ``World checks'' reports, for each comparison sample, the number of worlds (out of
three) in which the design that is supposed to be correctly specified satisfies its
identifying restriction in the simulated data, as measured by the standardized moment gaps
in Appendix~\ref{app:common_rf_details}. A value of 3/3 indicates that each world satisfies
exactly its intended restriction. The three-way accuracy is the held-out random-forest
accuracy for distinguishing the three worlds; chance accuracy is $1/3$. Worst-case regrets
are the row maxima of the regret matrices in Table~\ref{tab:moment_regret_matrix}.
\end{tablenotes}
\end{threeparttable}
\end{table}

Table~\ref{tab:moment_regret_matrix} reports the underlying $3\times 3$ regret matrices.
Each column subtracts the smallest MSE in that world, so every world has at least one zero.
The ``Worst'' column gives the row maximum, and the minimax value is the smallest entry in
that column (in bold).

\begin{table}[t]
\centering
\scriptsize
\caption{Quadratic Regret Matrices in the Moment-Calibrated Design}
\label{tab:moment_regret_matrix}
\begin{threeparttable}
\begin{tabular}{lcccc|cccc}
\toprule
& \multicolumn{4}{c|}{NSW + CPS} & \multicolumn{4}{c}{NSW + PSID} \\
Estimand & $M$ & DIDM & DID & Worst & $M$ & DIDM & DID & Worst \\
\midrule
$\attm$    & 0.000 & 0.073 & 0.034 & 0.073          & 0.000 & 0.099 & 0.068 & 0.099 \\
$\attdidm$ & 0.043 & 0.000 & 0.000 & \textbf{0.043} & 0.020 & 0.000 & 0.000 & \textbf{0.020} \\
$\attdid$  & 2.103 & 1.996 & 0.000 & 2.103          & 2.097 & 2.363 & 0.000 & 2.363 \\
\bottomrule
\end{tabular}
\begin{tablenotes}[flushleft]
\footnotesize
\item Entries are quadratic-loss regrets, rounded to three decimals. The ``Worst'' column
reports the row maximum for each estimand and comparison sample. The minimax-regret estimand
is the one achieving the smallest worst-case regret (bold), namely $\attdidm$ in both
samples.
\end{tablenotes}
\end{threeparttable}
\end{table}

In both comparison samples, $\attdidm$ is the minimax-regret estimand. The diagonal zeros in
Table~\ref{tab:moment_regret_matrix} show the intended pointwise pattern: $\attm$ is best in
the $M$-world, $\attdid$ is best in the DID-world, and $\attdidm$ is best in the DIDM-world.
The decision criterion is minimax regret:
\[
\attdidm
=
\argmin_{e\in\{\attm,\attdidm,\attdid\}}
\max_{k\in\{M,DID,DIDM\}}
\widehat{\Reg}_{e,k}.
\]
The random-forest three-way accuracies, $0.367$ for NSW+CPS and $0.356$ for NSW+PSID, are
close to the chance benchmark of $1/3$. The calibrated worlds therefore satisfy their
intended identifying restrictions while remaining difficult to distinguish before the regret
criterion is applied.

\section{Conclusion}

Researchers routinely choose among matching, DID, and hybrid DIDM designs in panel settings, yet applied work offers little formal guidance on which observational target should anchor the main result when experimental benchmarks are unavailable. This paper provides such guidance.

Our analysis delivers two related results. First, under two economically interpretable conditions, negative selection into treatment and stable untreated outcome dynamics, the three observational estimands satisfy the double-bracketing relation.

Second, once this ordering holds, DIDM is minimax-regret optimal among the three candidate headline estimands under a broad class of symmetric, distance-based loss functions. DIDM therefore emerges as a natural default when a researcher must report a single observational estimate while remaining uncertain about which identifying restriction is closest to the truth.

The main implication for applied work is that, when the double-bracketing logic is credible in a given setting, DIDM should be reported as the headline estimate, with matching and DID serving as lower and upper benchmarks. 


The paper develops a decision-theoretic framework for choosing among common panel-data designs under uncertainty about identifying assumptions. The framework does not replace substantive judgment, but it shows that, in a large class of empirically relevant environments, researchers can make this choice in a disciplined way. When double bracketing is plausible, DIDM provides a robust default.

\setlength{\baselineskip}{6.8mm}
\bibliographystyle{apalike}
\bibliography{reference}

\newpage 
\appendix
\section*{Appendix}

\section{Mathematical Proofs}

\begin{proof}[Proof of Proposition \ref{lem:double_bracketing}]
The identification errors can be written as
\begin{align*}
\Delta\left(\attm\right)= & E\left[E\left[Y_1(0) \mid W=1, Y_{-s}\right]-E\left[Y_1(0) \mid W=0, Y_{-s}\right] \mid W=1\right],
\\
\Delta\left(\attdid\right)= & E\left[Y_1(0)-Y_0(0) \mid W=1\right]-E\left[Y_1(0)-Y_0(0) \mid W=0\right],
\\
\Delta\left(\attdidm\right)= & E\left[E\left[Y_1(0)-Y_0(0) \mid W=1, Y_{-s}\right]-E\left[Y_1(0)-Y_0(0) \mid W=0, Y_{-s}\right] \mid W=1\right].
\end{align*}

First, observe that
\begin{align*}
\Delta(\attdidm) - \Delta(\attm)
=&
E\big[E[Y_0(0)\mid W=0,Y_{-s}] 
- E[Y_0(0)\mid W=1,Y_{-s}] \mid W=1\big]
\\
=&
E\big[E[Y_0\mid W=0,Y_{-s}] 
- E[Y_0\mid W=1,Y_{-s}] \mid W=1\big]
\geq 0,
\end{align*}
where the second equality is due to $Y_t(0)=Y_t$ for all $t \leq 0$, and
the last inequality follows from Assumption \ref{a:special} \eqref{a:special:selection}.

Second, observe that
\begin{align*}
&\Delta(\attdid) - \Delta(\attdidm)
\\
=&
E\big[E[Y_1(0)-Y_0(0)\mid W=0,Y_{-s}] \mid W=1\big]
-
E\big[Y_1(0)-Y_0(0)\mid W=0\big]
\\
=&
E\big[E[Y_1(0)-Y_0(0)\mid W=0,Y_{-s}] \mid W=1\big]
-
E\big[E[Y_1(0)-Y_0(0)\mid W=0,Y_{-s}] \mid W=0\big]
\\
=&
E\big[E[Y_1-Y_0\mid W=0,Y_{-s}] \mid W=1\big]
-
E\big[E[Y_1-Y_0\mid W=0,Y_{-s}] \mid W=0\big]
\\
=& 
E[\Phi(Y_{-s}) \mid W=1]
-
E[\Phi(Y_{-s}) \mid W=0]
\geq 0,
\end{align*}
where the second equality follows from the law of iterated expectations, the third equality is due to $Y_t(0)=Y_t$ given $W=0$, and the last inequality follows from Assumption \ref{a:special} \eqref{a:special:sd}--\eqref{a:special:dec}.
\end{proof}

\begin{remark}[Reversed ordering in the proof]\label{rem:reverse_bracketing_proof}
As noted in Remark~\ref{rem:reverse_bracketing} in the main text, if the selection ordering is reversed, then the same proof goes through with the direction of the inequalities reversed. More specifically, if part~\ref{a:special:selection} holds with the opposite inequality and part~\ref{a:special:sd} holds with the opposite stochastic-dominance direction, while part~\ref{a:special:dec} is unchanged, then the algebra above is identical and only the signs of the two concluding inequalities change.

In the first step, the sign of
\[
\Delta(\attdidm) - \Delta(\attm)
\]
reverses because the conditional mean ordering in part~\ref{a:special:selection} is reversed. In the second step, the sign of
\[
\Delta(\attdid) - \Delta(\attdidm)
\]
reverses because the stochastic-dominance ordering in part~\ref{a:special:sd} is reversed, while the monotonicity condition in part~\ref{a:special:dec} is unchanged. Consequently,
\[
\Delta(\attm)
\;\geq\;
\Delta(\attdidm)
\;\geq\;
\Delta(\attdid),
\]
and hence
\[
\attm
\;\geq\;
\attdidm
\;\geq\;
\attdid.
\]
Thus, only the direction of the bracketing changes; DIDM remains the middle estimand.
\end{remark}

\begin{proof}[Proof of Theorem \ref{theorem:main}]
First, Proposition \ref{lem:double_bracketing} under Assumption \ref{a:special} implies 
\[
\attm \leq \attdidm \leq \attdid.
\]
Therefore,
\begin{align}
\max\{|\attdidm-\attm|,|\attdid-\attdidm|\} 
\;\leq\; |\attdidm-\attm| + |\attdid-\attdidm| = |\attdid - \attm|.
\label{eq:max_ineqaulity}
\end{align}

Second, Assumption \ref{a:loss} implies
\begin{align*}
R(a,\theta) 
&= 
L(a,\theta) - \min_{a' \in \ATT} L(a',\theta)
\\
&=
\bigl[c(\theta) + \ell(|a-\theta|)\bigr] - \bigl[c(\theta) + \ell(0)\bigr] 
= \ell(|a-\theta|).
\end{align*}
Thus, the worst-case regrets are
\begin{align*}
\max_{\theta \in \ATT} R(\attm,\theta) 
&= \ell(|\attdid-\attm|),
\\
\max_{\theta \in \ATT} R(\attdidm,\theta) 
&= \max\{ \ell(|\attdidm-\attm|), \ell(|\attdid-\attdidm|) \},
\\
\max_{\theta \in \ATT} R(\attdid,\theta) 
&= \ell(|\attdid-\attm|).
\end{align*}
The inequality \eqref{eq:max_ineqaulity} and the monotonicity of $\ell$ in Assumption \ref{a:loss} imply
\begin{align*}
\max_{\theta \in \ATT} R(\attdidm,\theta)
&=
\max\{ \ell(|\attdidm-\attm|), \ell(|\attdid-\attdidm|) \}
\\
&\leq
\ell\Big(\max\{|\attdidm-\attm|,|\attdid-\attdidm|\}\Big)
\\
&\leq \ell(|\attdid - \attm|),
\end{align*}
where the last expression $\ell(|\attdid - \attm|)$ equals both 
$\max_{\theta \in \ATT} R(\attm,\theta)$
and
$\max_{\theta \in \ATT} R(\attdid,\theta)$.
Therefore, $\max_{\theta \in \ATT} R(\attdidm,\theta) = \min_{a' \in \ATT} \max_{\theta \in \ATT}R(a',\theta)$ follows.
\end{proof}

\subsection{Detailed Calculations for Section \ref{sec:parametric}}\label{sec:detailed_calculations}

This subsection presents detailed calculations to derive the mathematical expressions in Section \ref{sec:parametric}.
We omit the $i$ subscript throughout this appendix section.

Iterated applications of \eqref{eq:parametric} yield
\begin{align*}
E[Y_1|W=1,Y_{-1}] =& \alpha + \beta + \gamma + \delta_1 + \rho (\alpha + \gamma + \delta_0 + \rho Y_{-1}) 
\\
=& (1+\rho)\alpha + \beta + (1+\rho)\gamma + \delta_1 + \rho\delta_0 + \rho^2 Y_{-1}
\qquad{and}
\\
E[Y_1|W=0,Y_{-1}] =& \alpha + \delta_1 + \rho (\alpha + \delta_0 + \rho Y_{-1}) 
\\
=& (1+\rho)\alpha + \delta_1 + \rho\delta_0 + \rho^2 Y_{-1}.
\end{align*}
Substituting these expressions and using the law of iterated expectations yield
\begin{align*}
\attm &= E[ Y_1 | W=1] - E[ E[Y_1 |W=0, Y_{-1} ] | W=1] 
\\
&= E[ E[Y_1|W=1,Y_{-1}] | W=1] - E[ E[Y_1 |W=0, Y_{-1} ] | W=1] 
= \beta + (1+\rho)\gamma.
\end{align*}

Next, observe that \eqref{eq:parametric} yields
\begin{align*}
Y_1-Y_0 =& \beta W + (\delta_1 - \delta_0) + \rho (Y_0-Y_{-1}) + \epsilon_1 - \epsilon_0 
\\
=& \rho\alpha + (\beta + \rho\gamma) W + \delta_1 - (1-\rho)\delta_0 - \rho(1-\rho) Y_{-1} + \epsilon_1 - (1-\rho)\epsilon_0,
\end{align*}
where the second equality follows from
$Y_0-Y_{-1} = \alpha + \gamma W + \delta_0 - (1-\rho) Y_{-1} + \epsilon_0$ by \eqref{eq:parametric}.
Thus, we have
\begin{align*}
E[Y_1-Y_0|Y_{-1},W=1] =& \rho\alpha + \beta + \rho\gamma + \delta_1 - (1-\rho)\delta_0 - \rho(1-\rho)Y_{-1}
\qquad\text{and}
\\
E[Y_1-Y_0|Y_{-1},W=0] =& \rho\alpha + \delta_1 - (1-\rho)\delta_0 - \rho(1-\rho)Y_{-1}.
\end{align*}
Substituting these expressions yields
\begin{align*}
\attdidm &= E[E[ Y_1 -Y_0 | Y_{-1}, W=1] - E[ Y_1 -Y_0 | Y_{-1}, W=0]|W=1]
= \beta + \rho\gamma.
\end{align*}

Similarly, we have
\begin{align*}
E[Y_1-Y_0|W=1] =& \rho\alpha + \beta + \rho\gamma + \delta_1 - (1-\rho)\delta_0 - \rho(1-\rho)E[Y_{-1}|W=1]
\qquad\text{and}
\\
E[Y_1-Y_0|W=0] =& \rho\alpha + \delta_1 - (1-\rho)\delta_0 - \rho(1-\rho)E[Y_{-1}|W=0].
\end{align*}
Substituting these expressions yields
\begin{align*}
\attdid &= E[ Y_1 - Y_0 | W= 1] - E[ Y_1 - Y_0 | W=0]
\\
&= \beta + \rho\gamma + \rho(1-\rho) (E[Y_{-1}|W=0] - E[Y_{-1}|W=1]).
\end{align*}

Then, it is straightforward to see that 

\begin{align}
\attdidm-\attm
&=
-\gamma
\;\geq\;
0,
\\
\attdid-\attdidm
&=
\rho(1-\rho)\Big(E[Y_{i,-1}\mid W_i=0]-E[Y_{i,-1}\mid W_i=1]\Big)
\;\geq\;
0.
\end{align}
Hence, $\attm \le \attdidm \le \attdid$.

\section{Detail of Literature Review of Past AER Papers}\label{app:aer_census}

This appendix describes how we construct the census of empirical articles from the \emph{American Economic Review} (AER) and how we classify their causal designs into difference-in-differences (DID), matching / lagged dependent variables (M), and hybrid DIDM designs.


We first used Google Scholar to identify AER articles over the five most recent annual volumes at the time of the search.\footnote{Following the spirit of \citet{andrews2024communicating}, we rely on a Google Scholar query.} The search was restricted to articles published in the \emph{American Economic Review} and used combinations of the keywords
``panel data'' and ``repeated cross-section''.

This search is intentionally broad, designed to capture empirical studies that plausibly use a panel or repeated cross-section structure in a causal design.

The initial query returned $114$ items, of which four were either not AER research articles (for example, reference-list fragments or work published in other outlets) or duplicate entries, and were removed, leaving $110$ articles. For each article, we downloaded the main PDF (excluding comments, replies, and purely methodological pieces) and passed it to an LLM-based coding pipeline described below.

\subsection{Scope: Panel-Based Causal Designs}

We then screened each article to determine whether it was \emph{in scope} for our analysis.
An article is classified as \emph{in scope} if it satisfies both of the following:
\begin{enumerate}[(i)]
    \item it uses panel data, that is, repeated observations on the same units over time (unit $\times$ time); this restriction reflects that the M and DIDM estimands require within-unit lagged outcomes and differencing, so the three-way comparison is only jointly defined on panel data; and
    \item the panel structure is used in a regression or design intended to identify causal effects (what we term a ``panel-based causal design'').
\end{enumerate}
Articles that use panel data only for descriptive statistics or event plots, that are purely theoretical, or that study cross-sectional RCTs without a panel dimension are classified as \emph{out of scope}.

Out of the $110$ articles, $33$ were classified as out of scope and excluded from the main population, leaving $77$ articles that employ a panel-based causal design. Throughout this appendix we refer to these $77$ articles as the ``in-scope'' AER panel sample.

\subsection{LLM-Based Coding Protocol}\label{app:llm_protocol}

For each in-scope article we classify the panel-based causal design along three dimensions:
\begin{itemize}
    \item Code 1 (DID): difference-in-differences, two-way fixed effects (TWFE), or event-study designs.
    \item Code 2 (M): matching, lagged dependent variable (LDV) models, or specifications that condition on pre-period outcomes.
    \item Code 3 (DIDM): hybrid designs that combine DID/TWFE/event study with matching or explicit conditioning on pre-treatment outcomes in the same identification strategy.
\end{itemize}

To implement this classification at scale, we used Gemini Pro.
For each article, we supplied the PDF together with the following coding instructions as a prompt:

\begin{quote}\small
\textbf{Prompt:}You are given a PDF of an economics paper. Your task is to determine whether it uses a panel-based causal design, and if so, whether it uses any of three methods:

\begin{itemize}
    \item Code 1 = DID (Difference-in-Differences / TWFE / event study)
    \item Code 2 = M (Matching / LDV / conditioning on pre-period outcomes)
    \item Code 3 = DIDM (Hybrid DID $\times$ M)
\end{itemize}

Always follow the steps and output format below. Do NOT include the word "markdown" in your output. Do NOT explain your reasoning outside the specified fields.

--------------------------------

Step 1: Pre-screen (in scope vs out of scope)

--------------------------------

First, decide if the paper is IN SCOPE for this review.

A paper is IN SCOPE if BOTH of the following hold:
\begin{enumerate}
    \item It uses panel data: repeated observations on the same units over time (unit $\times$ time).
    \item The panel structure is used in a regression or design meant to estimate causal effects (i.e., a ``panel-based causal design'').
\end{enumerate}

The paper is OUT OF SCOPE if it does NOT meet these conditions. Examples of OUT OF SCOPE:
\begin{itemize}
    \item Purely cross-sectional (no time dimension).
    \item Pure theory (no empiwerical causal design).
    \item Descriptive only (no causal regression or design).
    \item Single-period RCT with no panel dimension.
    \item Panel data used only for descriptive averages or plots, with no panel regression or causal identification strategy.
\end{itemize}

If the paper is OUT OF SCOPE:
\begin{itemize}
    \item Set \texttt{IncludeInPopulation: false}
    \item Set \texttt{MainLabel: OUT\_OF\_SCOPE}
    \item Give a 1--2 line reason with at least one quote and page reference.
    \item STOP. Do NOT try to detect DID, M, or DIDM.
\end{itemize}

If the paper IS IN SCOPE (uses panel data in a causal design):
\begin{itemize}
    \item Set \texttt{IncludeInPopulation: true}
    \item Set \texttt{MainLabel: PANEL\_CAUSAL}
    \item Proceed to Step 2.
\end{itemize}

--------------------------------

Step 2: Detect DID, M, DIDM (only)

--------------------------------

Focus particularly on whether these three methods are used in the causal design:

\begin{enumerate}
    \item DID (Code 1): Difference-in-Differences / TWFE / event study
    \item M (Code 2): Matching on lags / LDV / conditioning on pre-period outcomes
    \item DIDM (Code 3): Hybrid DID $\times$ M
\end{enumerate}

For EACH of the three methods (DID, M, DIDM), you must assign a STATUS:

\begin{itemize}
    \item USED\_FOR\_MAIN\_ID = used as a main identification strategy
    \item ROBUSTNESS\_ONLY = used only as a robustness or secondary specification
    \item MENTIONED\_ONLY = mentioned but not actually estimated
    \item NOT\_PRESENT = not mentioned and not used
\end{itemize}

You must base all classifications on explicit information from the paper.

-------------------------

Definition: DID (Code 1)

-------------------------

Count as DID if BOTH are true:

\begin{enumerate}
    \item There are unit and time fixed effects (or equivalent panel structure), AND
    \item The authors describe the design as ``difference-in-differences'', ``event study'', ``two-way fixed effects'', or otherwise rely on a parallel trends idea.
\end{enumerate}

Indicators that DID is used:
\begin{itemize}
    \item Regression equations that include unit and time fixed effects.
    \item Language like ``difference-in-differences'', ``DID'', ``event study'', ``two-way fixed effects'', ``TWFE'', ``parallel trends''.
    \item Discussion of pre-trends tests in an event-study setting.
\end{itemize}

Examples that should be coded as DID (if they fit the above):
\begin{itemize}
    \item Panel regression with unit and time fixed effects estimating the impact of a policy.
    \item Event-study plots using leads and lags of treatment with unit/time FEs.
\end{itemize}

-----------------------

Definition: M (Code 2)

-----------------------

Count as M if ANY MAIN CAUSAL SPECIFICATION includes the outcome from a pre-treatment period (or pre-treatment outcome path) on the right-hand side as a regressor, matching variable, or weighting variable, regardless of the label the authors use.

This includes:
\begin{itemize}
    \item Lagged dependent variable models (e.g., $y_{it}$ regressed on $y_{i,t-1}$).
    \item Controls like ``baseline test score'', ``prior achievement'', ``previous year's earnings'', ``pre-program outcome'', ``outcome in $t-1$'', etc.
    \item Matching or weighting on pre-treatment outcome paths (e.g., matching treated and control units on pre-period outcomes, or constructing weights based on pre-treatment outcomes).
\end{itemize}

The paper does NOT have to use the words ``matching'', ``LDV'', or ``lagged dependent variable''. You must infer M from:
\begin{itemize}
    \item Regression equations that include lagged or baseline outcomes, and/or
    \item Descriptions like ``we control for last year's score'', ``we include the baseline value of Y as a control'', ``we condition on pre-program outcomes'', ``we match on pre-treatment outcome trajectories'', etc.
\end{itemize}

----------------------------

Definition: DIDM (Code 3)

----------------------------

Count as DIDM if the SAME identification strategy COMBINES:

\begin{enumerate}
    \item A DID / TWFE / event-study structure (unit \& time FEs, parallel trends idea), AND
    \item Matching / weighting / conditioning on pre-treatment outcomes (M) as part of the design for comparability.
\end{enumerate}

In other words, DIDM is present if pre-treatment outcomes (e.g., lagged Y, baseline Y, pre-trends) are explicitly used to construct or refine the DID comparison itself, not just as a completely separate robustness check.

Typical patterns that COUNT as DIDM:
\begin{itemize}
    \item ``We first match (or weight) treated and control units on pre-treatment outcomes/trends and then estimate a difference-in-differences / TWFE model on the matched/weighted sample.''
    \item ``We construct weights based on pre-program outcomes and then run an event-study with unit and time fixed effects on the reweighted data.''
    \item ``Our main specification is a DID that conditions flexibly on the lagged outcome / pre-period outcome path to address differential trends.''
\end{itemize}

Cases that do NOT count as DIDM:
\begin{itemize}
    \item The paper has a baseline DID spec and a separate LDV/matching robustness spec, but they are presented as distinct estimators (e.g., ``as a robustness check, we also estimate an LDV model'').
    \begin{itemize}
        \item In that case:
        \begin{itemize}
            \item Classify DID and M separately (e.g., DID: USED\_FOR\_MAIN\_ID, M: ROBUSTNESS\_ONLY).
            \item Set DIDM: NOT\_PRESENT.
        \end{itemize}
    \end{itemize}
\end{itemize}

\end{quote}
For each article, the LLM returned:
\begin{enumerate}[(i)]
    \item an in-scope vs.\ out-of-scope decision;
    \item a label for each of the three methods (DID, M, DIDM) indicating whether it was used for main identification, used only for robustness, mentioned only, or not present; and
    \item supporting quotes from the paper with page and section references.
\end{enumerate}

\subsection{Human Verification and Classification Error}\label{app:verification}

The raw LLM classifications were then subjected to human verification.
For each in-scope article, we read the provided quotes and assessed whether the assigned labels were reasonable given the context.
If a quote appeared clearly inconsistent with the assigned label (for example, a quote describing a purely cross-sectional regression coded as DID, or a baseline-control specification coded as matching when it was used only in a robustness check), we manually inspected the relevant sections of the paper and corrected the classification.

To assess the accuracy of this coding, a member of the research team independently reviewed every article in the census, comparing the assigned codes against the supporting quotes and the underlying papers. The classification was judged incorrect for $14$ of the $110$ articles (approximately $12.7\%$); these cases were corrected, and the corrected classification is the one summarized below.\footnote{Because some designs sit in a gray zone, this figure is best read as an approximate upper bound on the residual misclassification rate rather than an exact error rate.} Some classifications necessarily involve judgment, and the reported figures should be read with this in mind.

\subsection{Summary of Design Types in the AER Panel Sample}\label{app:aer_summary}

Table~\ref{tab:aer_design_types} summarizes the distribution of design types in the in-scope AER panel sample after applying the LLM-based coding and human verification described above.
Recall that the codes are not mutually exclusive: a paper may, for example, use both DID and M (Code 1 + Code 2) in its causal analysis.

Out of the $110$ AER articles in the census, $33$ ($30.00\%$) were out of scope.
The remaining $77$ articles ($100\%$ of the in-scope sample) employ a panel-based causal design.
Among these $77$ in-scope articles, the combinations of methods are as follows:
\begin{itemize}
    \item $33$ papers ($42.86\%$) use only DID-type methods (Code 1) and neither M nor DIDM.
    \item $9$ papers ($11.69\%$) use only M-type methods (Code 2) and neither DID nor DIDM.
    \item No paper uses only DIDM (Code 3) without DID or M.
    \item $12$ papers ($15.58\%$) use both DID and M (Codes 1 + 2).
    \item $4$ papers ($5.19\%$) use DID together with DIDM (Codes 1 + 3) but not standalone M.
    \item No paper uses M and DIDM without DID (Codes 2 + 3).
    \item $5$ papers ($6.49\%$) use all three (Codes 1 + 2 + 3).
    \item $14$ papers ($18.18\%$) use none of the three methods as defined above (e.g., they rely on alternative panel designs such as synthetic control, pure fixed-effects models without a parallel-trends interpretation, or other identification strategies).
\end{itemize}

Aggregating across combinations, a total of $54$ papers ($70.13\%$) use DID (Code 1) in some capacity (either as the main identification strategy or as a robustness specification), $26$ papers ($33.77\%$) use M-type methods (Code 2), and $9$ papers ($11.69\%$) use DIDM-type hybrid designs (Code 3).

\begin{table}[t]
    \centering
    \caption{Classification of Panel-Based Causal Designs in the AER Panel Sample}
    \label{tab:aer_design_types}
    \begin{tabular}{lrr}
        \toprule
        Category & Number & Percentage \\
        \midrule
        Out of scope (no panel-based causal design) & 33 & 30.00\% \\
        In-scope (panel-based causal design)        & 77 & 100.00\% \\
        \addlinespace
        Only Code 1 (DID)                          & 33 & 42.86\% \\
        Only Code 2 (M)                            &  9 & 11.69\% \\
        Only Code 3 (DIDM)                         &  0 &  0.00\% \\
        Codes 1 + 2                                & 12 & 15.58\% \\
        Codes 1 + 3                                &  4 &  5.19\% \\
        Codes 2 + 3                                &  0 &  0.00\% \\
        Codes 1 + 2 + 3                            &  5 &  6.49\% \\
        Neither 1, 2, nor 3                        & 14 & 18.18\% \\
        \addlinespace
        Total using Code 1 (any combination)       & 54 & 70.13\% \\
        Total using Code 2 (any combination)       & 26 & 33.77\% \\
        Total using Code 3 (any combination)       &  9 & 11.69\% \\
        \bottomrule
    \end{tabular}
    \vspace{0.5em}
    
    \footnotesize
    \emph{Notes:} The table reports the distribution of design types among the $77$ in-scope AER articles that employ a panel-based causal design.
    Codes 1, 2, and 3 correspond to DID, M, and DIDM, respectively.
    Categories ``Only 1'', ``Only 2'', and ``Only 3'' indicate that the paper uses exactly that code and neither of the others.
    ``Neither'' indicates that none of the three codes is present according to our classification.
    ``Total'' rows count all papers in which the corresponding code appears in any role (main identification, robustness, or otherwise).
\end{table}

\section{Details of Data}
\label{sec:appendix:details}

This appendix section provides details of the data used in the empirical applications and in the empirical assessment of the assumptions.

\subsection{Details of the NSW Data}\label{sec:appendix:nsw}
\subsubsection{Data Description}

Our primary dataset originates from the National Supported Work (NSW) Demonstration, a transitional subsidized work experience program that operated for four years across 15 locations in the United States. This initiative specifically targeted four distinct groups: female long-term AFDC recipients, former drug addicts, ex-offenders, and young school dropouts. Approximately 10,000 individuals took part in the program, each engaging in 12 to 18 months of employment.

The NSW program aimed to assist individuals who faced significant barriers to employment. It provided a structured training environment initially, followed by support in securing regular employment. To ensure the program reached those in genuine need, participants were required to be currently unemployed and to have limited recent employment experience, highlighting the program’s focus on individuals with considerable employment challenges.

A standout feature of the NSW program was its experimental design, which included a randomized control trial at 10 locations between April 1975 and August 1977. In this trial, 6,616 participants were randomly assigned to either a treatment group, which received the program services, or a control group, which did not. Data collection involved a retrospective baseline interview and four follow-up interviews, covering two years before random assignment and up to 36 months afterward. The dataset provides comprehensive information on demographics, employment history, job search behavior, mobility, household income, housing, and drug use.

\subsubsection{Key Variables}

In our analysis, we concentrate on the following variables, which are consistent across both the experimental and non-experimental datasets.

The primary outcome of interest is $Y_t$, representing the participants' self-reported earnings. Specifically, we analyze real earnings adjusted to 1982 dollars, in line with the methodology established by \citet{lalonde1986evaluating}.
Next, $W$ serves as a binary indicator of treatment, denoting whether an individual was assigned to the NSW program.
Additionally, we incorporate demographic variables such as race, and education level (indicating high school dropout status) as auxiliary covariates commonly used in M, DIDM, and DID.

The straightforward mean-difference estimate of the NSW program's impact on male participants within the experimental sample is \$886, a figure that is statistically significant at the 10 percent level. This experimental estimate serves as the benchmark against which we evaluate the non-experimental M, DID, and DIDM estimands. The NSW participants are economically disadvantaged, exhibiting low pre-program earnings and a decline in earnings from 1974 to 1975 that is widely recognized as ``Ashenfelter's dip.'' As discussed in Section~\ref{sec:lagged_outcome}, this pre-program pattern bears not on the magnitude of the experimental impact but on the selection dynamics that drive the non-experimental estimands.

We chose not to utilize the Dehejia-Wahba (DW) dataset from their 1999 and 2002 studies in our analysis, based on several critical considerations. First, \citet{smith2005does} and \citet{dehejia2005practical} have debated the validity of the DW dataset, particularly questioning its representativeness and the potential biases introduced by the sample restrictions employed. DW exclude approximately 40 percent of the original LaLonde (1986) sample in order to include two years of pre-program earnings data in their model of program participation. This exclusion results in lower mean earnings in 1974 and 1975 for the DW sample compared to the larger LaLonde sample, leading to a significantly different and larger experimental impact estimate of \$1,794, which is more than double that of the LaLonde sample.

Additionally, the data we obtained from the authors of \citet{heckman1998_2matching} includes all pretreatment earnings outcomes, even for the subsample omitted in the DW dataset. This comprehensive dataset removes the need to impose arbitrary sample restrictions, which could otherwise increase sampling uncertainty and potentially bias the results. By employing the full sample, we ensure a broader and more representative analysis, thus upholding the principles of internal and external validity.

\subsection{Details of the JTPA Data}\label{sec:appendix:jtpa}

\subsubsection{Data Description}
We closely follow the description laid out in \cite{heckman1998characterizing}.
Our primary dataset originates from a randomized evaluation of the Job Training Partnership Act (JTPA) program, conducted across four training centers in the United States. The JTPA program aimed to provide job training and employment services to economically disadvantaged individuals, dislocated workers, and others who faced significant barriers to employment.

The dataset includes information on both experimental treatment and control groups, as well as a non-experimental comparison group of eligible nonparticipants (ENPs) who were located in the same labor markets but chose not to participate in the program at the time of random assignment. Random assignment occurred when individuals applied and were accepted into the JTPA program, ensuring that participants were comparable at the baseline. Control group members were excluded from receiving JTPA services for 18 months after random assignment.

The data collection involved comprehensive surveys administered to all groups, including the ENPs. These surveys captured detailed retrospective information on labor force participation, job spells, earnings, marital status, and other demographic characteristics. In this analysis, we focus on a sample of adult males aged 22 to 54, following \cite{heckman1998characterizing}.

\subsubsection{Key Variables}

In our analysis, we concentrate on the following key variables, which are consistent across both the experimental and non-experimental datasets.

The primary outcome of interest is $Y_t$, representing the participants' earnings. Specifically, we analyze real earnings over a specific period, adjusting for inflation where necessary. The variable $W$ serves as a binary indicator of treatment, denoting whether an individual was assigned to the JTPA program.

Additionally, we include demographic covariates such as age, which are commonly utilized in various econometric models like Matching (M), Difference-in-Differences Matching (DIDM), and Difference-in-Differences (DID). Because we could not reproduce the exact sample-selection criteria of \citet{heckman1998characterizing}, the JTPA evidence reported in this paper is taken from the published benchmark estimates summarized in \citet{chabe2017should} and \citet{smith2005does} rather than from an independent re-estimation; the variable definitions above describe the constructs underlying those estimates.

\subsection{Details of the Education Data}\label{sec:appendix:educ}
\subsubsection{Data Description}\label{sec:data}

Our primary observational data come from the administrative records of a large urban school district. The dataset includes information on approximately two million children in grades 3 through 8, covering those born between 1966 and 2001.

This dataset encompasses around 15 million test scores in English language arts and math. Due to changes in the testing regime over the past 20 years,such as the transition from district-specific to statewide tests and variations in test timing, we have normalized the test scores by year and grade to have a mean of zero and a standard deviation of one, following established research practices \citep[e.g.,][]{staiger2010searching}. This normalization ensures comparability with other samples across the nation. We also imputed missing test scores using cohort-specific means based on year of birth to account for cohort-level heterogeneity.


\subsubsection{Key Variables}

We focus on the following variables in our analysis. The primary outcome of interest is \textbf{$Y_t$}, representing students' test scores, specifically standardized scores that average results from both mathematics and English language arts.

Secondly, \textbf{$W$} is a binary indicator denoting treatment, which in this context refers to the assignment to a small class size.

Lastly, we use gender, race, and eligibility for free lunch to define subpopulations for further analysis.

\section{Empirical Evidence for the Assumptions}\label{sec:empirical_assumption}

Sections~\ref{sec:double_bracketing}--\ref{sec:general} provide theoretical conditions under which the double-bracketing relationship
\(
\attm \;\leq\; \attdidm \;\leq\; \attdid
\)
holds. We now examine whether the underlying assumptions, namelyAssumption~\ref{a:special} in the simple setup and, implicitly, Assumption~\ref{a:general} in richer designs, are supported by the data sets used in our empirical double-bracketing analysis in Section~\ref{sec:empirical_double_bracketing}.

We focus on the nonexperimental comparison samples for the NSW program (CPS and PSID; Section~\ref{sec:nsw}) and on the educational program of \citet{athey2025combining} (Section~\ref{sec:educ}).\footnote{Although we obtained the JTPA microdata, we were unable to reproduce the exact sample-selection criteria of the original study despite repeated communication with the authors; we therefore rely on the published benchmark estimates for the JTPA application and do not report assumption diagnostics for it.}

\subsection{Job-Training Programs}\label{sec:nsw:assumption}

Section~\ref{sec:nsw} documented robust empirical evidence for the double-bracketing relationship $\attm \leq \attdidm \le \attdid$ for the NSW program under both the CPS and PSID comparison samples. Here we examine each of the three components (i)--(iii) of Assumption~\ref{a:special} in those data.

In the CPS sample, we trim the extreme upper tail of baseline earnings by restricting to $Y_0<26{,}000$, which removes observations above a heavily top-coded / heaped region. In the PSID sample, we restrict to $Y_0<50{,}000$, removing upper-tail controls outside the treated support.

\paragraph{Negative selection on $Y_0$.}
Assumption~\ref{a:special}\eqref{a:special:selection} requires
\[
E[Y_0\mid W=0,Y_{-s}=y]
\;\geq\;
E[Y_0\mid W=1,Y_{-s}=y]
\quad
\text{for all } y.
\]
We estimate the two conditional expectations nonparametrically using partitioning-based least squares regression,\footnote{\label{foot:nonparametric_estimation}We use the \texttt{R} package \texttt{lspartition} of \citet{cattaneo2019lspartition} with default settings.} and plot
\[
y\mapsto E[Y_0\mid W=0,Y_{-s}=y],\qquad
y\mapsto E[Y_0\mid W=1,Y_{-s}=y],
\]
with pointwise 95\% confidence bands in Figure~\ref{fig:nsw:ass1}. The inequality holds uniformly in $y$ in both CPS and PSID data, providing support for Assumption~\ref{a:special}\eqref{a:special:selection}.

\begin{figure}[t]
\centering
CPS\\
\includegraphics[width=0.55\textwidth]{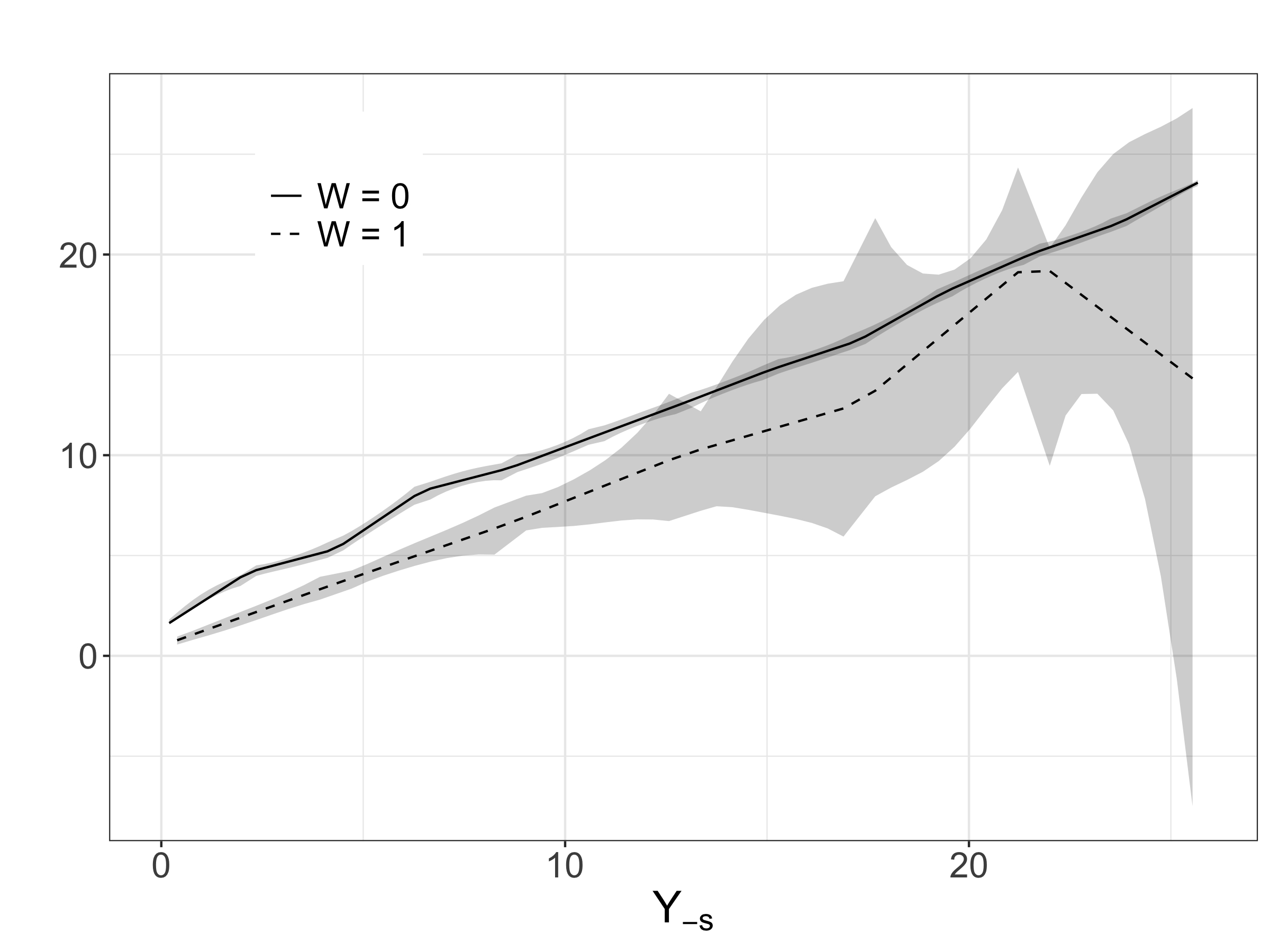}
\\[0.5em]
PSID\\
\includegraphics[width=0.55\textwidth]{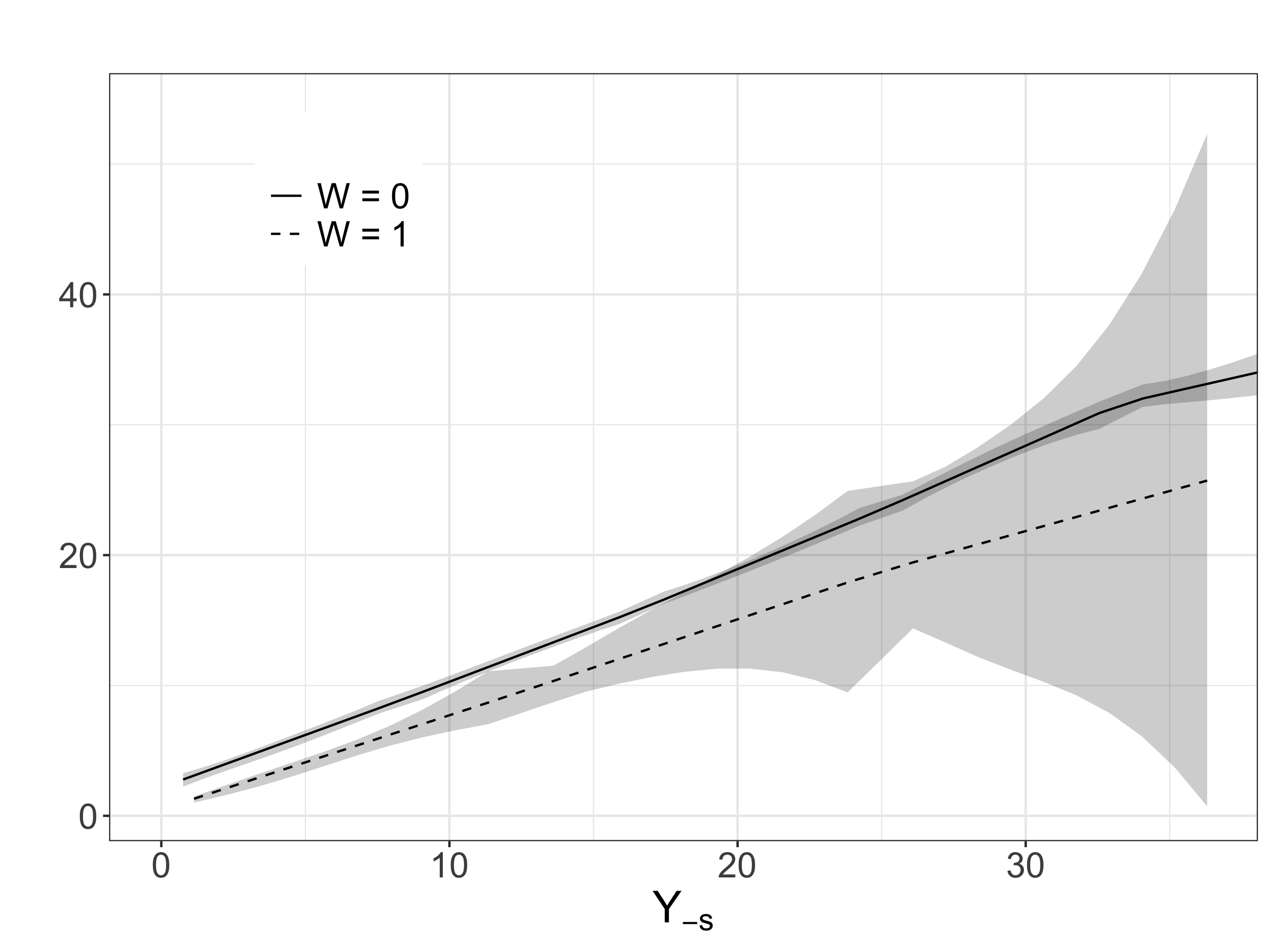}
\caption{Evidence on Assumption~\ref{a:special}\eqref{a:special:selection} for the NSW program, CPS (top) and PSID (bottom). Solid and dashed lines show estimates of $E[Y_0\mid W=0,Y_{-s}=y]$ and $E[Y_0\mid W=1,Y_{-s}=y]$, respectively. Shaded regions are 95\% confidence bands. Both axes are in thousands of U.S.\ dollars. See Footnote~\ref{foot:nonparametric_estimation} for estimation details.}
\label{fig:nsw:ass1}
\end{figure}

\paragraph{Distributional dominance in $Y_{-s}$.}
Assumption~\ref{a:special}\eqref{a:special:sd} requires $F_{Y_{-s}\mid W=0}$ to first-order stochastically dominate $F_{Y_{-s}\mid W=1}$. Figure~\ref{fig:nsw:ass2} plots empirical CDFs of $Y_{-s}$ separately by $W$, with 95\% confidence bands. In both CPS and PSID, the control CDF lies uniformly below the treated CDF, consistent with Assumption~\ref{a:special}\eqref{a:special:sd}.

\begin{figure}[t]
\centering
\begin{tabular}{cc}
CPS & PSID\\
\includegraphics[width=0.5\textwidth]{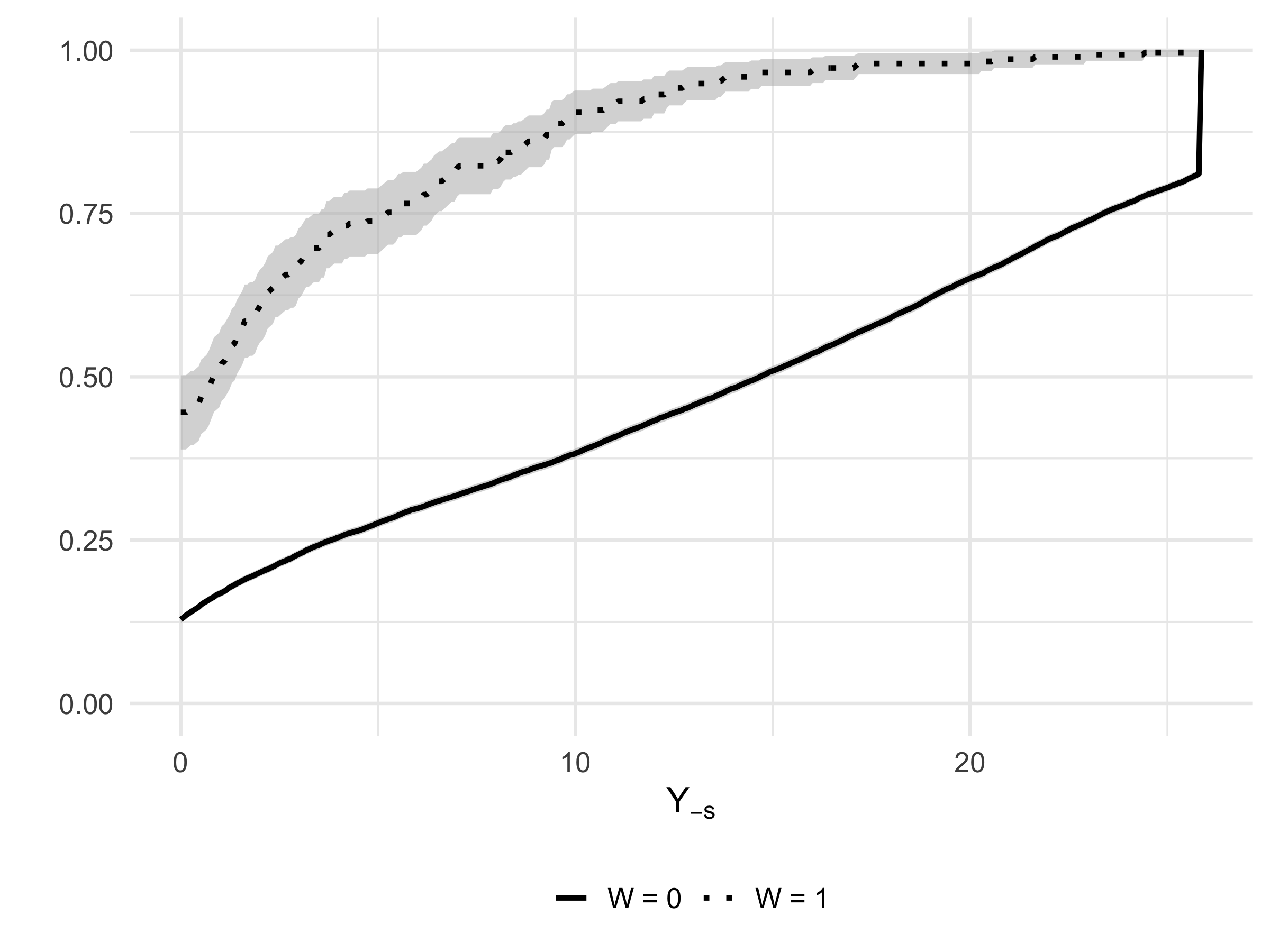}
&
\includegraphics[width=0.5\textwidth]{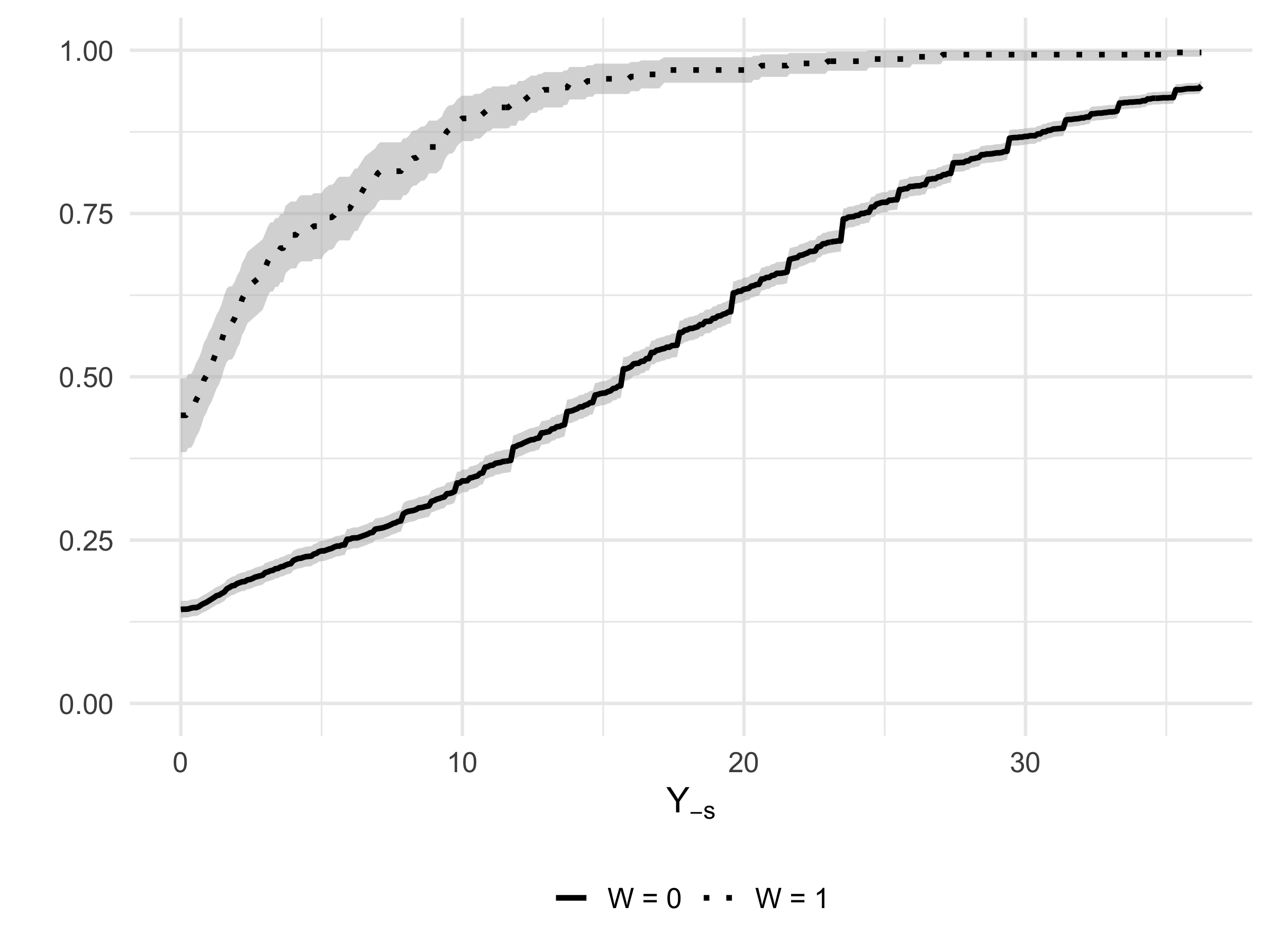}
\end{tabular}
\caption{Evidence on Assumption~\ref{a:special}\eqref{a:special:sd} for the NSW program, CPS (left) and PSID (right). Solid and dotted lines show empirical CDFs of $Y_{-s}$ for $W=0$ and $W=1$, respectively, with 95\% confidence bands. Horizontal axes are in thousands of U.S.\ dollars.}
\label{fig:nsw:ass2}
\end{figure}

\paragraph{Decreasing untreated growth.}
Assumption~\ref{a:special}\eqref{a:special:dec} requires the function
\[
\Phi(y) := E[Y_1-Y_0\mid W=0,Y_{-s}=y]
\]
to be weakly decreasing. We estimate $\Phi$ nonparametrically using partitioning-based least squares regression and plot the resulting curve with 95\% confidence bands in Figure~\ref{fig:nsw:ass3}. In both CPS and PSID, the estimated function is consistent with a weakly decreasing pattern, and the hypothesis of monotonicity is not refuted by the bands.

\begin{figure}[t]
\centering
CPS\\
\includegraphics[width=0.55\textwidth]{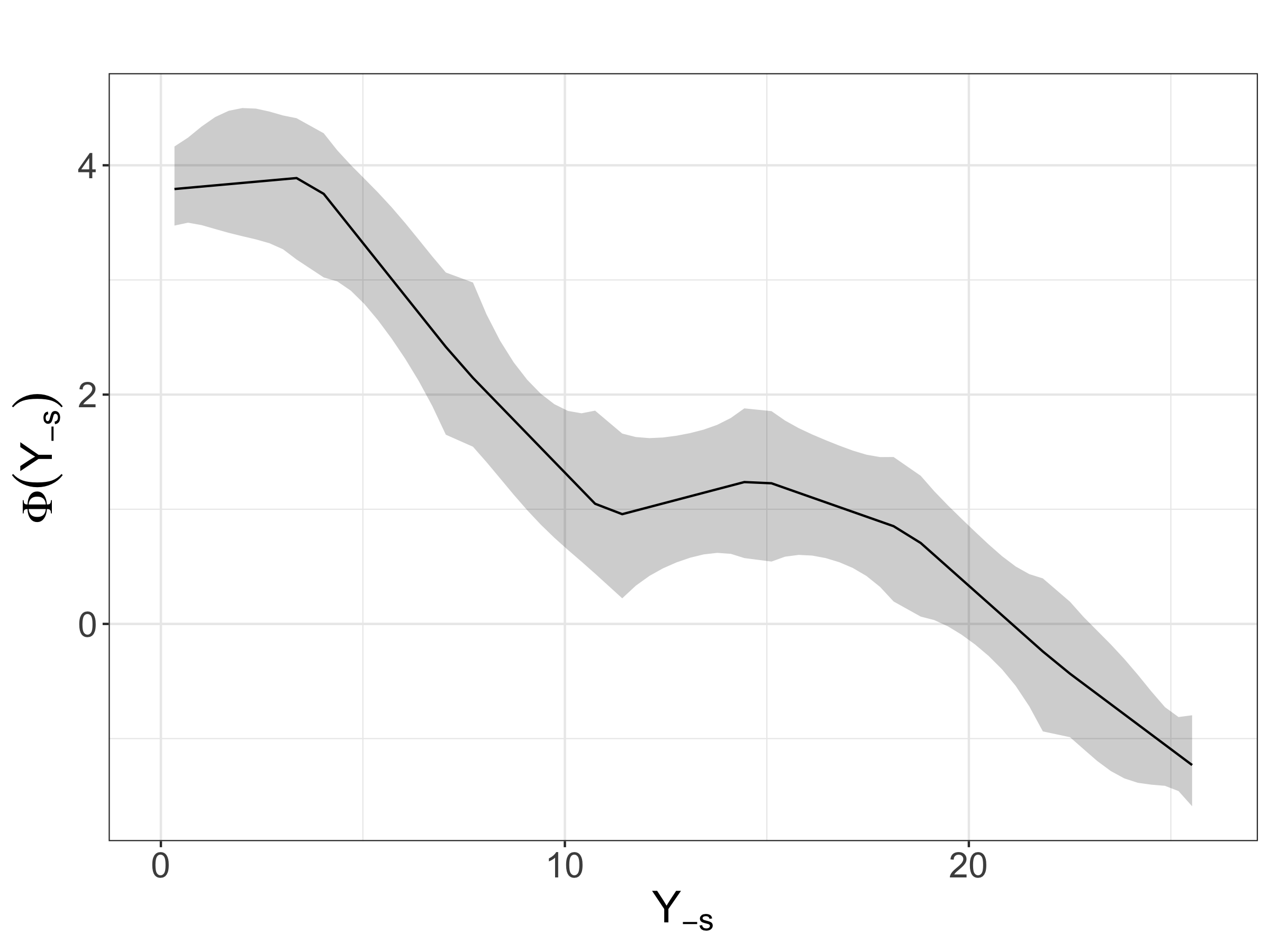}
\\[0.5em]
PSID\\
\includegraphics[width=0.55\textwidth]{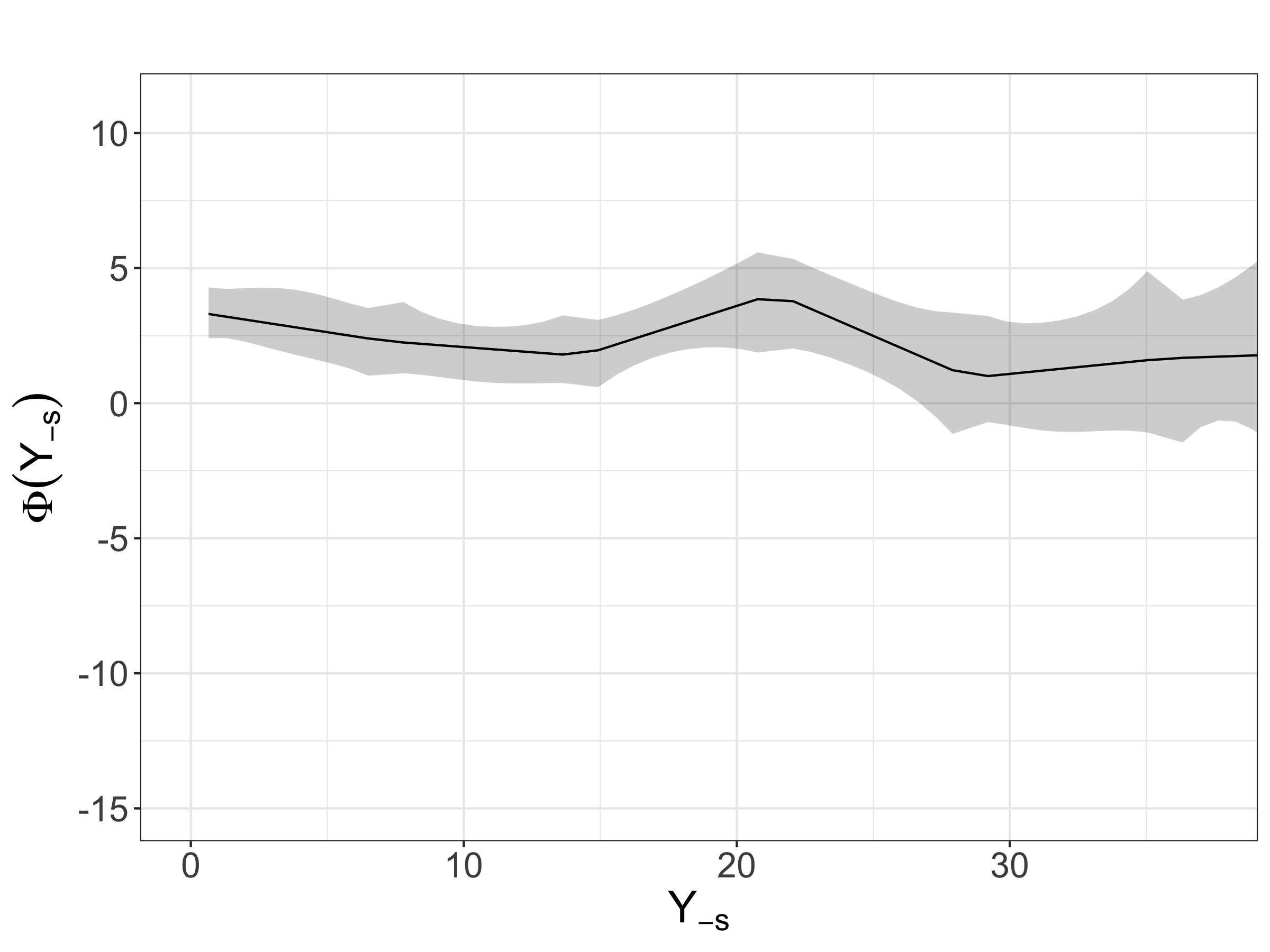}
\caption{Evidence on Assumption~\ref{a:special}\eqref{a:special:dec} for the NSW program, CPS (top) and PSID (bottom). The function $\Phi(y)=E[Y_1-Y_0\mid W=0,Y_{-s}=y]$ is estimated nonparametrically with 95\% confidence bands. Both axes are in thousands of U.S.\ dollars. See Footnote~\ref{foot:nonparametric_estimation} for estimation details.}
\label{fig:nsw:ass3}
\end{figure}

Appendix~\ref{sec:additional:nsw} reports analogous diagnostics after residualizing outcomes with respect to auxiliary covariates; the conclusions are unchanged. Taken together, these diagnostics provide empirical support for Assumption~\ref{a:special} in the NSW applications, consistent with the observed double bracketing in Section~\ref{sec:nsw} and the theoretical predictions of Proposition~\ref{lem:double_bracketing} and Theorem~\ref{theorem:main}.

\subsection{Educational Program}\label{sec:educ:assumption}

Section~\ref{sec:educ} showed that the double-bracketing relationship $\attm \leq \attdidm \leq \attdid$ also holds robustly in the educational intervention studied by \citet{athey2025combining}. We now examine Assumption~\ref{a:special} in that data set.

As before, we focus on the three components of Assumption~\ref{a:special}.

\paragraph{Negative selection on $Y_0$.}
We estimate $E[Y_0\mid W=w,Y_{-s}=y]$ for $w\in\{0,1\}$ using partitioning-based least squares regression, and plot the two conditional expectations with 95\% confidence bands in Figure~\ref{fig:educ:ass1}. The inequality
\[
E[Y_0\mid W=0,Y_{-s}=y]
\;\geq\;
E[Y_0\mid W=1,Y_{-s}=y]
\]
holds across the support of $Y_{-s}$, providing strong support for Assumption~\ref{a:special}\eqref{a:special:selection}.

\begin{figure}[t]
\centering
\includegraphics[width=0.55\textwidth]{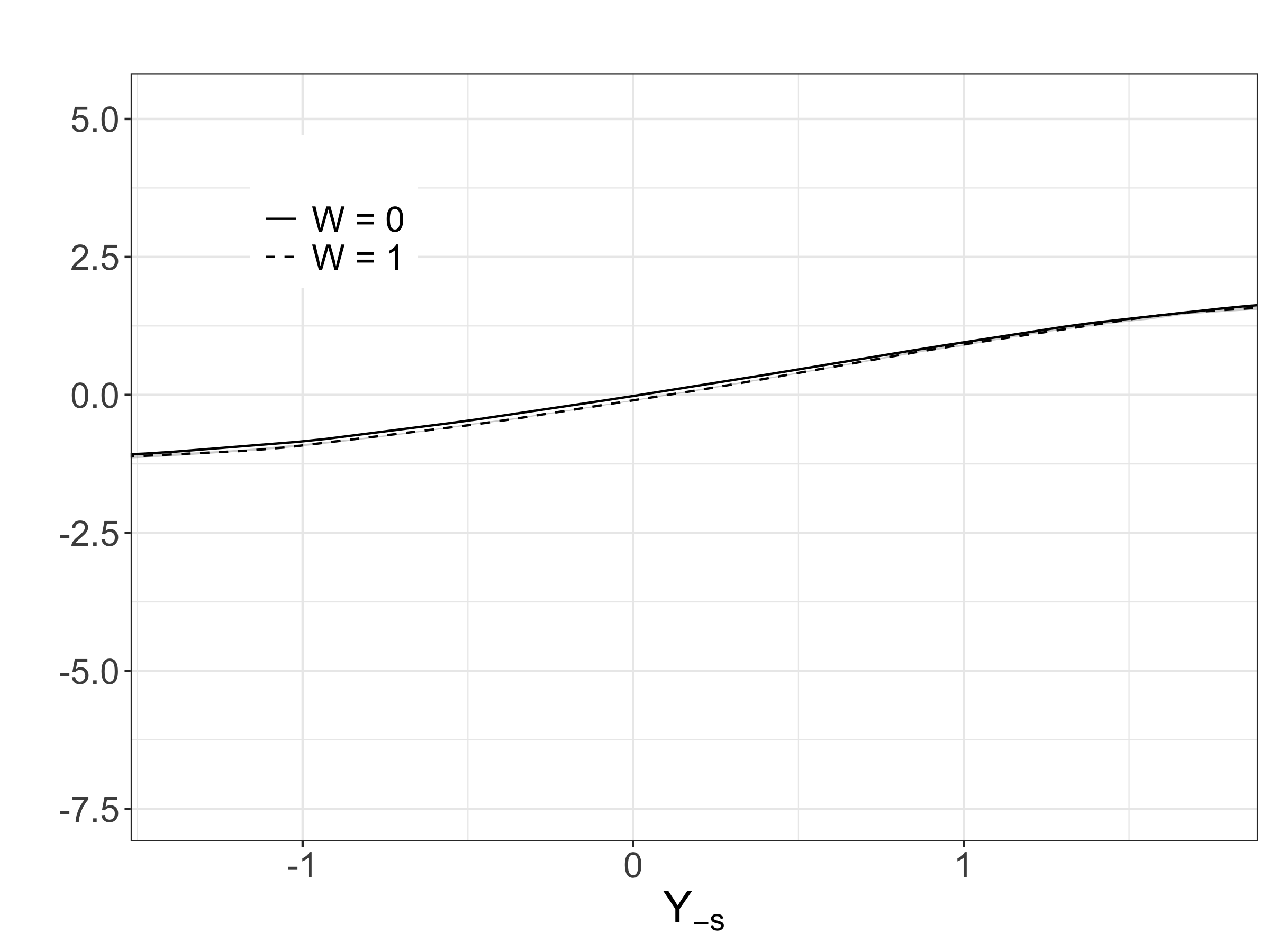}
\caption{Evidence on Assumption~\ref{a:special}\eqref{a:special:selection} for the educational program. Solid and dashed lines show estimates of $E[Y_0\mid W=0,Y_{-s}=y]$ and $E[Y_0\mid W=1,Y_{-s}=y]$, respectively, with 95\% confidence bands (barely visible due to large sample size).}
\label{fig:educ:ass1}
\end{figure}

\paragraph{Distributional dominance in $Y_{-s}$.}
Figure~\ref{fig:educ:ass2} plots empirical CDFs of $Y_{-s}$ for treated and control units, with 95\% confidence bands. The control CDF lies uniformly below the treated CDF, indicating first-order stochastic dominance and supporting Assumption~\ref{a:special}\eqref{a:special:sd}.

\begin{figure}[t]
\centering
\includegraphics[width=0.5\textwidth]{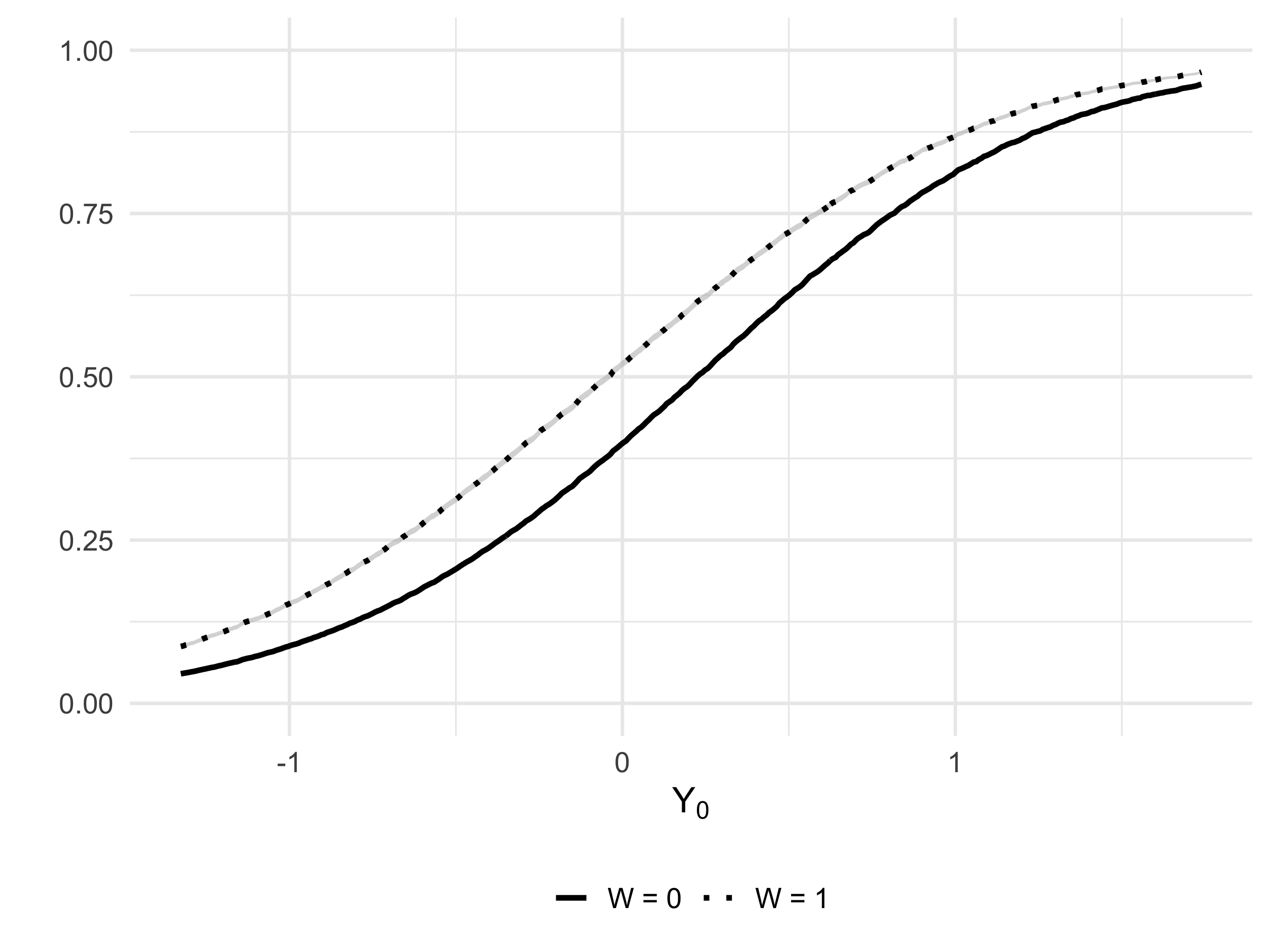}
\caption{Evidence on Assumption~\ref{a:special}\eqref{a:special:sd} for the educational program. Solid and dotted lines show empirical CDFs of $Y_{-s}$ for $W=0$ and $W=1$, respectively, with 95\% confidence bands.}
\label{fig:educ:ass2}
\end{figure}

\paragraph{Decreasing untreated growth.}
We estimate
\[
\Phi(y)=E[Y_1-Y_0\mid W=0,Y_{-s}=y]
\]
nonparametrically and plot the estimated function with 95\% confidence bands in Figure~\ref{fig:educ:ass3}. The estimated $\Phi$ is weakly decreasing in $y$, and the bands are consistent with monotonicity, supporting Assumption~\ref{a:special}\eqref{a:special:dec}.

\begin{figure}[t]
\centering
\includegraphics[width=0.55\textwidth]{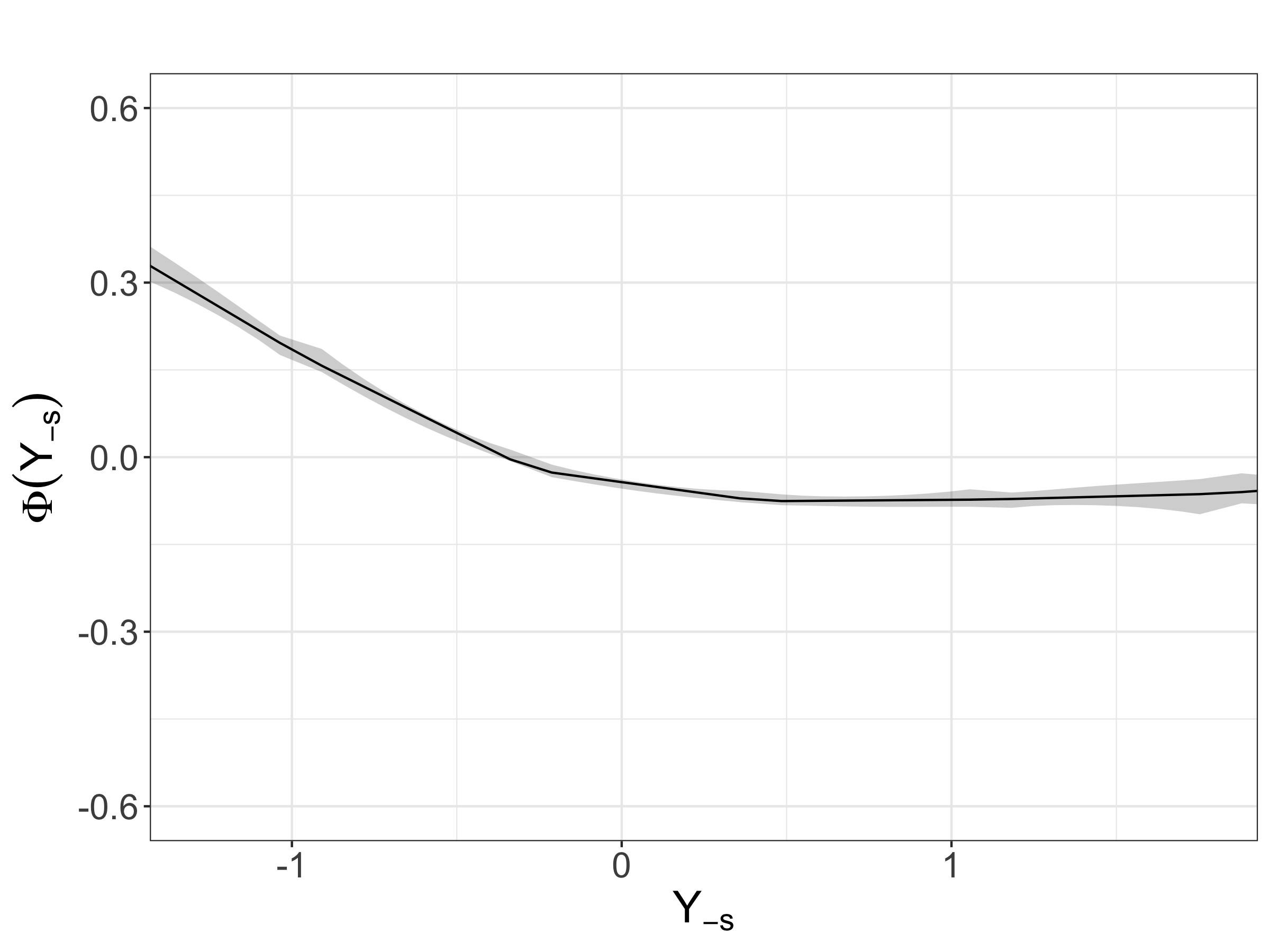}
\caption{Evidence on Assumption~\ref{a:special}\eqref{a:special:dec} for the educational program. The function $\Phi(y)=E[Y_1-Y_0\mid W=0,Y_{-s}=y]$ is estimated nonparametrically with 95\% confidence bands.}
\label{fig:educ:ass3}
\end{figure}

Appendix~\ref{sec:additional:educ} presents diagnostics after residualizing outcomes with respect to observed covariates; the conclusions remain unchanged. Thus, Assumption~\ref{a:special} appears to hold robustly in this setting as well, in line with the observed double bracketing in Section~\ref{sec:educ}.

\subsection{Additional Empirical Evidence of the Assumptions}

In the above, we examine Assumption \ref{a:special} for each empirical application, focusing on the primary variables without considering auxiliary covariates. Here, we provide additional empirical evidence for Assumption \ref{a:special}, now accounting for the auxiliary covariates that were omitted in the main text.

To incorporate these covariates while maintaining the same diagnostic structure as in the main text, we first residualize the relevant variables with respect to the observed auxiliary covariates and then apply the same plotting procedures to the residualized variables. In this appendix, we report residualized counterparts of the figures for Assumption \ref{a:special} \eqref{a:special:selection} and Assumption \ref{a:special} \eqref{a:special:dec}. 

\subsubsection{Additional Empirical Analyses for the NSW Data}\label{sec:additional:nsw}

Figure \ref{fig:nsw:ass1_resid} presents the counterparts of Figure \ref{fig:nsw:ass1}, after residualizing with respect to the auxiliary covariates.
Observe that the required inequality $E[Y_0|W=0,Y_{-s}=y] \ge E[Y_0|W=1,Y_{-s}=y]$ is still satisfied both for the CPS and PSID data sets, providing robust evidence in support of our Assumption \ref{a:special} \eqref{a:special:selection} even after accounting for the auxiliary covariates.
\begin{figure}[t]
\centering
CPS\\
\includegraphics[width=0.5\textwidth]{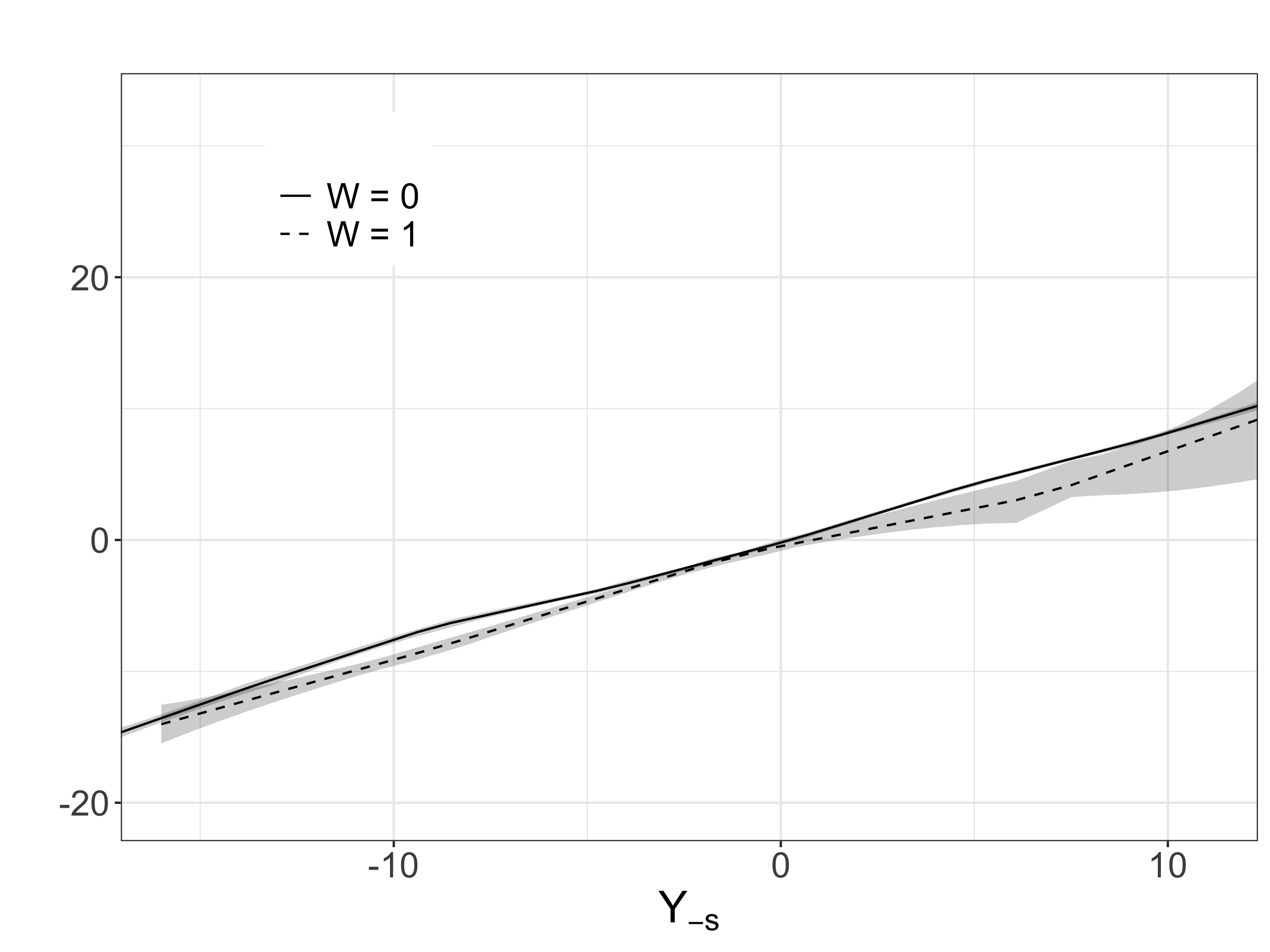}
\\
PSID\\
\includegraphics[width=0.5\textwidth]{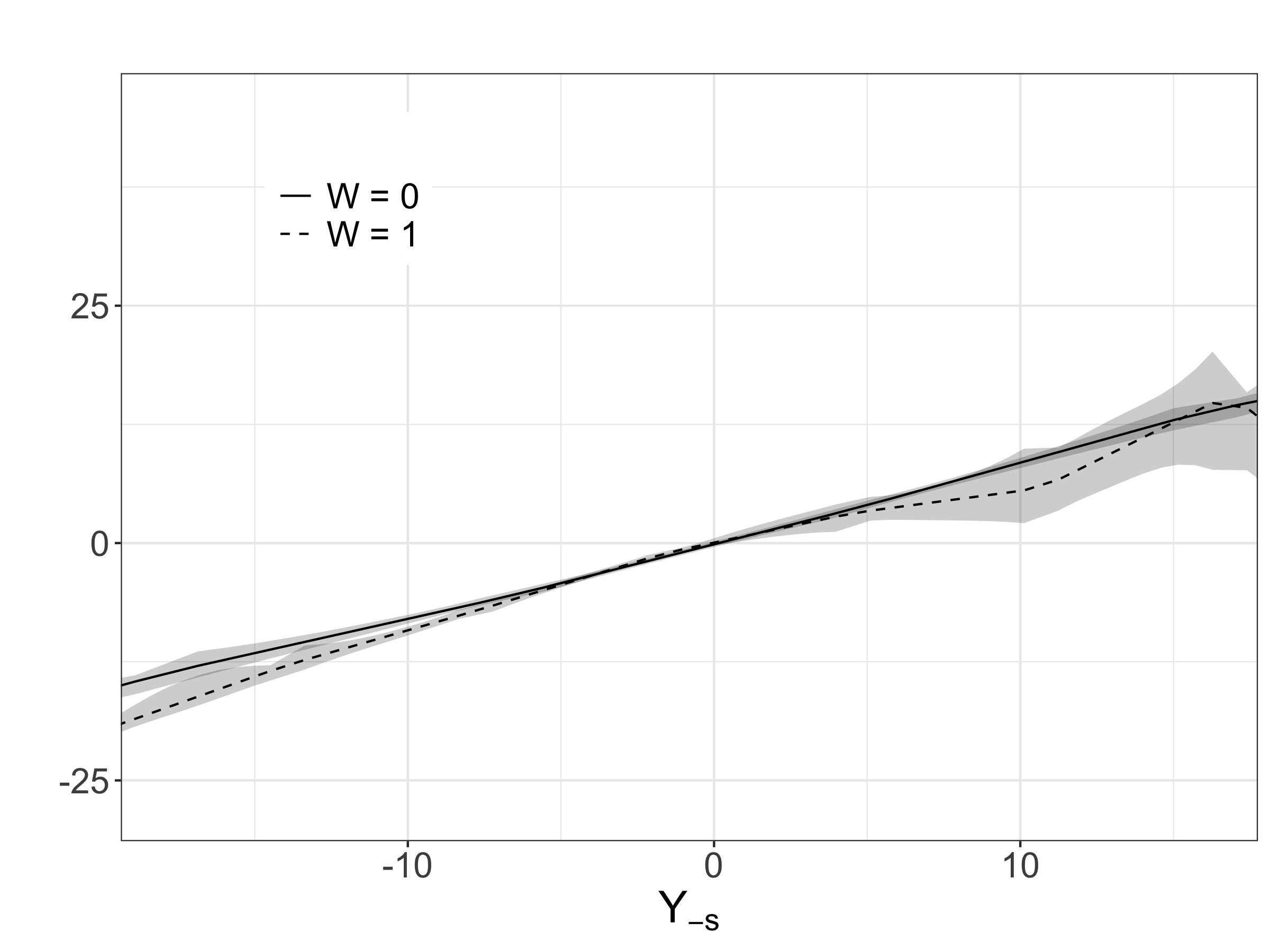}
\caption{Evidence of Assumption \ref{a:special} \eqref{a:special:selection} for the NSW program using the CPS (top) and PSID (bottom) data sets after residualizing with respect to auxiliary covariates. The solid and dashed lines represent estimates of the conditional expectation functions $y \mapsto E[Y_0|W=0,Y_{-s}=y]$ and $y \mapsto E[Y_0|W=1,Y_{-s}=y]$, respectively, computed on the residualized variables. Shaded areas denote 95\% confidence bands. Both axes are measured in thousands of U.S. dollars. For details on the estimation method, refer to Footnote \ref{foot:nonparametric_estimation}.}${}$
\label{fig:nsw:ass1_resid}
\end{figure}

Figure \ref{fig:nsw:ass3_resid} presents the counterparts of Figure \ref{fig:nsw:ass3}, after residualizing with respect to the auxiliary covariates.
Observe that the regression curves remain non-increasing up to sampling uncertainty for both the CPS and PSID data sets, with the hypothesis of monotonicity not refuted by the 95\% confidence bands, providing evidence in support of our Assumption \ref{a:special} \eqref{a:special:dec} even after accounting for the auxiliary covariates.
\begin{figure}[t]
\centering
CPS\\
\includegraphics[width=0.5\textwidth]{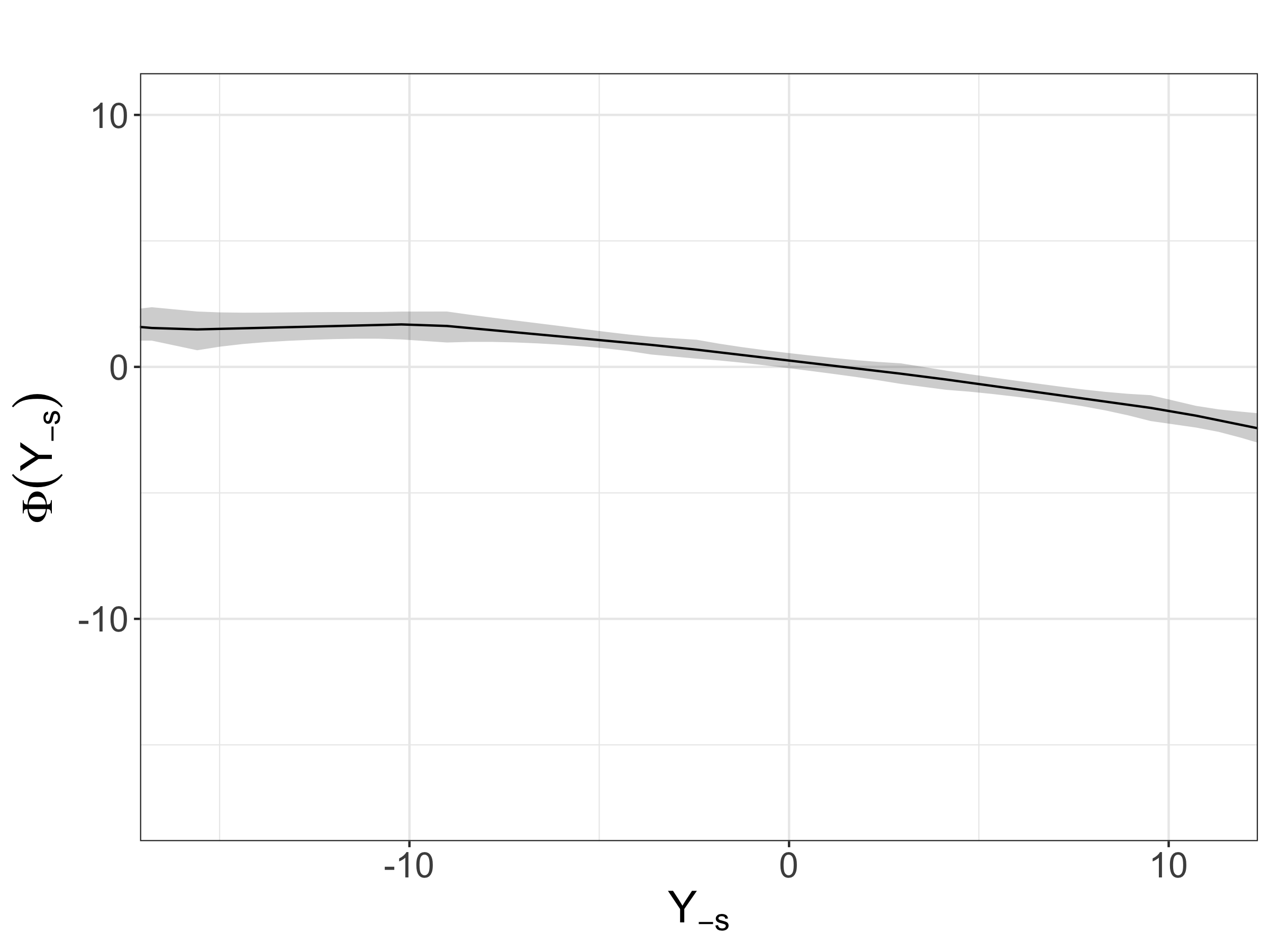}
\\
PSID\\
\includegraphics[width=0.5\textwidth]{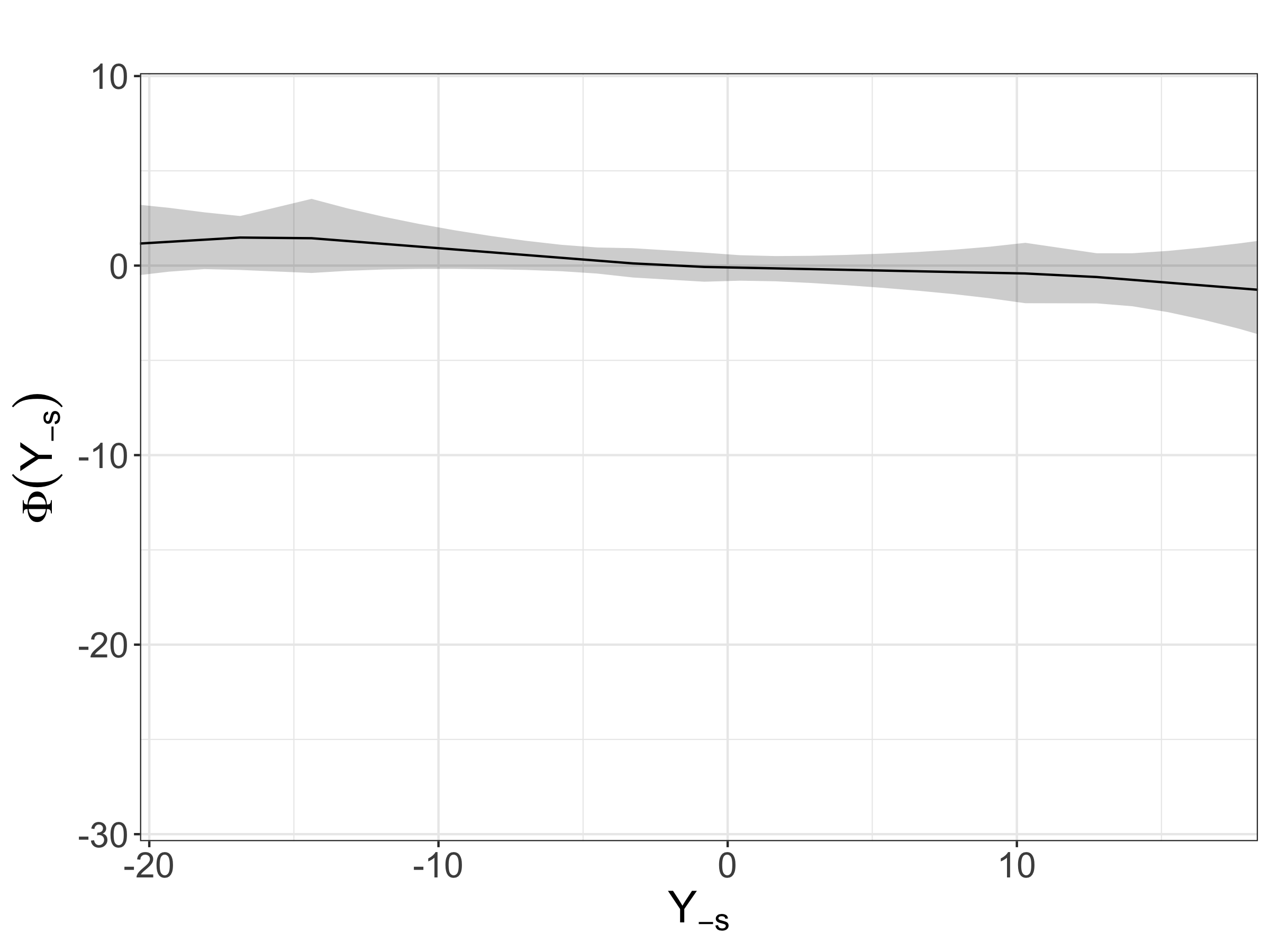}
\caption{Evidence of Assumption \ref{a:special} \eqref{a:special:dec} for the NSW program using the CPS (top) and PSID (bottom) data sets after residualizing with respect to auxiliary covariates. The conditional expectation function $\Phi$ is estimated non-parametrically by the partitioning-based least squares regression on the residualized variables. The estimates, along with their 95\% confidence bands, are plotted. Both the vertical and horizontal axes are measured in thousands of U.S. dollars.}${}$
\label{fig:nsw:ass3_resid}
\end{figure}

\newpage${}$\newpage

\subsubsection{Additional Empirical Analyses for the Education Data}\label{sec:additional:educ}

Figure \ref{fig:educ:ass1_resid} presents the counterparts of Figure \ref{fig:educ:ass1}, after residualizing with respect to the auxiliary covariates.
Observe that the required inequality $E[Y_0|W=0,Y_{-s}=y] \ge E[Y_0|W=1,Y_{-s}=y]$ is still satisfied, providing robust evidence in support of our Assumption \ref{a:special} \eqref{a:special:selection} even after accounting for the auxiliary covariates.
\begin{figure}[t]
\centering
\includegraphics[width=0.5\textwidth]{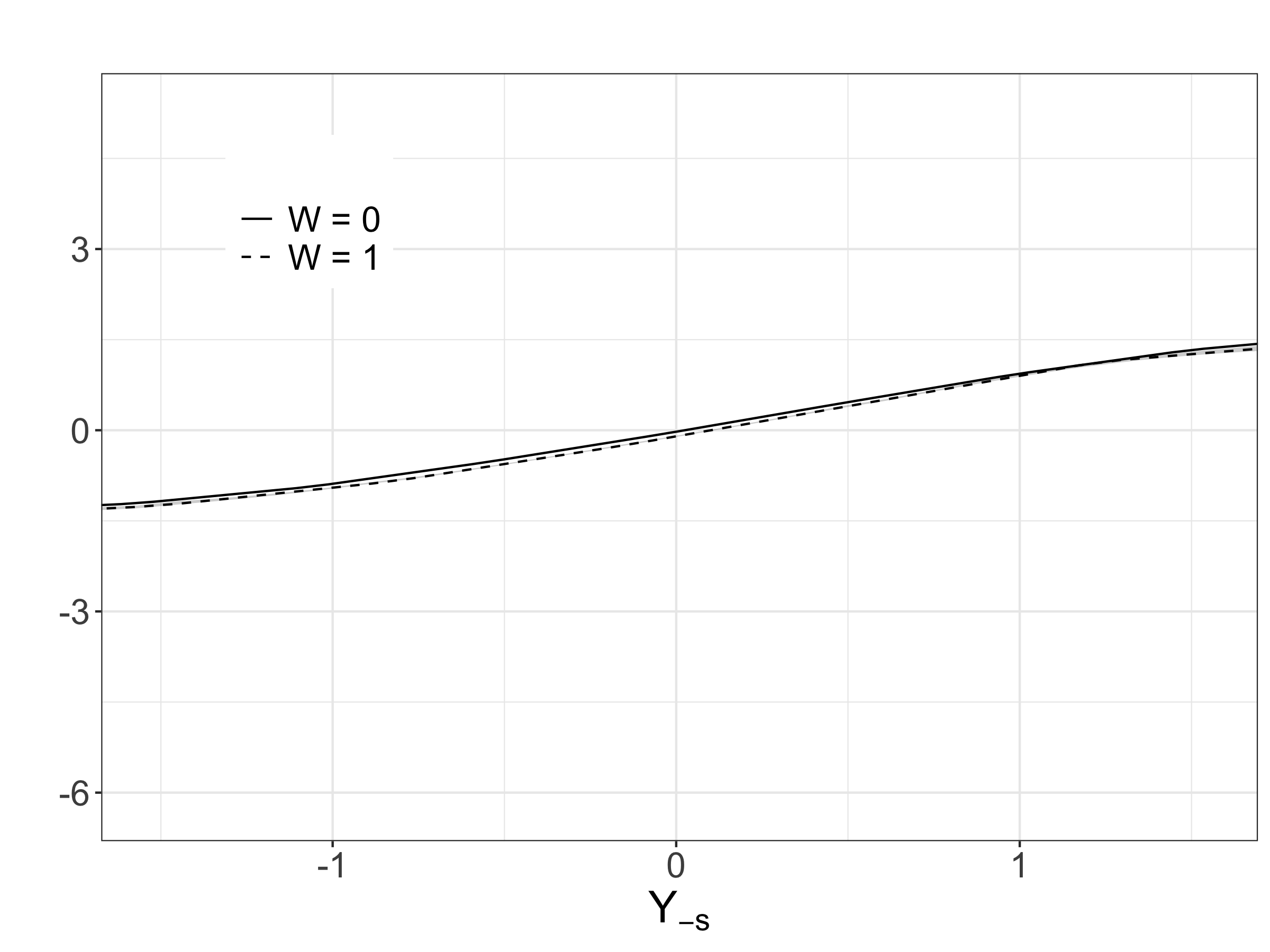}
\caption{Evidence of Assumption \ref{a:special} \eqref{a:special:selection} for the educational program after residualizing with respect to auxiliary covariates. The solid and dashed lines represent estimates of the conditional expectation functions $y \mapsto E[Y_0|W=0,Y_{-s}=y]$ and $y \mapsto E[Y_0|W=1,Y_{-s}=y]$, respectively, computed on the residualized variables. Shaded areas denote 95\% confidence bands. For details on the estimation method, refer to Footnote \ref{foot:nonparametric_estimation}.}${}$
\label{fig:educ:ass1_resid}
\end{figure}

Figure \ref{fig:educ:ass3_resid} presents the counterparts of Figure \ref{fig:educ:ass3}, after residualizing with respect to the auxiliary covariates.
Observe that the regression curves remain non-increasing up to sampling uncertainty, with the hypothesis of monotonicity not refuted by the 95\% confidence bands, providing evidence in support of our Assumption \ref{a:special} \eqref{a:special:dec} even after accounting for the auxiliary covariates.
\begin{figure}[t]
\centering
\includegraphics[width=0.5\textwidth]{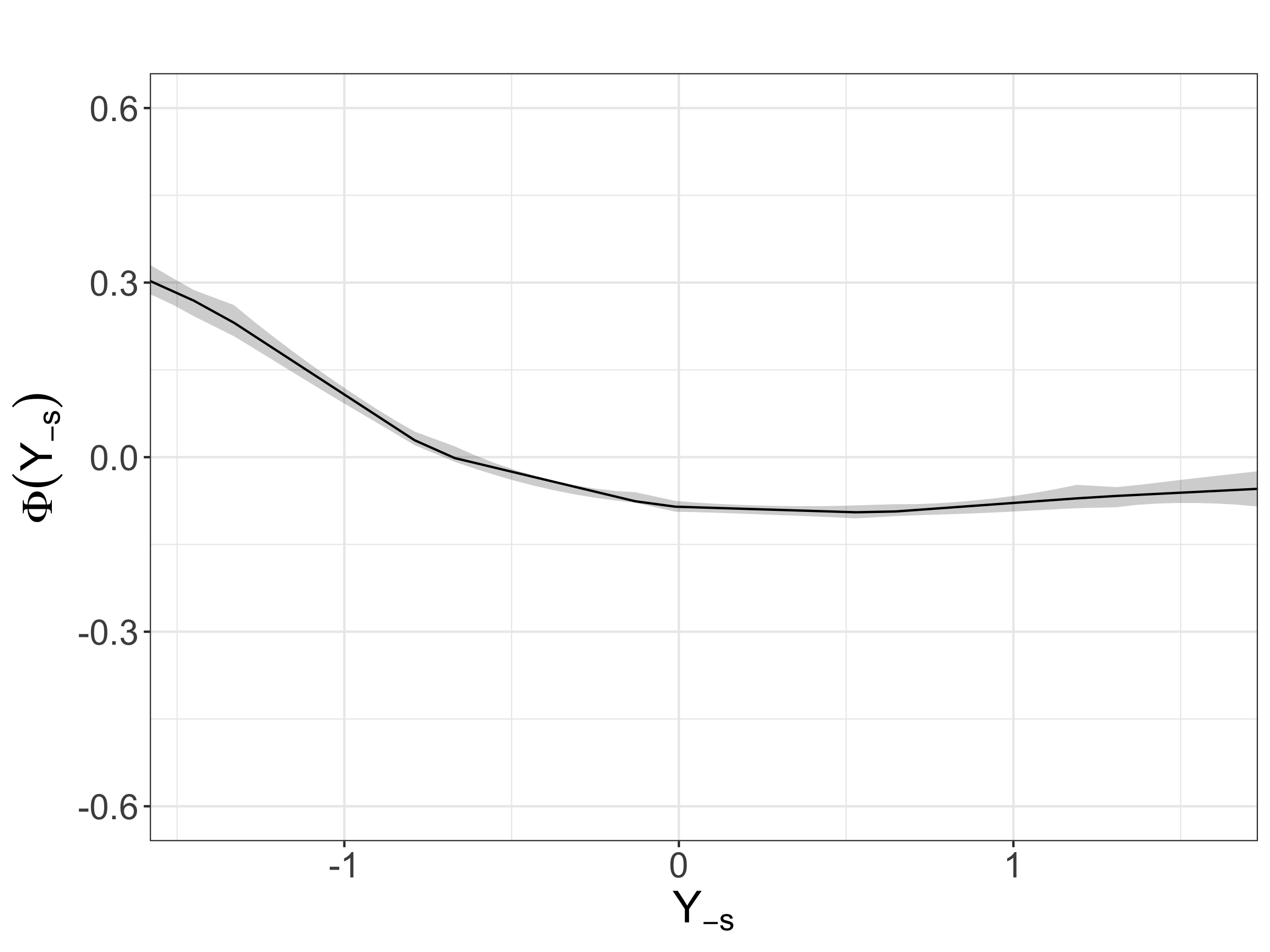}
\caption{Evidence of Assumption \ref{a:special} \eqref{a:special:dec} for the educational program after residualizing with respect to auxiliary covariates. The conditional expectation function $\Phi$ is estimated non-parametrically by the partitioning-based least squares regression on the residualized variables. The estimates, along with their 95\% confidence bands, are plotted.}${}$
\label{fig:educ:ass3_resid}
\end{figure}

\newpage${}$\newpage

\section{Mutual Non-Nestedness of The M, DID, and DIDM Conditions}\label{sec:non_nested}

The three identifying assumptions underlying the respective estimands
$\attm$, $\attdid$, and $\attdidm$ each restrict the DGP in a distinct way, and none implies another.\footnote{Throughout, we consider the general case $s>0$, in which the matching variable $Y_{-s}$ is distinct from the differencing baseline $Y_0$. In the boundary case $s=0$ we have $Y_{-s}=Y_0$, so the $Y_0$ terms cancel in the conditional DIDM contrast and DIDM collapses to M; Conditions M and DIDM then coincide, and the M-versus-DIDM separation below should be read as concerning $s>0$.}
To make this non-nestedness concrete, consider log earnings $Y_t$ as the outcome.  
Denote pre-program earnings by $Y_{-s}$ and let $W\in\{0,1\}$ indicate the treatment status.  
Four stylised DGPs show that every pair of adjacent conditions can be separated. 
A more formal illustration and mathematical derivations are provided in Appendix \ref{sec:detailed_calc_nonnest}.

\begin{itemize}
\item \textbf{Condition M holds but Condition DIDM fails (levels versus trends).}\\
  Within each $Y_{-s}$ cell, untreated earnings \emph{levels} are identical across $W$ \big(i.e., $E[Y_{1}(0)\mid Y_{-s},W]=E[Y_{1}(0)\mid Y_{-s}]$\big), yet the \emph{growth} $Y_{1}(0)-Y_{0}(0)$ is higher for the high-ability trainees who selected into the program.  
  Matching recovers levels, but conditional trends diverge. Hence DIDM fails. See Appendix \ref{sec:nonnested1} for a formal mathematical exposition illustrating this case.

\item \textbf{Condition DIDM holds but Condition M fails (levels versus trends).}\\
  The union that runs the training center negotiates a permanent wage premium $\mu>0$ for enrollees:  
  $E[Y_{t}(0)\mid Y_{-s},W=1]=E[Y_{t}(0)\mid Y_{-s},W=0]+\mu$.  
  Year-on-year earnings growth, however, is identical once we condition on $Y_{-s}$, so conditional parallel trends (DIDM) hold while level-matching (M) fails. See Appendix \ref{sec:nonnested2} for a formal mathematical exposition illustrating this case. 

\item \textbf{Condition DIDM holds but Condition DID fails (aggregation/composition).}\\
  Trainees are drawn disproportionately from the bottom of the earnings distribution.  
  Within each $Y_{-s}$ stratum, untreated growth is the same across $W$ (conditional parallel trend), but low earners naturally catch up faster, so the \emph{average} untreated growth $E[Y_{1}(0)-Y_{0}(0)\mid W]$ is larger in the treated group.  
  Aggregation, therefore, breaks unconditional DID while leaving DIDM intact. See Appendix \ref{sec:nonnested3} for a formal mathematical exposition illustrating this case.

\item \textbf{Condition DID holds but Condition DIDM fails (aggregation/cancellation).}\\
  The trainee pool mixes equal numbers of “fast-growers’’ and “slow-growers’’ whose opposite conditional trends around $Y_{-s}$ exactly cancel in the aggregate.  
  Unconditional trends match \big(DID holds\big), but once we condition on $Y_{-s}$ the opposing slopes re-emerge and DIDM breaks down. See Appendix \ref{sec:nonnested4} for a formal mathematical exposition illustrating this case.
\end{itemize}

\noindent\textbf{Implication for Practice:}
These examples underscore that each assumption addresses a different facet of the DGP. Practitioners must \emph{ex ante} commit to one of these mutually non-nested assumptions based on the context and the nature of available data. Whether one relies on matching (Condition M), unconditional parallel trends (DID), or conditional parallel trends (DIDM) is not a matter of nested robustness but of fundamentally different identifying restrictions, each carrying its own trade-offs for causal inference.
Simply reporting the M, DID, and DIDM estimates side by side and informally declaring “robustness’’ is therefore not a coherent decision rule. In Section~\ref{sec:minimax}, we instead formulate a formal design problem and show that, under additional structure, DIDM is minimax-regret optimal among these three estimands.

\subsection{A Formal Illustration}\label{sec:detailed_calc_nonnest}

\subsubsection{Condition M Holds but Condition DIDM Fails}\label{sec:nonnested1}

Consider the following data-generating process (DGP) for the potential outcomes:
$$
Y_1(0)=\eta_1(0),
\qquad
Y_1(1)=\tau+\eta_1(1),
\qquad
Y_0(0)=\mu W+\eta_0,
$$
where \(\mu\neq 0\), and
$
(\eta_1(0),\eta_1(1),\eta_0) \indep W \mid Y_{-s}.
$
Under this DGP,
$
(Y_1(1),Y_1(0))\indep W\mid Y_{-s},
$
holds.
On the other hand,
$
\E[Y_1(0)-Y_0(0)\mid Y_{-s},W=1]
=
\E[\eta_1(0)-\eta_0\mid Y_{-s},W=1]-\mu
=
\E[Y_1(0)-Y_0(0)\mid Y_{-s},W=0]-\mu
\neq
\E[Y_1(0)-Y_0(0)\mid Y_{-s},W=0].
$
In other words, Condition M holds, but Condition DIDM fails.

\subsubsection{Condition DIDM Holds but Condition M Fails}\label{sec:nonnested2}

Conversely, consider the following DGP for the potential outcomes:
\[
Y_0(0)=g(Y_{-s})+\mu W+\eta_0,
\qquad
Y_1(0)=g(Y_{-s})+\mu W+\delta(Y_{-s})+\eta_1,
\]
where \(\mu\neq 0\) and
$
E[\eta_0\mid Y_{-s},W]=E[\eta_1\mid Y_{-s},W]=0.
$
Under this DGP, Condition DIDM holds since
$
E[Y_1(0)-Y_0(0)\mid Y_{-s},W=1]
=
\delta(Y_{-s})
=
E[Y_1(0)-Y_0(0)\mid Y_{-s},W=0].
$
On the other hand,
$
E[Y_1(0)\mid Y_{-s},W=1]
=
g(Y_{-s})+\mu+\delta(Y_{-s})
=
E[Y_1(0)\mid Y_{-s},W=0] + \mu
\neq
E[Y_1(0)\mid Y_{-s},W=0],
$
implying
$
(Y_1(1),Y_1(0))\not\indep W\mid Y_{-s}.
$
Hence, Condition M fails.

\subsubsection{Condition DIDM Holds but Condition DID Fails}\label{sec:nonnested3}

Consider the following DGP for untreated potential outcomes:
\begin{align*}
Y_0(0) &= \mu(Y_{-s}) + \eta_0(0), & Y_1(0) &= \mu(Y_{-s}) + \eta_1(0),
\end{align*}
where $E[\eta_1(0)-\eta_0(0)|Y_{-s},W]=Y_{-s}$ and $E[Y_{-s}|W=0] \neq E[Y_{-s}|W=1]$.
Under this DGP, we have 
$E[Y_1(0)-Y_0(0)|Y_{-s},W=0]=Y_{-s}=E[Y_1(0)-Y_0(0)|Y_{-s},W=1]$, but
$
E[Y_1(0)-Y_0(0)|W=0] = E[E[\eta_1(0)-\eta_0(0)|Y_{-s},W=0]|W=0] = E[Y_{-s}|W=0]
\neq
E[Y_{-s}|W=1] = E[E[\eta_1(0)-\eta_0(0)|Y_{-s},W=1]|W=1] = E[Y_1(0)-Y_0(0)|W=1].
$
In other words, Condition DIDM holds, but Condition DID fails.   Intuitively, even if the parallel trends hold at every level of \(Y_{-s}\), differences in the distribution of \(Y_{-s}\) between treatment groups can lead to unequal unconditional trends.

\subsubsection{Condition DID Holds but Condition DIDM Fails}\label{sec:nonnested4}

Conversely, consider the following DGP for untreated potential outcomes:
\begin{align*}
Y_0(0) &= \mu(Y_{-s}) + \eta_0(0), & Y_1(0) &= \mu(Y_{-s}) + \eta_1(0),
\end{align*}
where $E[\eta_1(0)-\eta_0(0)|Y_{-s},W] = (2W-1)Y_{-s}$ and $E[Y_{-s}|W=0]+E[Y_{-s}|W=1]=0$.
Under this DGP,
$
E[Y_1(0)-Y_0(0)|W=0] = E[E[Y_1(0)-Y_0(0)|Y_{-s},W=0]|W=0] = E[-Y_{-s}|W=0]
=
E[Y_{-s}|W=1] = E[E[Y_1(0)-Y_0(0)|Y_{-s},W=1]|W=1] = E[Y_1(0)-Y_0(0)|W=1],
$
but
$E[Y_1(0)-Y_0(0)|Y_{-s},W=0] = -Y_{-s} \neq Y_{-s} = E[Y_1(0)-Y_0(0)|Y_{-s},W=1].$
In other words, Condition DID holds but Condition DIDM fails. Intuitively, even if the overall (unconditional) trends are equal (due to cancellation when aggregating over \(Y_{-s}\)), the trends might differ for each subpopulation defined by \(Y_{-s}\).

\bigskip

\section{Extension to General Cases}\label{sec:general}

The double-bracketing result developed in Section~\ref{sec:double_bracketing} extends to a more general class of data-generating processes and estimands. This section presents such an extension, with recent event-study designs as leading examples.

\subsection{Setup}

The previous notations do \textit{not} carry over to the current section. 
Suppose that a researcher is interested in identifying the average treatment effect on the treated (ATT) defined by 
\begin{align}
\att = E\left[\left. \widetilde Y_1(1) - \widetilde Y_1(0) \right|W=1\right].
\label{eq:general:att}
\end{align}
At this moment, we have not introduced the specific meanings of the notations.
They will be discussed in the contexts of specific examples in Section \ref{sec:examples}.
With this said, we want to remark that they parallel with those notations introduced in Section \ref{sec:simple}.
Unlike the previous section, however, the subscripts no longer indicate the time in general.

Similarly to the previous section, we define the alternative estimands
\begin{align}
\attm &= E\left[\left. \widetilde Y_1 \right| W=1\right] - E\left[\left. E\left[\left. \widetilde Y_1 \right|W=0, X \right] \right| W=1\right],
\label{eq:general:m}
\\
\attdid &= E\left[\left. \widetilde Y_1 - \widetilde Y_0 \right| W= 1\right] - E\left[\left. \widetilde Y_1 - \widetilde Y_0 \right| W=0\right],
\qquad\text{and}
\label{eq:general:did}
\\
\attdidm &= E\left[\left. E\left[\left. \widetilde Y_1 - \widetilde Y_0 \right| X, W=1\right] - E\left[\left. \widetilde Y_1 - \widetilde Y_0 \right| X, W=0\right] \right|W=1\right],
\label{eq:general:didm}
\end{align}
called the matching (M), the difference-in-differences (DID), and the difference-in-differences matching (DIDM), respectively.
The $p$-dimensional random vector $X$ is now used as a matching criterion.

The following conditions are imposed:
\begin{align}\label{eq:pretreatment}
\text{Pre-Treatment:}&&    \widetilde Y_0(0)=&\widetilde Y_0 \text{ a.s. given $W \in \{0,1\}$.}
\\
\label{eq:comparison}
\text{Comparison:}&&    \widetilde Y_1(0)=&\widetilde Y_1 \text{ a.s. given } W=0.
\end{align}
Condition \eqref{eq:pretreatment} requires that observed outcomes be the potential outcome without treatment for every unit prior to treatment.
Condition \eqref{eq:comparison} requires that the observed outcome be the potential outcome without treatment for the control group.

\subsection{Examples of the M, DID, and DIDM Estimands in Event Studies}\label{sec:examples}

In this section, we demonstrate that our general framework \eqref{eq:general:att}--\eqref{eq:comparison} encompasses alternative estimands studied in the literature of event studies as examples.
\subsubsection{Example 1: M in Event Studies}\label{sec:event_m}

\citet{acemoglu2019democracy} consider the ATT
\begin{equation}\label{eq:acemoglu:att}
E\left[ Y_t^s(1) - Y_t^s(0) | D_t=1, D_{t-1}=0\right],
\end{equation}
where $D_t$ denotes the indicator of democracy, $Y_t^s(1)$ denotes the potential GDP in period $t+s$ when a country is treated between periods $t-1$ and $t$ (i.e., $D_t=1$ and $D_{t-1}=0$), and $Y_t^s(0)$ denotes the potential GDP in period $t+s$ when such a treatment does not occur (i.e., $D_t=D_{t-1}=0$).\footnote{The original paper by \citet{acemoglu2019democracy} considers
$
E\left[ (Y_t^s(1)-Y_{t-1}) - (Y_t^s(0)-Y_{t-1}) | D_t=1, D_{t-1}=0\right]
$
as the parameter of interest, 
where $Y_{t-1}$ denotes the realized GDP in period $t-1$,
but this is equivalent to \eqref{eq:acemoglu:att}.}
\citet{acemoglu2019democracy} identify this ATT by
\begin{equation}\label{eq:acemoglu:estimand_pre}
E\left[\left. Y_t^s-Y_{t-1} \right| D_t=1, D_{t-1}=0 \right]
-
E\left[\left. E\left[\left. Y_t^s-Y_{t-1} \right| D_t=0, D_{t-1}=0, X \right] \right| D_t=1, D_{t-1} = 0 \right],
\end{equation}
where
$Y_t^s$ denotes the observed GDP in period $t+s$,
$Y_{t-1}$ denotes the observed GDP in period $t-1$, and
$X := (Y_{t-1},\ldots,Y_{t-4})'$
in their baseline model with additional covariates in extended robustness analyses.

Since $X$ contains $Y_{t-1}$ in particular, the conditioning theorem\footnote{Specifically, the conditioning theorem yields $E[Y_{t-1}|D_t=0,D_{t-1}=0,X]=Y_{t-1}$ when $X$ contains $Y_{t-1}$.} cancels $Y_{t-1}$ between the two terms in \eqref{eq:acemoglu:estimand_pre}, so the identifying formula \eqref{eq:acemoglu:estimand_pre} of \citet{acemoglu2019democracy} boils down to
\begin{equation}\label{eq:acemoglu:estimand}
E\left[\left. Y_t^s \right| D_t=1, D_{t-1}=0 \right]
-
E\left[\left. E\left[\left. Y_t^s \right| D_t=0, D_{t-1}=0, X \right] \right| D_t=1, D_{t-1} = 0 \right].
\end{equation}

Our general framework encompasses this example.
Specifically, our ATT \eqref{eq:general:att} reduces to \eqref{eq:acemoglu:att} and our M estimand \eqref{eq:general:m} reduces to \eqref{eq:acemoglu:estimand} by setting
\begin{align*}
\widetilde Y_0(0) :=& Y_{t-1},
&
\widetilde Y_0(1) :=& Y_{t-1},
&
\widetilde Y_0 :=& Y_{t-1},
\\
\widetilde Y_1(0) :=& Y_t^s(0),
&
\widetilde Y_1(1) :=& Y_t^s(1),
&
\widetilde Y_1 :=& Y_t^s,
\end{align*}
\begin{align*}
\text{and} \qquad
W := 
\begin{cases}
1 & \text{if } D_t=1 \text{ and } D_{t-1}=0
\\
0 & \text{if } D_t=0 \text{ and } D_{t-1}=0
\\
-1 & \text{otherwise}
\end{cases}
\end{align*}
The pre-treatment condition \eqref{eq:pretreatment} is satisfied by construction, as $\widetilde Y_0(0) = Y_{t-1} = \widetilde Y_0$.
The comparison condition \eqref{eq:comparison} is also satisfied by construction via the definition of $Y_t^s(0)$ as the potential outcome under $D_t=D_{t-1}=0$.
Namely, $\widetilde Y_1(0) = Y_t^s(0) = Y_t^s = \widetilde Y_1$ holds given $D_t=D_{t-1}=0$.

\subsubsection{Example 2: DID in Event Studies}\label{sec:event_did}
\citet{callaway2018difference} consider the ATT
\begin{equation}\label{eq:callaway:att}
E\left[\left. Y_t(g) - Y_t(\infty) \right| G=g \right],
\end{equation}
where $G$ denotes the treatment period, $Y_t(g)$ denotes the potential outcome at period $t \geq g$ when an individual is treated at period $g$, and $Y_t(\infty)$ denotes the potential outcome at period $t$ when an individual does not receive a treatment.
\citet{callaway2018difference} identify this ATT by
\begin{equation}\label{eq:callaway:estimand}
E\left[\left.Y_t - Y_{g-1} \right| G=g \right]
-
E\left[\left.Y_t - Y_{g-1} \right| G=g' \right]
\end{equation}
for $g' \geq t+1$,
where $Y_t$ denotes the observed outcome at period $t$.
(The second term of \eqref{eq:callaway:estimand} may be aggregated over $G' \in \{t+1,t+2,\ldots\}$.)

Our general framework encompasses this example.
Specifically, our ATT \eqref{eq:general:att} reduces to \eqref{eq:callaway:att} and our DID estimand \eqref{eq:general:did} reduces to \eqref{eq:callaway:estimand} by setting
\begin{align*}
\widetilde Y_0(0) :=& Y_{g-1}(\infty),
&
\widetilde Y_0(1) :=& Y_{g-1}(g),
&
\widetilde Y_0 :=& Y_{g-1},
\\
\widetilde Y_1(0) :=& Y_t(\infty),
&
\widetilde Y_1(1) :=& Y_t(g),
&
\widetilde Y_1 :=& Y_t,
\end{align*}
\begin{align*}
\text{and} \qquad
W := 
\begin{cases}
1 & \text{if } G=g
\\
0 & \text{if } G=g'
\\
-1 & \text{otherwise}
\end{cases}
\end{align*}
The pre-treatment condition \eqref{eq:pretreatment} is satisfied by construction, as $\widetilde Y_0(0) = Y_{g-1}(\infty) = Y_{g-1} = \widetilde Y_0$ given $G=g$ or $G=g' \geq t+1 > g$.
The comparison condition \eqref{eq:comparison} is also satisfied by construction, as $\widetilde Y_1(0) = Y_t(\infty) = Y_t = \widetilde Y_1$ given $G=g' \geq t+1$.

\subsubsection{Example 3: DIDM in Event Studies}\label{sec:event_didm}
\citet[][Section 4.1]{dube2023local} consider the ATT
\begin{equation}\label{eq:dube:att}
E\left[ Y_{t+h}(1) - Y_{t+h}(0) | \Delta D_t=1\right],
\end{equation}
where $\Delta D_t$ denotes the indicator of policy change, $Y_{t+h}(1)$ denotes the potential outcome in period $t+h$ when a policy changes between periods $t-1$ and $t$ (i.e., $\Delta _t=1$), and $\Delta Y_{t+h}(0)$ denotes the potential outcome in period $t+h$ when such a change does not occur (i.e., $\Delta D_t=0$).
\citet[][Section 4.1]{dube2023local} identify this ATT by
\begin{equation}\label{eq:dube:estimand}
E\left[\left. Y_{t+h}-Y_{t-1} \right| \Delta D_t=1 \right]
-
E\left[\left. E\left[\left. Y_{t+h}-Y_{t-1} \right| \Delta D_t=0, X \right] \right| \Delta D_t=1 \right],
\end{equation}
where
$Y_{t+h}$ denotes the observed outcome in period $t+h$,
$Y_{t-1}$ denotes the observed outcome in period $t-1$, and
$X$ is a vector of general covariates.

Our general framework encompasses this example.
Specifically, our ATT \eqref{eq:general:att} reduces to \eqref{eq:dube:att} and our DIDM estimand \eqref{eq:general:didm} reduces to \eqref{eq:dube:estimand} by setting
\begin{align*}
\widetilde Y_0(0) :=& Y_{t-1},
&
\widetilde Y_0(1) :=& Y_{t-1},
&
\widetilde Y_0 :=& Y_{t-1},
\\
\widetilde Y_1(0) :=& Y_{t+h}(0),
&
\widetilde Y_1(1) :=& Y_{t+h}(1),
&
\widetilde Y_1 :=& Y_{t+h},
\end{align*}
\begin{align*}
\text{and} \qquad
W := 
\begin{cases}
1 & \text{if } \Delta D_t=1
\\
0 & \text{if } \Delta D_t=0
\\
-1 & \text{otherwise}
\end{cases}
\end{align*}
The pre-treatment condition \eqref{eq:pretreatment} is satisfied by construction, as $\widetilde Y_0(0) = Y_{t-1} = \widetilde Y_0$.
The comparison condition \eqref{eq:comparison} is also satisfied by construction via the definition of $Y_{t+h}(0)$ as the potential outcome under $\Delta D_t=0$.
Namely, $\widetilde Y_1(0) = Y_{t+h}(0) = Y_{t+h} = \widetilde Y_1$ holds given $\Delta D_t=0$.

Also see the DID${}_\text{M}$ estimator of \citet{deChaisemartin2020two}, and the (panel) matching estimator of \citet{imai2023matching}, as well as the extended DID method of \citet[][Section 4.1]{dube2023local} -- they all propose and analyze the properties of what we refer to as the DIDM.

As pointed out by \citet{dube2023local}, their framework encompasses \citet{acemoglu2019democracy} as a special case. Indeed, when $X$ contains $\widetilde Y_0$, as is the case with \citet{acemoglu2019democracy} presented in Section \ref{sec:event_m}, our DIDM framework reduces to our M framework.
In general, however, the DIDM differs from the M.

\subsubsection{Summary and Discussions of the Three Examples}
Albeit there are slight differences in their notations, the three examples presented above focus on similar setups.
They fundamentally differ only in terms of the estimands: the three examples focus on the M, DID, and DIDM estimands in our language.
In general, a researcher does not know which of them achieves the identification.
The M estimand identifies the true ATT (i.e, $\attm = \att$ holds) if the matching condition
\begin{align*}
\text{Condition M:} \ \ \ \left.\left(\widetilde Y_1(1), \widetilde Y_1(0) \right) \indep W \right| X
\end{align*}
is satisfied.
The DID estimand identifies the true ATT (i.e, $\attdid = \att$ holds) if the parallel trend condition
\begin{align*}
\text{Condition DID:} \ \ \
\E\left[\left.\widetilde Y_1(0)-\widetilde Y_0(0)\right|W=0\right]
=
\E\left[\left.\widetilde Y_1(0)-\widetilde Y_0(0)\right|W=1\right]
\end{align*}
is satisfied.
Finally, the DIDM estimand identifies the true ATT (i.e, $\attdidm = \att$ holds) if the conditional parallel trend condition
\begin{align*}
\text{Condition DIDM:} \ \ \
\E\left[\left.\widetilde Y_1(0)-\widetilde Y_0(0)\right|X,W=0\right]
=
\E\left[\left.\widetilde Y_1(0)-\widetilde Y_0(0)\right|X,W=1\right]
\end{align*}
is satisfied.
In the absence of knowledge of the underlying data-generating process, however, committing to a wrong assumption can lead to biased estimates by M, DID, or DIDM.
It is therefore of interest to characterize the relation among the three estimands.
The following subsection investigates this point.

\subsection{The General Double Bracketing Result}
Now, focus on the generic framework \eqref{eq:general:att}--\eqref{eq:comparison} again.
Let
\begin{align*}
\Delta(\attm) =& \attm-\att,
\\
\Delta(\attdid) =& \attdid-\att, \qquad\text{and}
\\
\Delta(\attdidm) =& \attdidm-\att
\end{align*}
be the identification errors of the estimands, $\attm$, $\attdid$, and $\attdidm$, respectively.
We establish the double bracketing relation 
$\Delta(\attm) \leq \Delta(\attdidm) \leq \Delta(\attdid)$ under the following assumption.

\begin{assumption}\label{a:general}The following conditions hold.
\begin{enumerate}[(i)]
\item\label{a:general:selection}
$E\left[\left.\widetilde Y_0 \right|W=0,X=x\right] \geq E\left[\left. \widetilde Y_0 \right|W=1,X=x\right]$ for all $x$.
\item \label{a:general:sd}
$F_{X|W=0}$
multivariate first-order stochastically dominates
$F_{X|W=1}$.\footnote{We say that $X$ multivariate first-order stochastically dominates $X^\ast$ if $E[f(X)] \geq E[f(X^\ast)]$ for every bounded, coordinatewise nondecreasing function $f$; see \citet[\S6.B]{shaked2007stochastic}.}
\item\label{a:general:dec}
$x \mapsto \Phi(x) := E\left[\left.\widetilde Y_1-\widetilde Y_0\right|W=0,X=x\right]$ is weakly decreasing.\footnote{We say that $\Phi$ is weakly decreasing if $\Phi(x_1,\ldots,x_p) \geq \Phi(x^\ast_1,\ldots,x^\ast_p)$ holds whenever $x_1 \leq x^\ast_1$, $\ldots$ and, $x_p \leq x^\ast_p$.}
\end{enumerate}
\end{assumption}
The three parts \eqref{a:general:selection}--\eqref{a:general:dec} of this assumption parallel those in Assumption \ref{a:special}, albeit that $X$ is now possibly multi-dimensional.
Hence, similar interpretations can be made especially when $X$ consists of lagged outcomes as in the first example \textit{a la} \citet{acemoglu2019democracy} presented in Section \ref{sec:event_m}.
Such a convenient interpretation may not be feasible if $X$ contains other covariates, but we want to stress that each of the three conditions \eqref{a:general:selection}--\eqref{a:general:dec} of this assumption is still empirically testable.

The following theorem states the extended double bracketing result for the general cases.

\begin{theorem}\label{theorem:general}
Suppose that Assumption \ref{a:general} holds for \eqref{eq:general:att}--\eqref{eq:comparison}.
Then, we have the bracketing relationship
\begin{align*}
\Delta(\attm) \leq \Delta(\attdidm) \leq \Delta(\attdid).
\end{align*}
\end{theorem}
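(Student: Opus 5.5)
The plan mirrors the proof of Proposition~\ref{lem:double_bracketing}, with $(\widetilde Y_0,\widetilde Y_1,X)$ in place of $(Y_0,Y_1,Y_{-s})$. The scalar monotonicity step is replaced by its multivariate counterpart.

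\textbf{Step 1: write the three identification errors in terms of untreated potential outcomes.} Write $\widetilde Y_1 = W\widetilde Y_1(1)+(1-W)\widetilde Y_1(0)$ on $W\in\{0,1\}$. The pre-treatment condition \eqref{eq:pretreatment} gives $\widetilde Y_0=\widetilde Y_0(0)$, and the comparison condition \eqref{eq:comparison} gives $\widetilde Y_1=\widetilde Y_1(0)$ given $W=0$. Subtracting $\att$ in \eqref{eq:general:att} from \eqref{eq:general:m}--\eqref{eq:general:didm} then yields
\begin{align*}
\Delta(\attm) &= E\big[E[\widetilde Y_1(0)\mid W=1,X]-E[\widetilde Y_1(0)\mid W=0,X]\,\big|\,W=1\big],\\
\Delta(\attdid) &= E[\widetilde Y_1(0)-\widetilde Y_0(0)\mid W=1]-E[\widetilde Y_1(0)-\widetilde Y_0(0)\mid W=0],\\
\Delta(\attdidm) &= E\big[E[\widetilde Y_1(0)-\widetilde Y_0(0)\mid W=1,X]-E[\widetilde Y_1(0)-\widetilde Y_0(0)\mid W=0,X]\,\big|\,W=1\big].
\end{align*}
Each uses the law of iterated expectations to rewrite $E[\widetilde Y_1(1)\mid W=1]$ as an iterated conditional expectation given $X$, and these terms cancel against $\att$.

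\textbf{Step 2: the first inequality.} Subtract the expression for $\Delta(\attm)$ from that for $\Delta(\attdidm)$. The $\widetilde Y_1(0)$ terms cancel and leave
\[
\Delta(\attdidm)-\Delta(\attm)=E\big[E[\widetilde Y_0\mid W=0,X]-E[\widetilde Y_0\mid W=1,X]\,\big|\,W=1\big],
\]
where $\widetilde Y_0(0)$ has been replaced by $\widetilde Y_0$ using \eqref{eq:pretreatment}. The integrand is nonnegative pointwise in $x$ by Assumption~\ref{a:general}\eqref{a:general:selection}, so the difference is nonnegative.

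\textbf{Step 3: the second inequality.} Subtract the expression for $\Delta(\attdidm)$ from that for $\Delta(\attdid)$. The treated-group terms $E[\cdot\mid W=1]$ cancel by iterated expectations. Next, write the control-group unconditional mean as $E[\,E[\cdot\mid W=0,X]\mid W=0]$, and use \eqref{eq:pretreatment}--\eqref{eq:comparison} to replace potential outcomes by observed ones given $W=0$. This gives
\[
\Delta(\attdid)-\Delta(\attdidm)=E[\Phi(X)\mid W=1]-E[\Phi(X)\mid W=0].
\]

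The main obstacle is signing this last difference in the multivariate case. By Assumption~\ref{a:general}\eqref{a:general:dec}, $-\Phi$ is coordinatewise nondecreasing. The footnoted definition of multivariate FOSD in Assumption~\ref{a:general}\eqref{a:general:sd} then gives $E[-\Phi(X)\mid W=0]\ge E[-\Phi(X)\mid W=1]$, which is the required inequality.

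That definition, however, is stated only for \emph{bounded} monotone functions. I would therefore first apply it to the truncations $\max\{-M,\min\{M,-\Phi\}\}$, which remain coordinatewise nondecreasing and are bounded. I would then let $M\to\infty$ by dominated (or monotone) convergence, assuming $\Phi(X)$ is integrable under both conditional laws, which is implicit in the estimands being finite.

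Chaining the inequalities from Steps 2 and 3 gives $\Delta(\attm)\le\Delta(\attdidm)\le\Delta(\attdid)$.
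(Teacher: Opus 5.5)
Your proposal is correct and follows the paper's proof essentially step for step. It uses the same rewriting of the three identification errors in terms of untreated potential outcomes. The first gap is left as the same conditional $\widetilde Y_0$ term, signed by Assumption~\ref{a:general}\eqref{a:general:selection}, and the second gap is reduced to $E[\Phi(X)\mid W=1]-E[\Phi(X)\mid W=0]$, signed by parts \eqref{a:general:sd}--\eqref{a:general:dec}.

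The only additions are rigor refinements that the paper leaves implicit. Your truncation-plus-dominated-convergence step extends the footnoted FOSD definition from bounded to integrable monotone functions. You also state explicitly that $\widetilde Y_1=\widetilde Y_1(1)$ for treated units, which the paper's first display needs but does not list alongside \eqref{eq:pretreatment}--\eqref{eq:comparison}.
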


\begin{proof}[Proof of Theorem \ref{theorem:general}  ]
Note that the identification errors can be written as
\begin{align*}
\Delta\left(\attm\right)= & E\left[E\left[\widetilde Y_1(0) \mid W=1, X\right]-E\left[\widetilde Y_1(0) \mid W=0, X\right] \mid W=1\right],
\\
\Delta\left(\attdid\right)= & E\left[\widetilde Y_1(0)-\widetilde Y_0(0) \mid W=1\right]-E\left[\widetilde Y_1(0)-\widetilde Y_0(0) \mid W=0\right],
\qquad\text{and}
\\
\Delta\left(\attdidm\right)= & E\left[E\left[\widetilde Y_1(0)-\widetilde Y_0(0) \mid W=1, X\right]-E\left[\widetilde Y_1(0)-\widetilde Y_0(0) \mid W=0, X\right] \mid W=1\right].
\end{align*}

First, observe that
\begin{align*}
\Delta(\attdidm) - \Delta(\attm)
=&
E\left[E\left[\widetilde Y_0(0)|W=0,X\right] 
- E\left[\widetilde Y_0(0)|W=1,X\right]|W=1\right]
\\
=&
E\left[E\left[\widetilde Y_0|W=0,X\right] 
- E\left[\widetilde Y_0|W=1,X\right]|W=1\right]
\geq 0
\end{align*}
where 
the second equality is due to \eqref{eq:pretreatment}, and
the last inequality follows from Assumption \ref{a:general} \eqref{a:general:selection}.

Second, observe that
\begin{align*}
&\Delta(\attdid) - \Delta(\attdidm)
\\
=&
E\left[E\left[\widetilde Y_1(0)-\widetilde Y_0(0)|W=0,X\right] | W=1\right]
-
E\left[\widetilde Y_1(0)-\widetilde Y_0(0)|W=0\right]
\\
=&
E\left[E\left[\widetilde Y_1(0)-\widetilde Y_0(0)|W=0,X\right] | W=1\right]
-
E\left[E\left[\widetilde Y_1(0)-\widetilde Y_0(0)|W=0,X\right] | W=0\right]
\\
=&
E\left[E\left[\widetilde Y_1-\widetilde Y_0|W=0,X\right] | W=1\right]
-
E\left[E\left[\widetilde Y_1-\widetilde Y_0|W=0,X\right] | W=0\right]
\\
=& 
E\left[\Phi(X) | W=1\right]
-
E\left[\Phi(X) | W=0\right]
\geq 0,
\end{align*}
where 
the second equality follows from the law of iterated expectations,
the third equality is due to \eqref{eq:pretreatment}--\eqref{eq:comparison}, and
the last inequality follows from Assumption \ref{a:general} \eqref{a:general:sd}--\eqref{a:general:dec}.
\end{proof}

\section{Details of the Calibrated Simulation}
\label{app:common_rf_details}

This appendix gives the details behind the calibration in
Section~\ref{sec:simulation_common_rf}. The five-parameter DGP is held fixed across the
three worlds, which differ only by the zero restriction defining each world. The data enter
through the common nonzero coefficients and the subgroup treatment-effect schedules. The
design should be read as a stylized calibration to the NSW+CPS and NSW+PSID moments.

\subsection{Calibration Moments and Objective}

The calibration matches a vector of reduced-form moments. As in the main text, the
lagged-outcome index is
\[
L=2\cdot\mathbf{1}\{Y_{-s}>\operatorname{med}(Y_{-s})\}-1\in\{-1,1\},
\]
a balanced split at the sample median, and the $M$ and DIDM estimators condition on this
same index in the simulation. For each world $k$, we recode the real data using the
subgroup treatment-effect schedule $\widehat\tau_k(g)$, removing the treatment effect from
the observed post-period outcome to obtain an estimated untreated counterfactual:
\[
\widehat Y_{1,k}(0)=Y_1-W\widehat\tau_k(g),
\qquad
\widehat\Delta_k(0)=\widehat Y_{1,k}(0)-Y_0,
\]
where $\widehat\Delta_k(0)$ is the corresponding estimated untreated trend.

The calibration targets the moments in Table~\ref{tab:moment_calibration_moments}, which map
directly into the five structural coefficients. The scale anchors pin down the magnitudes of
$p$, $q$, and $m$; the treatment-share and treated--control gap moments pin down the
selection parameters $a$ and $c$, together with the interaction between selection and the
hidden outcome channels.

\begin{table}[H]
\centering
\scriptsize
\caption{Calibration Moments}
\label{tab:moment_calibration_moments}
\begin{threeparttable}
\begin{tabular}{p{0.18\textwidth}p{0.34\textwidth}p{0.38\textwidth}}
\toprule
Group & Moments & Description \\
\midrule
Scale anchors & $p_k^*,q_k^*,m_k^*$ repeated five times & Direct anchors for the trend-in-$L$, hidden-trend, and hidden-level coefficients \\
Treatment shares & $\Pr(W=1\mid L=1)$, $\Pr(W=1\mid L=-1)$ & Marginal treatment selection by lagged-outcome index \\
Level gaps & $E[\widehat Y_{1,k}(0)\mid W=1,L=\ell]-E[\widehat Y_{1,k}(0)\mid W=0,L=\ell]$ & Hidden post-period level imbalance, for $\ell\in\{-1,1\}$ \\
Trend gaps & $E[\widehat\Delta_k(0)\mid W=1,L=\ell]-E[\widehat\Delta_k(0)\mid W=0,L=\ell]$ & Hidden untreated trend imbalance, for $\ell\in\{-1,1\}$ \\
\bottomrule
\end{tabular}
\begin{tablenotes}[flushleft]
\footnotesize
\item The scale anchors are
$p_k^*=\operatorname{sign}\{E[\widehat\Delta_k(0)\mid L=1]-E[\widehat\Delta_k(0)\mid L=-1]\}\cdot
\operatorname{sd}\{\widehat\Delta_k(0)\}$,
$q_k^*=\operatorname{sd}\{\widehat\Delta_k(0)\}$, and
$m_k^*=\operatorname{sd}\{\widehat Y_{1,k}(0)\}$, where $\operatorname{sd}\{\cdot\}$ denotes
the sample standard deviation. Repeating the three scale anchors five times each prevents the
calibration from shrinking the economically important hidden-level and hidden-trend channels
too aggressively.
\end{tablenotes}
\end{threeparttable}
\end{table}

For a candidate $\theta$, the implied treatment-share moments are
\[
\pi_\ell(a,c)
=
\tfrac{1}{2}\Lambda(a\ell+c\ell)
+\tfrac{1}{2}\Lambda(a\ell-c\ell),
\qquad \ell\in\{-1,1\},
\]
and, writing
\[
s_\ell(a,c)
=
E[U\mid W=1,L=\ell]
-
E[U\mid W=0,L=\ell]
\]
for the confounder imbalance within each $L$ cell, the implied level and trend gaps are
$m\,s_\ell(a,c)$ and $q\,s_\ell(a,c)$, respectively. The implied calibration vector contains
$(p,q,m)$ repeated five times, followed by $\pi_1,\pi_{-1}$, then $m s_1, m s_{-1}$, and
$q s_1, q s_{-1}$.

For each comparison sample, the calibration chooses a common $\theta=(a,c,p,q,m)$ and
evaluates the three zero-restricted vectors $\theta_M$, $\theta_{DID}$, and $\theta_{DIDM}$.
Let $\widehat\gamma_k$ be the target vector for world $k$ and $\gamma(\theta_k)$ the implied
vector. We minimize the scale-normalized objective
\[
\widehat Q(\theta)
=
\frac{1}{3}
\sum_{k\in\{M,DID,DIDM\}}
\sum_j
\left(
\frac{\gamma_j(\theta_k)-\widehat\gamma_{k,j}}
{\max\{|\widehat\gamma_{k,j}|,1\}}
\right)^2
+\mathcal P(\theta),
\]
where the denominator $\max\{|\widehat\gamma_{k,j}|,1\}$ normalizes each moment by its own
scale, and $\mathcal P(\theta)$ is a penalty that rules out degenerate worlds by keeping each
non-target identifying restriction away from zero while maintaining overlap in treatment
assignment. The calibration then imposes the world-specific zero restrictions shown in
Table~\ref{tab:moment_calibrated_coefficients}.

\subsection{Direct Verification of the Identifying Restrictions}

Because the simulation stores the untreated potential outcomes, we can check each
identifying restriction directly, rather than only through the moments used in calibration.
For each world we compute the standardized violation of each restriction:
\begin{align*}
\delta_M
=&
\frac{
\max_{\ell\in\{-1,1\}}
\bigl|
\E[Y_1(0)\mid W=1,L=\ell]
-
\E[Y_1(0)\mid W=0,L=\ell]
\bigr|
}{\operatorname{sd}\{Y_1(0)\}},
\\
\delta_{DID}
=&
\frac{
\bigl|
\E[\Delta(0)\mid W=1]
-
\E[\Delta(0)\mid W=0]
\bigr|
}{\operatorname{sd}\{\Delta(0)\}},
\qquad\text{and}
\\
\delta_{DIDM}
=&
\frac{
\max_{\ell\in\{-1,1\}}
\bigl|
\E[\Delta(0)\mid W=1,L=\ell]
-
\E[\Delta(0)\mid W=0,L=\ell]
\bigr|
}{\operatorname{sd}\{\Delta(0)\}}.
\end{align*}
Each $\delta$ measures, on a standard-deviation scale, how far the corresponding restriction
is from holding: $\delta_M$ is the conditional level imbalance that matching rules out,
$\delta_{DID}$ the unconditional trend imbalance that DID rules out, and $\delta_{DIDM}$ the
conditional trend imbalance that DIDM rules out. A world ``verifies'' its intended
restriction when the corresponding $\delta$ falls below the assumption-specific tolerance
reported in the table note. Table~\ref{tab:moment_assumption_verification} shows that, in
both comparison samples, each world satisfies exactly the restriction it is designed to
satisfy, and violates the other two.

\begin{table}[H]
\centering
\scriptsize
\caption{Direct Verification of the Identifying Restrictions}
\label{tab:moment_assumption_verification}
\begin{threeparttable}
\begin{tabular}{llcccc}
\toprule
Sample & World & $\delta_M$ & $\delta_{DID}$ & $\delta_{DIDM}$ & Verified restriction \\
\midrule
NSW + CPS  & $M$-world    & 0.000 & 0.225 & 0.253 & $M$ \\
NSW + CPS  & DID-world    & 0.466 & 0.004 & 0.281 & DID \\
NSW + CPS  & DIDM-world   & 0.424 & 0.222 & 0.011 & DIDM \\
\midrule
NSW + PSID & $M$-world    & 0.007 & 0.156 & 0.195 & $M$ \\
NSW + PSID & DID-world    & 0.409 & 0.006 & 0.201 & DID \\
NSW + PSID & DIDM-world   & 0.393 & 0.163 & 0.013 & DIDM \\
\bottomrule
\end{tabular}
\begin{tablenotes}[flushleft]
\footnotesize
\item A world verifies a restriction when its standardized gap falls below the
assumption-specific tolerance: $0.025$ for $M$, $0.020$ for DID, and $0.028$ for DIDM. In
each world, the gap for the intended restriction (in the diagonal cells) lies below
tolerance, while the gaps for the other two restrictions are an order of magnitude larger.
\end{tablenotes}
\end{threeparttable}
\end{table}

\subsection{Distinguishability Diagnostics}

The calibration creates three worlds that differ in their identifying restrictions while
remaining similar on reduced-form features a researcher might inspect before choosing a
design. Table~\ref{tab:moment_observational_similarity} reports several such features. The
lagged-outcome distribution, treatment share, baseline outcome level, and covariance between
the lagged outcome and $Y_0$ are close across worlds within each comparison sample.
Post-period levels differ more, as expected, since the worlds deliberately encode different
hidden level and trend channels.

\begin{table}[H]
\centering
\scriptsize
\caption{Low-Dimensional Reduced-Form Similarity across Simulated Worlds}
\label{tab:moment_observational_similarity}
\begin{threeparttable}
\begin{tabular}{llrrrrrr}
\toprule
Sample & World & $E[W]$ & $E[Y_{-s}]$ & $E[Y_0]$ & $E[Y_1]$ & $\Cov(Y_{-s},Y_0)$ & $\Cov(Y_{-s},Y_1)$ \\
\midrule
NSW + CPS  & $M$    & 0.510 & 13.522 & 13.147 & 12.368 & 79.671 & 87.559 \\
NSW + CPS  & DID    & 0.496 & 13.513 & 13.146 & 14.477 & 78.912 & 96.202 \\
NSW + CPS  & DIDM   & 0.505 & 13.680 & 13.263 & 14.293 & 79.709 & 97.887 \\
\midrule
NSW + PSID & $M$    & 0.498 & 14.825 & 14.528 & 12.713 & 128.171 & 134.789 \\
NSW + PSID & DID    & 0.498 & 14.989 & 14.682 & 15.429 & 130.352 & 150.720 \\
NSW + PSID & DIDM   & 0.506 & 15.169 & 14.810 & 15.289 & 129.373 & 148.277 \\
\bottomrule
\end{tabular}
\begin{tablenotes}[flushleft]
\footnotesize
\item Earnings are in thousands of dollars. The table reports reduced-form summaries an
applied researcher could inspect ex ante; it is only a low-dimensional check of similarity.
\end{tablenotes}
\end{threeparttable}
\end{table}

As a more demanding check, we train a held-out random-forest classifier to distinguish the
three simulated worlds from one another, since the classifier can combine many features
nonlinearly. Table~\ref{tab:moment_distinguishability} reports the resulting three-way
classification accuracy. If the worlds were indistinguishable from these features, accuracy
would be near the chance level of $1/3$; the observed accuracies, $0.367$ and $0.356$, are
close to that benchmark. The worlds therefore satisfy distinct identifying restrictions
while remaining hard to tell apart from the data a researcher would observe, which is what
makes the regret comparison in Section~\ref{sec:simulation_common_rf} the relevant one.

\begin{table}[H]
\centering
\scriptsize
\caption{Three-Way Distinguishability across Simulated Worlds}
\label{tab:moment_distinguishability}
\begin{threeparttable}
\begin{tabular}{lccc}
\toprule
Sample & Chance benchmark & 3-way accuracy & Classification error \\
\midrule
NSW + CPS  & 0.333 & 0.367 & 0.633 \\
NSW + PSID & 0.333 & 0.356 & 0.644 \\
\bottomrule
\end{tabular}
\begin{tablenotes}[flushleft]
\footnotesize
\item Chance accuracy for a three-way classification is $1/3$. Accuracy near this benchmark
indicates the three worlds are difficult to distinguish from the listed features before the
regret criterion is applied.
\end{tablenotes}
\end{threeparttable}
\end{table}

\end{document}